\documentclass[12pt,a4paper]{article}
\usepackage[margin=1in]{geometry}
\usepackage{setspace}
\usepackage[T1]{fontenc}
\usepackage[utf8]{inputenc}
\usepackage{amsmath,amssymb,amsthm,amsfonts,mathtools,bm}
\usepackage{graphicx,booktabs,multirow,array,longtable}
\usepackage{caption,subcaption,float}
\usepackage[table]{xcolor}
\usepackage{adjustbox}
\usepackage{pgfplots}
\pgfplotsset{compat=1.18}
\usepgfplotslibrary{dateplot}
\renewcommand{\thefootnote}{\fnsymbol{footnote}}
\definecolor{cKF}{RGB}{245,130,32}
\definecolor{cRLA}{RGB}{120,190,230}
\definecolor{cRS}{RGB}{140,200,120}
\definecolor{cRB}{RGB}{34,100,55}
\definecolor{cJN}{RGB}{128,60,170}
\definecolor{cCA}{RGB}{235,200,55}
\definecolor{cLCH}{RGB}{210,45,45}
\definecolor{cMPT}{HTML}{795548}
\usepackage[style=authoryear,natbib=true,backend=biber]{biblatex}
\usepackage[colorlinks=true,
            linkcolor=blue!60!black,
            citecolor=blue!60!black,
            urlcolor=blue!60!black]{hyperref}
\usepackage{cleveref}
\usepackage{tikz}
\usepackage{tikz-cd}

\newtheorem{proposition}{Proposition}
\newtheorem{lemma}{Lemma}

\theoremstyle{definition}
\newtheorem{definition}{Definition}
\newtheorem{assumption}{Assumption}
\newtheorem{hypothesis}{Hypothesis}
\theoremstyle{remark}
\newtheorem{remark}{Remark}
\newtheorem{example}{Example}

\newcommand{\Prob}{\mathbb{P}}
\newcommand{\R}{\mathbb{R}}

\newcommand{\indep}{\perp\!\!\!\perp}

\begin{document}

\title{\Large\bfseries Information and voting:\\[6pt]
       Evidence from Peru's 2026 presidential election}

\author{%
  Marcelo Gallardo\textsuperscript{*}
  \qquad
  Nicolas Velarde\textsuperscript{\dag}
  \qquad
  {Cristina Gutarra\textsuperscript{\ddag}}
}
\date{\today}

\maketitle
\thispagestyle{empty}

\renewcommand{\thefootnote}{\fnsymbol{footnote}}
\footnotetext[1]{Department of Mathematics, Pontificia Universidad Cat\'olica del Per\'u (PUCP); Department of Economics, University of California, Berkeley. Contact: \href{mailto:marcelogallardob21@berkeley.edu}{marcelogallardob21@berkeley.edu}.}
\footnotetext[2]{Department of Economics, Pontificia Universidad Cat\'olica del Per\'u (PUCP); Centro de Investigaci\'on de la Universidad del Pac\'ifico (CIUP). Contact: \href{mailto:na.velardef@up.edu.pe}{na.velardef@up.edu.pe}.}
\footnotetext[3]{Pontificia Universidad Cat\'olica del Per\'u (PUCP). Contact: \href{mailto:cristina.gutarra@pucp.edu.pe}{cristina.gutarra@pucp.edu.pe}.}

\renewcommand{\thefootnote}{\arabic{footnote}}
\setcounter{footnote}{0}

\begin{abstract}
\noindent We study how election-night flash estimates shape voting in
Peru's fragmented 2026 presidential election. We exploit a natural
experiment: on 12~April~2026, 187 polling tables across 13 voting
centers failed to install, and the \emph{Jurado Nacional de Elecciones}
(JNE) extended voting for the affected $\approx\!55 000$ electors to
Monday 13~April. These voters cast ballots after observing the Ipsos and
Datum flash estimates; otherwise comparable Sunday voters did not. A
Bayesian-updating model of multi-candidate plurality voting frames the
analysis, yielding predictions about vote reallocation toward the three
candidates the estimates rendered viable---L\'opez Aliaga, S\'anchez,
and Nieto. We estimate treatment effects on candidate vote shares at both the \emph{acta} level and the acta-weighted polling-station level, comparing treated and control \emph{locales de votaci\'on} matched on pre-treatment covariates. How flash estimates reshape voting is of first-order
importance for Peru, given its institutional instability and high
political volatility over the past decade.
\end{abstract}

\noindent\textbf{JEL Classifications}: D72, D82, D83, C21, C93

\noindent\textbf{Keywords}: information design, Bayesian updating,
strategic voting, exit polls, natural experiment, cardinality matching.

\noindent \textbf{AI disclosure}: We used Anthropic's Claude (Opus 4.7 and 4.8) to assist with prose editing, code verification, and LaTeX formatting, and Refine (refine.ink) to generate referee-style feedback on the manuscript. All substantive research decisions, analysis, and conclusions are the authors' own.

\noindent\small\textit{Preliminary draft. All errors are our own. This paper is strictly theoretical and empirical, and does not reflect the political views or preferences of the authors.}

\newpage
\setcounter{page}{1}
\section{Introduction}\label{sec:intro}

In multi-candidate plurality elections with weak party identification, late-arriving public signals about candidate viability can change voters' minds. Peru's 2026 presidential first round provides a near-ideal setting in which to identify these effects. The election was extraordinarily fragmented, with thirty-six candidates on the ballot. On election night, the two leading polling firms (Ipsos and Datum Internacional) released their \emph{flash electoral} estimates within minutes of one another, agreeing on Keiko Fujimori as the leading candidate but \emph{disagreeing} on the identity of the second-place finisher. Datum named Rafael L\'opez Aliaga; Ipsos named Roberto S\'anchez. A third candidate, Jorge Nieto, was within the statistical margin of error in both estimates.

A logistical failure delivered the natural experiment. On 12~April, 187~polling tables in 13~voting centers could not be installed because the contractor failed to deliver electoral materials on time. The \emph{Jurado Nacional de Elecciones} (JNE) extended voting for the affected $\approx\!55 000$ electors to Monday 13~April. These \emph{tomorrow} voters cast their ballots after observing the flash estimates and the partial \emph{actas} count released through the night; otherwise comparable Sunday voters did not.

This paper has three components. First, we present the political and institutional context that makes Peruvian voters unusually responsive to viability signals: a decade of executive--legislative collapse, a near-complete absence of stable party brands, and a documented voter logic of rejection rather than affiliation (the \emph{mal menor}). Second, we develop a Bayesian-updating model of voting in which two pollsters issue noisy public signals about the identity of the runoff entrant. The model delivers a tractable formulation in which each voter's choice probability is a multinomial logit in which the posterior viability of a candidate is interacted with the voter's expressive match to him. Because the instrumental pull of viability is scaled by expressive affinity, the flash reallocates votes toward candidates who are both viable and ideologically acceptable to the local electorate---so a nationally viable but expressively distant candidate gains where his support is concentrated and stays flat elsewhere. The candidates whose viability the flash estimates raise are L\'opez Aliaga, S\'anchez, and Nieto.
Third, the empirical section exploits the JNE ruling to identify the
treated \emph{locales de votaci\'on}, builds a new socioeconomic
dataset from INEI Redatam census data and 2021 results, and
estimates the average treatment effect on the treated by cardinality
matching \citep{ZubizarretaParedesRosenbaum2014} and a within-group regression estimator.

We find that the informational shock raised the vote share of the
candidates that the flash estimates rendered viable for the
runoff among Monday voters
relative to comparable Sunday locales, consistent with the
viability-driven strategic reallocation the model predicts.

Our central contribution is empirical: exploiting the JNE ruling as a
natural experiment, we identify the causal effect of overnight flash
estimates on Monday voters through cardinality matching at the
polling-place level (Peru's \emph{local de votaci\'on}) and a
within-group regression estimator, using a new dataset that
characterizes these polling places socioeconomically from block-level
(\emph{manzana}) INEI Redatam census data merged with ONPE
\emph{actas} for 2021 and 2026. Adapting the single- and
competing-sender frameworks of \citet{Kamenica2011} and
\citet{GentzkowKamenica2017} to a multi-candidate plurality election
with two public signals that may disagree, the information-design model
casts the treatment as a \citet{Blackwell1953} improvement in the
public experiment and thereby gives the matching estimator its
interpretation as the behavioral value of that improvement
(Section~\ref{subsec:bridge-id}).

Our framework abstracts from information channels beyond the two flash estimates, such as rapid counts, television, and social-media commentary on platforms like X or Instagram, which plausibly exert their own influence on voter beliefs. This is a deliberate scope restriction rather than a claim of irrelevance; extending the analysis to these channels is left for future work.

Section~\ref{sec:context} sets out the political, institutional, and literature context; Section~\ref{sec:evidence} documents the 2026 empirical environment and the natural experiment; Section~\ref{sec:model} develops the information-design model; Section~\ref{sec:hypotheses} states the predictions; Section~\ref{sec:method} the empirical strategy; and Sections~\ref{sec:results}--\ref{sec:conclusion} report results and conclude.

\section{Background and literature}\label{sec:context}
\subsection{Peruvian political crisis (2016--present)}

To understand the depth of the crisis, it is necessary to trace its
roots beyond 2016. Following the collapse of the Fujimori regime in
November 2000, Alberto Fujimori's daughter Keiko Fujimori rebuilt the
Fujimorist movement under the banner of \emph{Fuerza Popular}
(Popular Force). Already under Ollanta Humala (2011--2016),
Fujimorist congressional blocs developed a pattern of
constraining executive action, establishing legislative supremacy
as a de facto norm \citep{Forbes2017}. This was facilitated by the
loosely worded ``permanent moral incapacity'' clause of
Article~113 of the 1993 Constitution, which lets Congress remove
the president without a threshold equivalent to high crimes,
rendering the legislature more powerful than the executive and
turning vacancy motions into weapons of ordinary political
competition \citep{Eguiguren2017}.

The 2016 elections crystallized these tensions. Pedro Pablo
Kuczynski (PPK) won the runoff by fewer than fifty thousand votes
over Keiko Fujimori, while Popular Force secured an absolute
congressional majority---a centre-right executive facing a
Fujimorist legislature with the numbers and constitutional tools to
block, censure, and remove the president \citep{Mauro2026}.

Entangled in the Odebrecht scandal, PPK resigned in March 2018
ahead of an imminent impeachment vote. His successor Vizcarra
dissolved Congress in 2019 after a second rejected confidence vote,
triggering snap elections, but was himself removed in November 2020
by a vacancy vote condemned by left-leaning sectors
\citep{BarrenecheaVergara2023}. Vizcarra was later sentenced to
fourteen years for accepting bribes as governor of Moquegua
\citep{Reuters2025Vizcarra} and implicated in the
\emph{Vacunagate} scandal, having received COVID-19 vaccines
outside the authorized trial protocol \citep{Lancet2021Vacunagate}.

Protests over Vizcarra's removal forced his replacement Manuel
Merino to resign after five days; Congress then appointed Francisco
Sagasti to lead a transitional government to the 2021 elections, in
which Pedro Castillo narrowly defeated Keiko Fujimori. Castillo's
presidency was acutely fragile---nearly eighty ministerial changes
in seventeen months, persistent corruption allegations, relentless
obstruction---and, facing a third vacancy vote in December 2022, he
attempted an unconstitutional self-coup and was removed and
arrested within hours \citep{BarrenecheaVergara2023}.

Vice-president Dina Boluarte assumed the presidency and aligned with
the conservative bloc, triggering months of violent Andean protests
that left more than sixty dead. Under her government congressional
factions consolidated control over key institutions, including the
Constitutional Tribunal and Public Ministry, and democratic-quality
indices downgraded Peru to hybrid-regime status \citep{EIU2025}; a
worsening security crisis and low approval led Congress to remove
her in October 2025.

Her successor Jos\'e Jer\'i lasted four months before being censured
over a corruption scandal involving undisclosed meetings with
Chinese businessmen. Congress then elected Jos\'e Mar\'ia Balc\'azar
as its president, who assumed the presidency by constitutional
succession. His appointment drew mutual accusations between
\emph{Renovaci\'on Popular} and \emph{Fuerza Popular}, each blaming
the other for enabling the ascent of a congressman from the
left-wing \emph{Per\'u Libre} \citep{RPP2026Balcazar}, and coincided
with a decline in vote intention for Rafael L\'opez Aliaga, leader of
\emph{Renovaci\'on Popular} \citep{Infobae2026LopezAliagaCae}. This
brought to nine the number of individuals to have held the office
since 2016.

Therefore, as Peru approached the 2026 elections, it did so amid prolonged political
instability. Human Rights Watch reports that measures passed by Congress
in 2024 weakened judicial independence and constrained corruption and
organized-crime investigations \citep{HRW2026Peru}, in a context where
most recent presidents and many legislators faced criminal allegations
\citep{BarrenecheaVergara2023}. 
\subsection{Political parties and the electorate in Peru}\label{sec:parties}

Peru represents one of Latin America's most pronounced cases of
party system collapse. No major party has recently consolidated a stable and major
national presence, developed genuine ideological coherence, or
sustained meaningful linkages with citizens across electoral cycles. What exists instead is a succession of electoral vehicles, each
organized around a candidate rather than a program, each dissolving
or reconstituting itself after every election
\citep{MainwaringSu2021, BarrenecheaVergara2023}.

This institutional weakness is neither recent nor accidental. Its
origins trace to the 1992 \emph{autogolpe}, which dismantled a party
system that had been gradually consolidating throughout the 1980s
\citep{Tanaka2006}. The 1993 Constitution, designed to reduce the
role of parties as political intermediaries, produced a paradoxical
outcome: not a closed system that excluded citizens, but an
excessively open one that left no stable organizations between the
population and power. The result was what \cite{Tanaka2006} describes as the
``omnibus'' phenomenon, in which national parties operate from a
small nucleus in Lima with no organic territorial presence,
recruiting local operators before each election in purely
transactional arrangements that dissolve once voting ends
\citep{Tanaka2006}. Peru has consequently recorded among the highest
levels of electoral volatility in Latin America
\citep{MainwaringSu2021}, with its leading parties changing almost
entirely between electoral cycles.

The representational consequences are substantial. Each electoral cycle
effectively begins from zero, with little accumulation of institutional
experience and no correction of past failures, and parties cannot be
held accountable for their commitments because many cease to exist
before the next election. This fragility has deep historical roots:
since the founding of the republic, political life has been concentrated
in a Lima-based elite, and parties never developed genuine national
structures or deep societal roots \citep{Chavez2022}.

In the absence of stable partisan attachments, Peruvian voters have
reorganized their electoral behavior around a different logic:
rejection rather than affiliation \citep{Mainwaring2006}. Citizens
deploy their ballots not toward the candidate they support, but
against the one they most fear reaching power. This phenomenon has a consistent empirical record. The negative
partisanships of \emph{anti-aprismo} and \emph{anti-fujimorismo}
remained stable across the 2011, 2016, and 2021 elections, even as
the positive vehicles carrying those votes collapsed one after
another \citep{Melendez2019}. Humala, Kuczynski, and their
successors each attracted votes defined not by full loyalty but by the
imperative of blocking a feared adversary, a logic \cite{Melendez2019} terms the \emph{mal menor} (lesser evil). Experimental research on
political cue-taking in Peru confirms the underlying mechanism:
outgroup cues consistently decrease support for policy positions,
while ingroup cues produce no corresponding increase, revealing a
political psychology organized more around opposition than
affiliation \citep{delaCerda2025}.

This is the structural context that makes strategic vote-switching
not merely possible but rational. Where positive identification with
any candidate is weak and provisional, a voter's first-round choice
remains permanently vulnerable to revision the moment a different
vote would more effectively block the feared adversary from
advancing to the second round.

\subsection{Electoral framework}

Peru's electoral system is governed by Ley N°~26859 and administered
by three constitutionally autonomous institutions. The
\emph{Jurado Nacional de Elecciones} (JNE) functions as the supreme
electoral tribunal, responsible for overseeing legality, resolving
disputes, and proclaiming official results; its decisions in
electoral matters are final and unappealable. The \emph{Oficina
Nacional de Procesos Electorales} (ONPE) bears operational
responsibility for organizing and executing elections: it deploys
decentralized offices throughout the country, appoints
polling-station coordinators through public competition, and
coordinates with the police and armed forces to guarantee order on
election day. A third body, the \emph{Registro Nacional de
Identificaci\'on y Estado Civil} (RENIEC), maintains the voter
registry and issues the \emph{Documento Nacional de Identidad} (DNI),
without which no citizen may cast a vote.

General elections are held every five years on the second Sunday of
April (first round), and \emph{voting is mandatory} for citizens
between the ages of 18 and 70 (optional thereafter); failure to vote
is sanctioned by a monetary fine and entails temporary administrative
restrictions until the obligation is regularized. Presidential
candidates must win an outright majority of valid votes; if none
does, a runoff is held within thirty days between the two leading
candidates. Congressional elections take place simultaneously. The
2026 election marked a structurally significant change: for the
first time since 1990, voters elected a bicameral Congress
comprising 130 deputies and 60 senators. Deputies, as the lower
chamber, are tasked with presenting legislation and overseeing the
Cabinet, while senators, as the upper chamber, review bills, approve
key institutional appointments, and hold the power to remove the
president. Access to seats in either chamber is conditioned on an electoral threshold (\emph{valla electoral}) that a party must clear on two counts simultaneously: it must obtain at least five percent of the valid votes cast nationwide for that chamber, and it must be allotted, in the seat-distribution procedure, a number of seats equal to at least five percent of the chamber's legal membership---seven of the 130 deputies or three of the 60 senators. This dual requirement is designed to limit fragmentation in a historically atomized party system, though it carries the trade-off of potentially reducing effective representation when a significant share of the electorate votes for parties that fail to clear the threshold.

Voting is organized at the level of the individual \emph{mesa de
sufragio} (polling table), where ballots are cast, counted, and
recorded in official \emph{actas electorales} transmitted to ONPE's
central tally. The polling table is therefore the foundational unit
of the electoral chain, and any disruption at this level---such as
a failure to install a table on time---directly interrupts the
exercise of suffrage.

\subsection{Related literature}\label{subsec:literature}

Our paper draws on four literatures. The first is information design and
Bayesian updating \citep{Kamenica2011, Bergemann2019}, which studies
how the signals available to a decision-maker can be structured; nearer
our setting are \citet{Alonso2016}, \citet{Chan2019}, and---closest, with
Datum and Ipsos issuing potentially conflicting calls---%
\citet{GentzkowKamenica2017} on competing senders. We borrow its
apparatus (signal structure, Bayesian updating, and the
\citet{Blackwell1953} comparison of experiments) but not its premise of a
strategic sender: we neither model pollsters as designing estimates to
sway voters nor take any stance on their intent. What the framework
supplies instead are the definitions and structure for our object of
interest---the natural experiment assigns Sunday and Monday voters to two
experiments over the identity of the runoff entrant, and we estimate the
behavioral value of the Blackwell improvement from the coarser
(pre-flash) to the finer (post-flash) experiment
(Section~\ref{subsec:bridge-id}).

To take this structure to behavior, we turn to a second literature on
choice modeling. The random-utility multinomial logit \citep{Train2009}
microfounds the model, mapping the expressive and viability components
into individual choice probabilities. Because the ballot is secret,
however, we never observe individual choices: our outcomes are aggregate
returns at the \emph{acta} level. The logit therefore plays a structural
role, disciplining the aggregate vote shares we observe rather than being
fit to individual data.

The methodological core of the paper belongs to a third literature, on
matching for causal inference. We use cardinality matching on
pre-treatment covariates \citep{ZubizarretaParedesRosenbaum2014,
ViscontiZubizarreta2018}, which recovers the ATT on treated locations
under selection on observables, estimated with within-group
difference-in-means and regression estimators \citep{PageLenardKeele2020}.

Finally, two empirical literatures provide background rather than method.
The first studies media and political updating \citep{DellaVigna2007,
Enikolopov2011, Gentzkow2014}; its closest antecedent to our design is
\citet{Morton2015}, who exploits a French natural experiment to identify
exit-poll-induced turnout and bandwagon effects via the geographic and
temporal staggering of exposure to a public signal. The second concerns
voting in weakly institutionalised democracies, where \citet{Lupu2016},
\citet{Dargent2015}, \citet{Vergara2018}, and \citet{Levitsky2022} ground
the strategic voting under the \emph{mal menor} logic of
Section~\ref{sec:parties}.

\section{The 2026 election and the flash-estimate shock}\label{sec:evidence}
\subsection{Candidates and pre-electoral polls}\label{subsec:positioning}

The 2026 ballot featured 36 candidates spanning the full ideological
range (Table~\ref{tab:ideology}; \citealp{TuVotoPeru2026}), and the
field was exceptionally fragmented. Fujimori (\emph{Fuerza Popular}) led
on a free-market, \emph{mano dura}, socially conservative platform, but
no other candidate pulled clear: a pack of contenders competed closely
for the second runoff slot, spanning the spectrum. On the right,
L\'opez Aliaga (\emph{Renovaci\'on Popular}) carried the most explicitly
hard-right discourse on security, religion, and social values; on the
left, S\'anchez (\emph{Juntos por el Per\'u}) ran a consistently
left-wing platform---protectionist, statist, and progressive; and
Belmont, \'Alvarez, and Nieto held intermediate or off-axis positions,
Nieto in particular running as a ``fresh option'' outside the
traditional left--right divide, with L\'opez Chau also among the leading
group. Given this ideological spread but weak programmatic
differentiation among the front-runners, voter choice is more plausibly
driven by perceived viability than by policy, consistent with the
\emph{mal menor} logic of Section~\ref{sec:parties}.

In the polls, three features defined the pre-electoral environment.
Vote intention was historically low (Figure \ref{fig:simulacros_bar}): Fujimori ran between roughly $14\%$
and $19\%$ in the final month, while the candidates competing for second
place clustered in a tight 7--13\% band, with five or six within a few
points of one another. The race was volatile, with rankings shifting
substantially between waves---Ipsos and Datum ran the two longest
regular series (Figures~\ref{fig:ipsos_reg} and~\ref{fig:datum_reg}).
And the pollsters disagreed on who occupied second place, a disagreement
that persisted into their election-night flash estimates
(Section~\ref{subsec:flash}). The six official JNE debates appear to
have moved intentions, with \'Alvarez and Belmont gaining and L\'opez
Aliaga declining \citep{ipsos2026preelect}.\footnote{Series compiled
from the \emph{La Encerrona} tracker,
\url{https://laencerrona.pe/2026/02/10/encuestas}; Peruvian law
prohibits poll publication in the week before the election.}

\subsection{The flash estimates and the Monday vote}\label{subsec:flash}

The first round produced no clear front-runner: the four leading
candidates finished within six points (Fujimori $17.06\%$, S\'anchez
$12.04\%$, L\'opez Aliaga $11.90\%$, Nieto $11.03\%$), in a
geographically fragmented result---an electoral ``archipelago''
splitting Lima and the north coast between the two right-of-centre
candidates, the Andes toward S\'anchez, and the urban south toward
Nieto \citep{Martinelli2026archipielago}. With barely a point
separating the second runoff slot, its occupant was unresolved on
election night.

At 18:00 on 12~April, Ipsos and Datum released their \emph{flash
electoral} estimates (Table~\ref{tab:flashpolls}). Both placed
Fujimori first but disagreed on second: Datum put L\'opez Aliaga
second ($12.8\%$); Ipsos put S\'anchez second ($12.1\%$) with Belmont
third ($11.8\%$). The trajectory from the three Ipsos \emph{simulacros}
to the flash showed S\'anchez and Belmont surging and \'Alvarez
declining (Figure~\ref{fig:simulacros_bar}). The estimates thus
confirmed Fujimori's passage while leaving the second slot open---a potential right-versus-right runoff, but with S\'anchez registering as
a viable left-of-center contender in Ipsos and the centrist Nieto
within the margin of error in both estimates, so that the candidates
in contention for the single remaining slot spanned the full
spectrum: L\'opez Aliaga on the radical right, Nieto at the center,
and S\'anchez on the left.

\section{The model}
\label{sec:model}
This section develops the framework that organizes the empirical
analysis. Its premise is that the Ipsos and Datum flash estimates are
public signals about the identity of the runoff entrant, and that such
a signal can in principle shift a voter's choice by revising her beliefs
about which candidates are viable. We microfound this channel at the
level of the individual voter. The objective is operational: a
closed-form expression for the probability that voter~$i$ chooses
candidate~$c$ after observing the two flash estimates, whose
within-table aggregate is the effect our matching estimator identifies.
We first model each flash estimate as a noisy public signal and derive
the voter's posterior over the runoff entrant; we then embed that
posterior in a random-utility model of individual choice in which the
instrumental value of a viable candidate is scaled by the voter's
expressive match to him.

\subsection{Primitives and notation}\label{subsec:primitives}

The state space $\Theta$ indexes the identity of the second-place
finisher. Fujimori's lead in first place is uncontested across all
major pre-electoral polls and in both flash estimates, so uncertainty
is confined to who advances with her to the runoff. Guided by the
flash data of Section~\ref{subsec:flash}---in which Datum names
L\'opez Aliaga, Ipsos names S\'anchez, and Nieto polls within a
percentage point of the named candidates in both estimates---we
restrict attention to the three candidates the flash estimates render
viable:
\begin{equation}\label{eq:theta-set}
\Theta \;\triangleq\; \{\text{Roberto S\'anchez},\;
                       \text{Rafael L\'opez Aliaga},\;
                       \text{Jorge Nieto}\} \;\triangleq\; \{\text{RS},\; \text{RLA},\; \text{JN}\}.
\end{equation}

A belief is a probability distribution on $\Theta$; the set
of beliefs is the unit simplex
$$
\Delta(\Theta) = \left\{\mu \in \R_+^{K} : \sum_{\theta} \mu(\theta) = 1\right\},
$$
so that $\mu_j \in \Delta(\Theta)$ for every voter $j$. The
prior $\mu_0 \in \Delta(\Theta)$ is the belief held before any
signal is observed.

\subsection{Bayesian updating with one signal}\label{subsec:bayes-one}

A signal---or experiment\footnote{An experiment in the sense
of \citet{Blackwell1953} is a pair consisting of a finite signal
space and a family of probability distributions over it, one for
each state of the world.}---is a pair $(S, \pi)$ consisting of a
finite signal space~$S$ and a Markov kernel $\pi : \Theta \to \Delta(S)$.

\begin{definition}[Posterior belief]\label{def:posterior}
Upon observing $s \in S$ with $\Prob_{\mu_0}(s) > 0$, the posterior
is
\begin{equation}\label{eq:bayes}
\mu(\theta \mid s) \;=\; \frac{\pi(s \mid \theta)\,\mu_0(\theta)}{\sum_{\theta' \in \Theta} \pi(s \mid \theta')\,\mu_0(\theta')}.
\end{equation}
\end{definition}

We model each pollster's flash estimate as a symmetric
\emph{$\eta$-noisy} experiment: $S = \Theta$ and
\begin{equation}\label{eq:eta-noisy}
\pi^\eta(s \mid \theta) \;=\;
\begin{cases}
1 - \eta & \text{if } s = \theta,\\[2pt]
\dfrac{\eta}{K - 1} & \text{if } s \neq \theta,
\end{cases}
\qquad \eta \in [0, 1 - 1/K).
\end{equation}
Nature draws $\theta$; the experiment announces it correctly with
probability $1 - \eta$ or, with probability $\eta$, announces one
of the $K-1$ wrong states uniformly. The single number $\eta$
absorbs both sampling and non-sampling error in a way that admits a
clean Blackwell ordering: $\pi^{\eta_1} \succeq_B \pi^{\eta_2}$
whenever $\eta_1 \le \eta_2$ \citep[in the sense of][]{Blackwell1953}.

\begin{lemma}[Posterior under one $\eta$-noisy signal]\label{lem:single-post}
Fix $\mu_0 \in \Delta(\Theta)$, $\eta \in [0, 1 - 1/K)$, and
$s^* \in \Theta$ the realised signal value. Then
\begin{equation}\label{eq:single-post}
\mu(\theta \mid s^*) \;=\;
\begin{cases}
\dfrac{(1-\eta)\,\mu_0(s^*)}{Z(\mu_0, s^*, \eta)} & \theta = s^*,\\[12pt]
\dfrac{\eta\,\mu_0(\theta)/(K-1)}{Z(\mu_0, s^*, \eta)} & \theta \neq s^*,
\end{cases}
\end{equation}
with normalising constant
$Z(\mu_0, s^*, \eta) = (1-\eta)\,\mu_0(s^*) + \tfrac{\eta}{K-1}\!\left(1 - \mu_0(s^*)\right)$.
\end{lemma}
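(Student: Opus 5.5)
The lemma follows by substituting the $\eta$-noisy kernel \eqref{eq:eta-noisy} into Bayes' rule \eqref{eq:bayes} of Definition~\ref{def:posterior}. First we check that the posterior is well defined, i.e.\ that $\Prob_{\mu_0}(s^*)>0$. Then we evaluate the denominator of \eqref{eq:bayes} in closed form, and finally the numerator, case by case.

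We begin with the denominator, which is the marginal probability of the realised signal:
\[
\Prob_{\mu_0}(s^*)=\sum_{\theta'\in\Theta}\pi^\eta(s^*\mid\theta')\,\mu_0(\theta').
\]
Split the sum into the single term $\theta'=s^*$ and the terms $\theta'\neq s^*$. The first contributes $(1-\eta)\mu_0(s^*)$. Each of the others carries the same likelihood $\eta/(K-1)$, so they contribute $\tfrac{\eta}{K-1}\sum_{\theta'\neq s^*}\mu_0(\theta')$. Because $\mu_0\in\Delta(\Theta)$, this remaining sum equals $1-\mu_0(s^*)$. Hence the denominator is exactly $Z(\mu_0,s^*,\eta)$.

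For the numerator, take two cases. If $\theta=s^*$, the kernel gives $\pi^\eta(s^*\mid s^*)=1-\eta$, so the numerator is $(1-\eta)\mu_0(s^*)$. If $\theta\neq s^*$, the kernel gives $\pi^\eta(s^*\mid\theta)=\eta/(K-1)$, so the numerator is $\eta\,\mu_0(\theta)/(K-1)$. Dividing each by $Z$ reproduces \eqref{eq:single-post}.

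The only real obstacle is confirming $Z>0$, which the statement implicitly requires. The restriction $\eta<1-1/K$ gives $1-\eta>1/K>0$. It also gives $\eta/(K-1)<1/K$, and in particular $1-\eta>\eta/(K-1)$, though only positivity is needed here. If $\eta>0$, then $Z$ is a convex-type combination with strictly positive weights $1-\eta$ and $\eta/(K-1)$ of the numbers $\mu_0(s^*)$ and $1-\mu_0(s^*)$, which sum to one, so $Z>0$. If $\eta=0$, then $Z=\mu_0(s^*)$, and $Z>0$ holds exactly under the standing hypothesis of Definition~\ref{def:posterior} that $\Prob_{\mu_0}(s^*)>0$. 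With that noted, the posterior is well defined. As a sanity check, the displayed probabilities sum to one: the non-$s^*$ numerators add up to $\tfrac{\eta}{K-1}(1-\mu_0(s^*))$, and together with the $s^*$ numerator this recovers $Z$.
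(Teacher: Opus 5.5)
Your proposal is correct and follows the paper's proof essentially step for step. Both apply Bayes' rule with the $\eta$-noisy kernel, split the denominator at $\theta'=s^*$, use $\sum_{\theta'\neq s^*}\mu_0(\theta')=1-\mu_0(s^*)$ to obtain $Z$, and read off the two numerator branches before checking that they sum to one.

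Your positivity argument is more careful than the paper's. The paper asserts $Z>0$ for every $\eta\in[0,1-1/K)$, but $Z=\mu_0(s^*)$ vanishes when $\eta=0$ and $\mu_0(s^*)=0$. Your appeal to the standing hypothesis $\Prob_{\mu_0}(s^*)>0$ of Definition~\ref{def:posterior} is exactly what closes that case.
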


\begin{proof}
Bayes' rule \eqref{eq:bayes}, applied with prior $\mu_0$ and
likelihood $\pi^\eta$, gives
\[
  \mu(\theta \mid s^*)
  \;=\;
  \frac{\pi^\eta(s^* \mid \theta)\,\mu_0(\theta)}
       {\sum_{\theta' \in \Theta} \pi^\eta(s^* \mid \theta')\,\mu_0(\theta')}
\]
for every $\theta \in \Theta$. Splitting the denominator according
to whether $\theta'$ equals $s^*$, and using
$\pi^\eta(s^* \mid s^*) = 1 - \eta$ together with
$\pi^\eta(s^* \mid \theta') = \eta/(K-1)$ for $\theta' \neq s^*$ from
\eqref{eq:eta-noisy},
\[
  \sum_{\theta' \in \Theta} \pi^\eta(s^* \mid \theta')\,\mu_0(\theta')
  \;=\;
  (1-\eta)\,\mu_0(s^*) \;+\; \frac{\eta}{K-1} \sum_{\theta' \neq s^*} \mu_0(\theta').
\]
Since $\mu_0 \in \Delta(\Theta)$,
$\sum_{\theta' \neq s^*} \mu_0(\theta') = 1 - \mu_0(s^*)$, so the
denominator equals
$Z(\mu_0, s^*, \eta) = (1-\eta)\,\mu_0(s^*) + \tfrac{\eta}{K-1}\bigl(1 - \mu_0(s^*)\bigr)$,
which is strictly positive on $\eta \in [0,\,1-1/K)$ because both
summands are non-negative and at least one is strictly positive.

Substituting $\pi^\eta(s^* \mid \theta)$ from \eqref{eq:eta-noisy}
into the numerator yields the two branches of
\eqref{eq:single-post}: $(1-\eta)\,\mu_0(s^*)/Z$ when $\theta = s^*$,
and $\bigl(\eta/(K-1)\bigr)\,\mu_0(\theta)/Z$ when $\theta \neq s^*$.
Non-negativity of each branch is immediate, and summing over
$\theta \in \Theta$ recovers $Z/Z = 1$, confirming
$\mu(\,\cdot \mid s^*) \in \Delta(\Theta)$.
\end{proof}

\begin{remark}[Relative informativeness of the two flash estimates]\label{rem:eta-calib}
We treat $\eta_f$ as a model primitive. Ipsos has the stronger record on
prior second-place calls, so we set $\eta_I < \eta_D$, taking it as the
sharper experiment; by~\eqref{eq:eta-noisy} this gives the Blackwell
ranking $\pi^{\eta_I} \succeq_B \pi^{\eta_D}$. The 2026 result---S\'anchez
did reach the runoff, as Ipsos had it---corroborates the ordering ex
post, but this was not yet known when the \emph{ma\~nana} voters saw the
flash, so it plays no role in their beliefs. 
\end{remark}

\subsection{Two signals}\label{subsec:two-signal}

The 2026 flash produced calls $(s_D^*, s_I^*) = (\text{RLA},
\text{RS})$. We assume the two signals are conditionally independent
given the state.

\begin{assumption}[Conditional independence]\label{ass:cond-indep}
Conditional on $\theta$,
$$
\Prob(s_D, s_I \mid \theta) = \pi^{\eta_D}(s_D \mid \theta)\,\pi^{\eta_I}(s_I \mid \theta).
$$
\end{assumption}

The assumption is supported by the operational independence of the
two firms' samples and weighting procedures; we discuss it further
in Section~\ref{sec:method}.

\begin{proposition}[Joint posterior]\label{prop:joint-post}
Under Assumption~\ref{ass:cond-indep},
\begin{equation}\label{eq:joint-post}
\mu(\theta \mid s_D^*, s_I^*)
\;=\; \frac{\pi^{\eta_D}(s_D^* \mid \theta)\,\pi^{\eta_I}(s_I^* \mid \theta)\,\mu_0(\theta)}
           {\sum_{\theta' \in \Theta} \pi^{\eta_D}(s_D^* \mid \theta')\,\pi^{\eta_I}(s_I^* \mid \theta')\,\mu_0(\theta')}.
\end{equation}
\end{proposition}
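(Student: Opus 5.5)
The plan is to apply Bayes' rule (Definition~\ref{def:posterior}) to the composite experiment that reports the pair of calls, and then use Assumption~\ref{ass:cond-indep} to factor its likelihood.

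First, define the product experiment $(S_D \times S_I, \pi)$ with $S_D = S_I = \Theta$ and kernel $\pi(s_D, s_I \mid \theta) \triangleq \Prob(s_D, s_I \mid \theta)$. This is a legitimate experiment in the sense of Section~\ref{subsec:bayes-one}. The signal space is finite, and for each $\theta$ the map $(s_D,s_I)\mapsto \pi^{\eta_D}(s_D\mid\theta)\pi^{\eta_I}(s_I\mid\theta)$ is a probability distribution, because its double sum factors into two sums that each equal one.

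Second, apply \eqref{eq:bayes} to this experiment at the realised pair $(s_D^*, s_I^*)$. Then substitute the factorisation from Assumption~\ref{ass:cond-indep} into both the numerator and every term of the denominator. This yields \eqref{eq:joint-post} directly.

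The only step needing care is showing that the formula is well defined, i.e.\ that $\Prob_{\mu_0}(s_D^*, s_I^*) > 0$. By \eqref{eq:eta-noisy}, every likelihood entry is at least $\min\{1-\eta,\ \eta/(K-1)\}$. With $\eta \in [0, 1-1/K)$, we have $1-\eta > 0$.

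\begin{itemize}
\item If $\eta_D, \eta_I > 0$, every entry is strictly positive. The denominator is then a positive combination of the $\mu_0(\theta')$, which sum to one, so it is positive.
\item If some $\eta_f = 0$, positivity can fail, for example when a noiseless pollster's call has prior mass zero, or when two noiseless pollsters disagree. In that case I would state the proposition under the standing convention of Definition~\ref{def:posterior}: the posterior is defined only for realisations with positive probability.
\end{itemize}

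Finally, as in Lemma~\ref{lem:single-post}, I would note that the right-hand side is non-negative and sums to one over $\theta$, so it lies in $\Delta(\Theta)$. As an alternative route, one could update sequentially: apply Lemma~\ref{lem:single-post} with $s_D^*$, then apply it again with prior $\mu(\cdot\mid s_D^*)$ and signal $s_I^*$. Conditional independence lets the second likelihood be used unchanged, and the intermediate normalising constants cancel. This yields the same expression and shows the result does not depend on the order in which the two calls are processed.
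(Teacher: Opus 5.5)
Your proposal is correct and follows essentially the same route as the paper: treat $(s_D,s_I)$ as a single experiment on $S\times S$, apply Bayes' rule, and substitute the factorization of Assumption~\ref{ass:cond-indep} into the numerator and every term of the denominator. Your treatment of well-definedness is more careful than the paper's. The paper claims every likelihood entry is strictly positive for all $\eta_D,\eta_I\in[0,1-1/K)$, which is false at $\eta_f=0$; there, as you note, the denominator can vanish, for instance when two noiseless pollsters issue the discordant 2026 calls.
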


\begin{proof}
View $(s_D, s_I)$ as a single signal taking values in the product
space $S \times S$, with joint kernel
$\pi(s_D, s_I \mid \theta) \coloneqq \mathbb{P}(S_D = s_D, S_I = s_I \mid \theta)$.
Bayes' rule,  applied to this joint signal gives
\begin{equation}\label{eq:bayes-joint}
\mu(\theta \mid s_D^*, s_I^*)
\;=\; \frac{\pi(s_D^*, s_I^* \mid \theta)\,\mu_0(\theta)}
           {\sum_{\theta' \in \Theta} \pi(s_D^*, s_I^* \mid \theta')\,\mu_0(\theta')}.
\end{equation}
Assumption~\ref{ass:cond-indep} states that, conditional on
$\theta$, the two flash signals are independent, so the joint
kernel factorizes as
\[
  \pi(s_D, s_I \mid \theta)
  \;=\;
  \pi^{\eta_D}(s_D \mid \theta)\,\pi^{\eta_I}(s_I \mid \theta)
\]
for every $\theta \in \Theta$ and every $(s_D, s_I) \in S \times S$.
Substituting this factorisation into both the numerator and the
denominator of \eqref{eq:bayes-joint} yields \eqref{eq:joint-post}.
Strict positivity of the denominator follows from
$\eta_D, \eta_I \in [0,\,1-1/K)$, which guarantees
$\pi^{\eta_D}(s_D^* \mid \theta'), \pi^{\eta_I}(s_I^* \mid \theta') > 0$
for every $\theta' \in \Theta$, so that the full-support prior
condition is preserved in the posterior.
\end{proof}

\begin{example}
[Numerical illustration with $(\eta_D, \eta_I) = (0.40, 0.30)$ and uniform prior.]
With $K = 3$ and $\mu_0 \equiv 1/3$, the joint likelihoods evaluated
at $(s_D^*, s_I^*) = (\text{RLA},\text{RS})$ are
\[
\begin{aligned}
L(\text{RLA}) &= (1-\eta_D)\cdot(\eta_I/2) = 0.60 \times 0.15 = 0.090,\\
L(\text{RS})  &= (\eta_D/2)\cdot(1-\eta_I) = 0.20 \times 0.70 = 0.140,\\
L(\text{JN})  &= (\eta_D/2)\cdot(\eta_I/2) = 0.20 \times 0.15 = 0.030,
\end{aligned}
\]
with sum $0.260$. Equation~\eqref{eq:joint-post} therefore yields
\[
\mu\!\left(\cdot \mid s_D^* = \text{RLA},\, s_I^* = \text{RS}\right)
\;\approx\; \big(\,\underbrace{0.346}_{\text{RLA}},\;
                  \underbrace{0.538}_{\text{RS}},\;
                  \underbrace{0.115}_{\text{JN}}\,\big).
\]
The sharper Ipsos call still leads, but less decisively now that the two
experiments are closer in precision: posterior mass on S\'anchez is about
$0.54$, while L\'opez Aliaga retains about $0.35$ as Datum's call, and
Nieto---named by neither pollster but observed close to both---retains
about $0.12$. All three remain substantially above zero, so the discordant flash only
redistributes viability mass within $\Theta$ rather than concentrating it on a
single named candidate: since $\mu_0\in\Delta(\Theta)$ already assigns
$\{\mathrm{RS},\mathrm{RLA},\mathrm{JN}\}$ collective probability one, every
posterior holds that total at one and moves mass only among the three. The long
tail's non-viability is not a result of the update but of the \emph{ex ante}
restriction of the state space to $\Theta$ 
justified by the polling and flash data; this is what
sets $q_c(\mu)=0$ for every $c\notin\Theta$, at the prior and at every posterior
alike.
\end{example}

\subsection{From utility to vote probabilities}\label{subsec:utility}

Candidates are indexed by $c \in \mathcal{C}$ (the full 36-candidate
ballot of Section~\ref{subsec:positioning}, of which $\Theta$ is the
viable subset for second place); voter $j$ has type $t_j = (\alpha_j,
\beta_j, M_j)$. The match score $M_j(c) \in [0,1]$ summarizes voter
$j$'s ideological and identity-based affinity with candidate $c$;
empirically it is proxied at the location level by the
right-of-centre orientation of the polling place
(Section~\ref{subsec:covariates}).

Let $\mathrm{runoff}(\theta) \triangleq \{\text{Fujimori}, \theta\}$
denote the pair of candidates that advance to the runoff in state
$\theta \in \Theta$. Voter $j$'s utility from voting for $c$ in
state $\theta$ is
\begin{equation}\label{eq:utility}
U_j(c, \theta) \;=\;
\underbrace{\alpha_j\,M_j(c)}_{\text{expressive}}
\;+\;
\underbrace{\beta_j\,M_j(c)\,\mathbf{1}\{c \in \mathrm{runoff}(\theta)\}}_{\text{instrumental}}
\;+\; \varepsilon_{jc},
\end{equation}
with $\alpha_j > 0$, $\beta_j \geq 0$, and i.i.d.\ Type-I Extreme
Value taste shocks $\varepsilon_{jc}$. The instrumental term is
interacted with the expressive match $M_j(c)$: learning that $c$ is
viable moves voter~$j$ toward him only if $M_j(c)$ is appreciable,
so a viable but expressively rejected candidate ($M_j(c) \approx 0$)
exerts no instrumental pull. This multiplicative structure lets a
nationally viable candidate gain votes only where he is also
expressively acceptable.

Define the perceived-viability function $q_c : \Delta(\Theta) \to [0, 1]$
that maps a posterior belief over runoff entrants into the perceived
probability that $c$ is in the runoff:
\begin{equation}\label{eq:qcmu}
q_c(\mu) \;=\;
\begin{cases}
1 & c = \text{Fujimori},\\
\mu(c) & c \in \Theta,\\
0 & \text{otherwise.}
\end{cases}
\end{equation}
Under the standard Gumbel assumption on the idiosyncratic utility
shocks \citep[Ch.~3]{Train2009}, voter~$j$'s probability of
choosing candidate~$c$ after observing the signal pair
$s \in S \times S$ admits the closed-form logit expression
\begin{equation}\label{eq:vote-prob}
\mathbb{P}_j(c \mid s) \;=\;
\frac{\exp\!\bigl(M_j(c)\,[\,\alpha_j + \beta_j\,q_c(\mu_j^{s})\,]\bigr)}
     {\sum_{c' \in \mathcal{C}} \exp\!\bigl(M_j(c')\,[\,\alpha_j+ \beta_j\,q_{c'}(\mu_j^{s})\,]\bigr)},
\end{equation}
where $\mu_j^{s} \coloneqq \mu_j(\,\cdot \mid s\,) \in \Delta(\Theta)$
is the posterior of Proposition~\ref{prop:joint-post}. The signal
enters \eqref{eq:vote-prob} only through the viability terms
$q_c(\mu_j^{s})$, but its impact on $c$'s choice probability is
scaled by the expressive match $M_j(c)$: the same posterior shift
moves a voter toward $c$ in proportion to how acceptable $c$ already
is to her. Because the 2026 calls are discordant
($s_D^* \neq s_I^*$), the posterior of
Proposition~\ref{prop:joint-post} splits viability mass between
$s_D^*$ and $s_I^*$ rather than concentrating it on a single
candidate: each named candidate gains in viability, but by strictly
less than under a concordant call.

The behavioral content of \eqref{eq:vote-prob} is a
viability--affinity interaction: a rise in $q_c(\mu_j^{s})$ raises
$\mathbb{P}_j(c \mid s)$ by an amount increasing in $M_j(c)$. The more a
voter already favors $c$, the more a gain in his viability moves her
toward him; where affinity is low, the same gain barely shifts her. The
effect on L\'opez Aliaga and S\'anchez is therefore symmetric: each
gains in proportion to local affinity---L\'opez Aliaga where the
electorate leans right, S\'anchez where it leans left---and neither
draws voters from the other side. The \emph{mal menor} logic of
Section~\ref{sec:parties} reallocates support along existing affinities,
not across them.

\subsection{From individual choices to \emph{acta}-level outcomes}\label{subsec:bridge}

The structural model of Section~\ref{sec:model} operates on the
individual voter $j$, who casts a vote with logit
probability~\eqref{eq:vote-prob}. The data are silent at this level: the
ONPE \textit{actas} record only vote counts at each polling table, and
the secret ballot precludes any link between voters and votes. The
empirical strategy must therefore identify the model-implied effect from
table-level shares alone, which this subsection makes precise.

Voting locations are indexed by $\ell$ and \emph{actas} by $a$, with
$a \in \ell$ denoting that \emph{acta} $a$ is recorded at location $\ell$. Let
$T_\ell \in \{0,1\}$ be the treatment indicator, equal to $1$ if
location $\ell$ voted on Monday 13 April 2026---after the flash
estimates were broadcast---and $0$ otherwise; each \emph{acta} inherits its
location's status, so $T_a \equiv T_\ell$ for $a \in \ell$. The treated
and control locations are
\[
  \mathcal{T} \;\coloneqq\; \left\{\, \ell : T_\ell = 1 \,\right\}
    \;=\; \left\{\, \ell_1, \ldots, \ell_{13} \,\right\},
  \qquad
  \mathcal{C}_0 \;\coloneqq\; \left\{\, \ell : T_\ell = 0 \,\right\},
  \quad \left| \mathcal{C}_0 \right| \approx 340.
\]
For the $i$-th treated location ($i = 1, \ldots, 13$), let
$A_{\ell_i} = \left\{\, a_{\ell_i}^{1}, \ldots, a_{\ell_i}^{n_i} \,\right\}$
be its set of \emph{actas}, with $n_i = \left| A_{\ell_i} \right|$. The total
number of treated \emph{actas} is
\[
  N_T \;\coloneqq\; \sum_{i=1}^{13} n_i \;=\; 187.
\]

At acta $a$ with $N_a$ valid votes, let $V_{jc} = \mathbf{1}\{j
\text{ votes for } c\}$ for voter $j \in a$ and candidate $c$. The
observed share is
\[
  Y_{ac} \;=\; \frac{1}{N_a} \sum_{j \in a} V_{jc}.
\]
Under~\eqref{eq:vote-prob}, $\mathbb{E}[V_{jc} \mid \mu_j] =
\mathbb{P}_j(c \mid \mu_j)$. Write $\mu_j^{(0)}$ for voter $j$'s
baseline posterior and $\mu_j^{(1)} \equiv \mu_j(\,\cdot \mid s_D^*,
s_I^*)$ for her posterior after observing the two flash signals
(Proposition~\ref{prop:joint-post}). The expected potential outcome
of \emph{acta} $a$ under treatment $d \in \{0, 1\}$ is
\[
  \mathbb{E}[Y_{ac}(d)]
  \;=\; \frac{1}{N_a} \sum_{j \in a}
        \mathbb{P}_j\left(c \mid \mu_j^{(d)}\right).
\]
Let
\[
  \delta_{jc}
  \;\coloneqq\; \mathbb{P}_j\left(c \mid \mu_j^{(1)}\right)
              - \mathbb{P}_j\bigl(c \mid \mu_j^{(0)}\bigr)
\]
denote the individual information effect for voter $j$ on
candidate $c$. Differencing the potential outcomes yields the
\emph{acta}-level treatment effect as the within-\emph{acta} average of the
$\delta_{jc}$:
\begin{equation}\label{eq:tauac}
  \tau_{ac}
  \;\coloneqq\; \mathbb{E}[Y_{ac}(1) - Y_{ac}(0)]
  \;=\; \frac{1}{N_a} \sum_{j \in a} \delta_{jc}.
\end{equation}
The $\delta_{jc}$ are not observable; their within-\emph{acta} average
$\tau_{ac}$ is the smallest object the model speaks to that is
also visible in the data.

The estimand is the \emph{acta}-weighted average of $\tau_{ac}$ across
the treated population,
\begin{equation}\label{eq:att}
  \tau_c \;\coloneqq\;
   \frac{1}{N_T} \sum_{i=1}^{13} \sum_{a \in A_{\ell_i}}
   \mathbb{E}\bigl[Y_{ac}(1) - Y_{ac}(0) \,\big|\, T_\ell = 1\bigr],
\end{equation}
which is the average treatment effect on the treated (ATT) for candidate
$c$. Its model-implied counterpart is
\begin{equation}\label{eq:ate}
  \Delta_c \;\coloneqq\;
    \mathbb{E}\!\left[
      \frac{1}{N_a} \sum_{j \in a} \delta_{jc}
      \;\Big|\; T_\ell = 1
    \right]
  \;=\;
    \mathbb{E}\!\left[
      \int \delta_{jc}\, dF_a(j)
      \;\Big|\; T_\ell = 1
    \right],
\end{equation}
where $F_a$ is the within-\emph{acta} empirical distribution of voter
types (the probability measure that places mass $1/N_a$ on each
voter $j \in a$, so that 
$$
\int g(j)\, dF_a(j) = N_a^{-1} \sum_{j
\in a} g(j)
$$ 
for any function $g$ of voter characteristics).
Equation~\eqref{eq:ate} is the population analogue of the
table-level effect computable in closed form from the structural
model; equation~\eqref{eq:att} is its causal counterpart in the
potential-outcomes framework. The two are linked by the
identification result below.

\begin{assumption}[Selection on observables and overlap]\label{ass:exog}
There exists a vector $H_a$ of pre-treatment covariates---combining
\emph{acta}-level demographics and location-level structural variables,
detailed in Section~\ref{subsec:covariates}---such that, for each
candidate $c \in \mathcal{C}$,
\begin{equation}\label{eq:soo}
  Y_{ac}(0) \;\indep\; T_\ell \mid H_a,
  \qquad
  0 \;<\; \mathbb{P}(T_\ell = 1 \mid H_a) \;<\; 1.
\end{equation}
\end{assumption}

The institutional ground for Assumption~\ref{ass:exog} is the
contractor's logistical protocol of 12--13 April 2026: which
locations were displaced was decided on operational grounds
(transport routes, warehouse-to-centre delivery times, the
operational profile of the districts served) that are correlated
with district-level socioeconomic conditions but orthogonal, by
design, to the political preferences of the location's electorate.
No political variable enters the contractor's protocol. Conditional
on the socioeconomic and demographic covariates that drive the
operational protocol, the residual variation in $T_\ell$ is
therefore as good as random.

\begin{assumption}[Stable Unit Treatment Value Assumption (SUTVA) at the location level]\label{ass:sutva}
The potential outcomes $\bigl(Y_{ac}(0), Y_{ac}(1)\bigr)$ of \emph{acta}
$a \in A_\ell$ depend only on $T_\ell$, not on $T_{\ell'}$ for any
$\ell' \neq \ell$.
\end{assumption}

Assumption~\ref{ass:sutva} rules out between-location spillovers.

\begin{proposition}[Identification of $\tau_c$]\label{prop:identification}
Under Assumptions~\ref{ass:exog} and~\ref{ass:sutva},
\begin{equation}\label{eq:tau-identified}
  \tau_c
  \;=\; \mathbb{E}\!\left[Y_{ac} \mid T_\ell = 1\right]
       \;-\; \mathbb{E}\!\left[m_0(H_a) \mid T_\ell = 1\right]
  \;=\; \Delta_c,
\end{equation}
where $m_0(h) \coloneqq \mathbb{E}\!\left[Y_{ac} \mid H_a = h,\, T_\ell = 0\right]$
and $\Delta_c$ is the model-implied effect~\eqref{eq:ate}.\footnote{%
  $m_0(h)$ is the average vote share for $c$ among control \emph{actas} whose
  covariate profile equals $h$. The subtracted term
  $\mathbb{E}\!\left[m_0(H_a) \mid T_\ell = 1\right]$ replaces each treated
  \emph{acta}'s unobserved untreated outcome with this control average evaluated
  at the \emph{acta}'s own covariates; $\tau_c$ is therefore the observed mean
  for the treated minus the counterfactual mean they would have recorded
  had they voted before the flash. Also, $\mathbb{E}[\cdot|T_{\ell}=1]$ defines the population weighting $1/N_T$.}
\end{proposition}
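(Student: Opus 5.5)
The plan is to prove the two equalities in \eqref{eq:tau-identified} separately: first the standard selection-on-observables identification of the ATT, then the link between the potential-outcomes ATT and the model-implied $\Delta_c$.

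\textbf{First equality.} I would expand \eqref{eq:att} as $\tau_c = \mathbb{E}[Y_{ac}(1)\mid T_\ell=1] - \mathbb{E}[Y_{ac}(0)\mid T_\ell=1]$, reading $\mathbb{E}[\cdot\mid T_\ell=1]$ as the $1/N_T$ acta-weighted average over treated actas (the footnote's convention).

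For the treated term, Assumption~\ref{ass:sutva} ensures that each acta's potential outcome depends only on its own location's $T_\ell$. Consistency then gives $Y_{ac}=Y_{ac}(1)$ on treated actas, so $\mathbb{E}[Y_{ac}(1)\mid T_\ell=1]=\mathbb{E}[Y_{ac}\mid T_\ell=1]$.

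For the counterfactual term, I would apply the law of iterated expectations over $H_a$ within the treated population:
\[
\mathbb{E}[Y_{ac}(0)\mid T_\ell=1]=\mathbb{E}\bigl[\mathbb{E}[Y_{ac}(0)\mid H_a,T_\ell=1]\mid T_\ell=1\bigr].
\]
Conditional independence in \eqref{eq:soo} replaces $T_\ell=1$ by $T_\ell=0$ in the inner expectation. Consistency on controls then turns $Y_{ac}(0)$ into $Y_{ac}$, so the inner expectation equals $m_0(H_a)$.

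Overlap is what makes $m_0(h)$ well defined at every $h$ in the support of $H_a$ given $T_\ell=1$. Only the upper bound $\mathbb{P}(T_\ell=1\mid H_a)<1$ is actually needed here. Combining the two terms gives the middle expression.

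\textbf{Second equality.} Here I would use \eqref{eq:tauac}. Under the structural model, $\mathbb{E}[Y_{ac}(d)]=N_a^{-1}\sum_{j\in a}\mathbb{P}_j(c\mid\mu_j^{(d)})$, so that $\mathbb{E}[Y_{ac}(1)-Y_{ac}(0)]=N_a^{-1}\sum_{j\in a}\delta_{jc}=\int\delta_{jc}\,dF_a(j)$. Averaging this over treated actas with the same $1/N_T$ weights reproduces \eqref{eq:ate} term by term, which gives $\tau_c=\Delta_c$.

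\textbf{Main obstacle.} The delicate step is making the expectation operators coherent. In \eqref{eq:tauac} the expectation is over the logit taste shocks $\varepsilon_{jc}$, with voter types and the posteriors $\mu_j^{(d)}$ held fixed. In \eqref{eq:soo} and $m_0$, the expectation is over the sampling distribution of actas given $H_a$.

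I would handle this with the tower property, conditioning first on the acta's realised type profile $F_a$ and then averaging over actas. The potential outcomes $Y_{ac}(d)$ must be defined as the shares generated by \eqref{eq:vote-prob} under posterior $\mu_j^{(d)}$, with the same $\varepsilon$ draws and the same voters in both arms. This is an implicit assumption: the treatment alters only beliefs, not the composition of who votes. I would state it explicitly, since Monday turnout could otherwise differ.

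With this convention the $\varepsilon$-expectation sits inside the acta-level average, and both sides reduce to the same double average. The remaining checks are routine; for instance, $\delta_{jc}$ is bounded in $[-1,1]$, so all expectations exist.
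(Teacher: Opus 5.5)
Your proposal is correct and follows essentially the same route as the paper's proof. Your first equality is the paper's edge (II): consistency on the treated, then ignorability, overlap, consistency on controls, and the tower property over $H_a$. Your second equality is its edge (I): potential shares as within-\emph{acta} averages of logit probabilities, then linearity over treated \emph{actas}. Two things you add are worth keeping, since the paper glosses over them by setting $Y_{ac}(d)$ itself equal to $N_a^{-1}\sum_{j\in a}\mathbb{P}_j(c\mid\mu_j^{(d)})$: your explicit reconciliation of the two expectation operators, and the fixed-electorate assumption (no turnout response to the flash).
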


\begin{proof}
The estimand $\tau_c$ is the average treatment effect on the
treated for candidate $c$, a population mean restricted to treated
\emph{actas}. It is causal and not directly observable. We link it to two
equivalent objects: the model-implied effect $\Delta_c$, computable
in closed form from the structural choice model, and an observable
expression in terms of the control regression function $m_0$. The
argument forms the following triangle:
\begin{center}
\begin{tikzpicture}[
    line cap=round,
    vrtx/.style ={font=\normalsize, inner sep=2pt},
    elbl/.style ={font=\footnotesize, fill=white, inner sep=2pt, rounded corners=1pt},
    edgeI/.style ={thick, draw=blue!60!black},
    edgeII/.style={thick, draw=red!65!black},
    edgeIII/.style={thick, dashed, draw=black!55}
]
\node[vrtx]              (tau)   at (0, 3)    {$\tau_c$};
\node[vrtx, anchor=east] (delta) at (-3.4, 0) {$\Delta_c$};
\node[vrtx, anchor=west] (obs)   at (3.4, 0)
    {$\mathbb{E}[Y_{ac} \mid T_\ell{=}1] \;-\; \mathbb{E}[m_0(H_a) \mid T_\ell{=}1]$};
\draw[edgeI]   (tau) -- (delta)
    node[elbl, midway, sloped, above=1pt, text=blue!60!black]
    {(I) structural identity};
\draw[edgeII]  (tau) -- (obs)
    node[elbl, midway, sloped, above=1pt, text=red!65!black]
    {(II) population identification};
\draw[edgeIII] (delta) -- (obs)
    node[elbl, midway, below=2pt, text=black!60]
    {identifying equation};
\end{tikzpicture}
\end{center}
We establish the two upper edges; the bottom edge
is~\eqref{eq:tau-identified} and follows by transitivity.

\medskip
\noindent (I) Structural identity, $\tau_c = \Delta_c$. By
construction of the structural model, the potential vote shares of
\emph{acta} $a$ are within-\emph{acta} averages of the corresponding individual
choice probabilities,
\[
  Y_{ac}(0)
  \;=\; \frac{1}{N_a} \sum_{j \in a} \mathbb{P}_j\left(c \mid \mu_j^{(0)}\right),
  \qquad
  Y_{ac}(1)
  \;=\; \frac{1}{N_a} \sum_{j \in a} \mathbb{P}_j\left(c \mid \mu_j^{(1)}\right).
\]
Differencing recovers~\eqref{eq:tauac}, 
$$
\tau_{ac} = N_a^{-1}
\sum_{j \in a} \delta_{jc} = \int \delta_{jc}\, dF_a(j),
$$
where
$F_a$ is the within-\emph{acta} empirical distribution of~\eqref{eq:ate}.
Linearity of conditional expectation, applied to the population of
treated \emph{actas}, then gives
\[
  \tau_c
  \;=\; \mathbb{E}[\tau_{ac} \mid T_\ell = 1]
  \;=\; \mathbb{E}\!\left[
          \int \delta_{jc}\, dF_a(j) \;\Big|\; T_\ell = 1
        \right]
  \;=\; \Delta_c,
\]
which is edge (I).

\medskip
\noindent (II) Population identification. Setting the structural
model aside, the argument is now in the potential-outcomes
framework. Consistency stipulates that the observed share coincides
with the potential share corresponding to the realized treatment,
\[
  Y_{ac}
  \;=\; T_\ell\, Y_{ac}(1) \;+\; (1 - T_\ell)\, Y_{ac}(0),
\]
so the treated factual mean is identified directly:
\[
  \mathbb{E}[Y_{ac}(1) \mid T_\ell = 1]
  \;=\; \mathbb{E}[Y_{ac} \mid T_\ell = 1].
\]
The remaining object is the treated counterfactual mean
$\mathbb{E}[Y_{ac}(0) \mid T_\ell = 1]$, the average vote share treated
\emph{actas} would have produced absent the flash. Assumption~\ref{ass:exog}
supplies conditional ignorability,
\begin{equation}\label{eq:ign}
  Y_{ac}(0) \;\indep\; T_\ell \mid H_a,
\end{equation}
and overlap, $\mathbb{P}(T_\ell = 0 \mid H_a = h) > 0$ for almost every
covariate profile $h$ attained among treated \emph{actas}, which guarantees a
control comparison group for each such profile. Together they give, for
almost every such $h$,
\begin{align}
  \mathbb{E}[Y_{ac}(0) \mid H_a = h, T_\ell = 1]
  &\;\stackrel{(a)}{=}\;
     \mathbb{E}[Y_{ac}(0) \mid H_a = h, T_\ell = 0]
   \notag \\
  &\;\stackrel{(b)}{=}\;
     \mathbb{E}[Y_{ac} \mid H_a = h, T_\ell = 0]
   \;\stackrel{(c)}{=}\; m_0(h),
   \label{eq:abc}
\end{align}
where (a) is ignorability~\eqref{eq:ign}, (b) is consistency  ($Y_{ac} = Y_{ac}(0)$ when $T_\ell = 0$), and (c) is the
definition of $m_0$. Averaging~\eqref{eq:abc} over the covariate
distribution of the treated (by the tower property) identifies the
counterfactual mean:
\begin{equation}\label{eq:tower}
  \mathbb{E}[Y_{ac}(0) \mid T_\ell = 1]
  \;=\; \mathbb{E}[m_0(H_a) \mid T_\ell = 1].
\end{equation}

Edge (I) equates the structural and causal effects, $\tau_c = \Delta_c$.
Edge (II) differences the treated factual mean with the
counterfactual~\eqref{eq:tower},
\[
  \tau_c
  \;=\; \mathbb{E}[Y_{ac} \mid T_\ell = 1]
       \;-\; \mathbb{E}[m_0(H_a) \mid T_\ell = 1],
\]
the observable expression estimated in Section~\ref{sec:method}.
Together they close the triangle and yield~\eqref{eq:tau-identified}.\qedhere
\end{proof}

Two readings of~\eqref{eq:tau-identified} matter for what follows. The
identity $\tau_c = \Delta_c$ ties the empirical target to the model and
licenses the sign predictions of Section~\ref{sec:hypotheses}, while the
difference between the treated mean of $Y_{ac}$ and the control-based
counterfactual is the estimable object the matching procedure of
Section~\ref{sec:method} constructs.

\subsection{Treatment as a change of experiment}\label{subsec:bridge-id}

Proposition~\ref{prop:identification} identifies $\tau_c$ as a causal ATT
but not why it is an information effect. The reason: treatment is a
change of \citet{Blackwell1953} experiment over the state space $\Theta$
of Section~\ref{subsec:bayes-one}. A Sunday voter ($T_\ell = 0$) decides
before the flash, under the null experiment $\pi^{\varnothing}$, whose
single signal leaves the posterior at the prior,
$\mu(\,\cdot \mid \pi^{\varnothing}) = \mu_0$. A Monday voter
($T_\ell = 1$) decides after it, under the flash experiment
\begin{equation*}
  \pi^{\mathrm{F}} \coloneqq \pi^{\eta_D} \otimes \pi^{\eta_I},
  \qquad
  \pi^{\mathrm{F}}(s_D, s_I \mid \theta)
  = \pi^{\eta_D}(s_D \mid \theta)\,\pi^{\eta_I}(s_I \mid \theta),
\end{equation*}
the conditionally independent (Assumption~\ref{ass:cond-indep})
combination of the Datum and Ipsos kernels, with realized value
$(s_D^*, s_I^*) = (\mathrm{RLA}, \mathrm{RS})$ inducing the posterior
$\mu_i^{(1)}$ of Proposition~\ref{prop:joint-post}. The null discards its
signal---it is a garbling of every experiment---so any experiment
Blackwell-dominates it; hence
$\pi^{\mathrm{F}} \;\succeq_B\; \pi^{\varnothing}$,
strictly once at least one flash kernel is informative, i.e.
$\eta_D, \eta_I < 1 - 1/K$. Treatment thus moves voters, comparable in
$H_a$ by Assumption~\ref{ass:exog}, from $\pi^{\varnothing}$ to
$\pi^{\mathrm{F}}$, and $\tau_c$ is the population behavioral value of
that improvement. Three results stack into
\begin{equation}\label{eq:id-chain}
\begin{aligned}
  &\underbrace{\pi^{\mathrm{F}} \;\succeq_B\; \pi^{\varnothing}}_{\text{information design}}
  \;\Longrightarrow\;
  \underbrace{\delta_{jc} = \mathbb{P}_j\left(c \mid \mu_j^{(1)}\right) - \mathbb{P}_j\left(c \mid \mu_j^{(0)}\right)}_{\text{updating / choice}}
  \\[6pt]
  &\qquad\Longrightarrow\;
  \underbrace{\tau_c = \mathbb{E}\!\left[\textstyle\int \delta_{jc}\, dF_a(j) \,\middle|\, T_\ell = 1\right] = \Delta_c}_{\text{structural estimand}}
  \;\overset{\text{Prop.~\ref{prop:identification}}}{=}\;
  \underbrace{\mathbb{E}[Y_{ac} \mid T_\ell = 1] - \mathbb{E}[m_0(H_a) \mid T_\ell = 1]}_{\text{matching estimator}},
\end{aligned}
\end{equation}
whose right-hand side Section~\ref{sec:method} estimates.

\begin{remark}[The sign of the effect is the information-design content]\label{rem:id-sign}
A Blackwell improvement makes the finer experiment weakly more valuable
to a Bayesian voter but says nothing about which $c$ gains. The
direction comes from the updating arrow: which posteriors move
and, through the interaction in~\eqref{eq:vote-prob}, which choice
probabilities rise. Hence the hypotheses of Section~\ref{sec:hypotheses}
concern $\mathrm{sign}\,\tau_c$, and national viability
(experiment-level) can diverge from local gain (updating-level), as
the S\'anchez case shows.
\end{remark}

\section{Hypotheses}\label{sec:hypotheses}
The model delivers five sign predictions for the \emph{acta}-level treatment
effects $\tau_c$ defined in~\eqref{eq:att}, all stated for the treated
sample, which is composed of polling stations located in \textit{Lima Sur}. The flash raises
perceived viability $q_c(\mu_j^s)$ above the prior for three
candidates---Nieto (within a percentage point of both named candidates
in either estimate), López Aliaga (Datum's named candidate), and
Sánchez (Ipsos's named candidate)---confirms Fujimori in every state,
and leaves the rest non-viable. The basic prediction is that the
viable candidates gain and the non-viable candidates lose.

By the viability--affinity interaction in~\eqref{eq:vote-prob}, a rise
in $q_c(\mu_j^s)$ moves voter $j$ toward $c$ in proportion to his
expressive match $M_j(c)$, so within the right-leaning Lima sample the
gains rank by match, not by viability alone. Nieto---off-axis, less
extreme, a clean alternative to both blocks---has the highest match;
López Aliaga is right-aligned but tied to the discredited congressional
pact \citep{Infobae2026LopezAliagaCae}, which lowers his; and Sánchez,
left-anchored with a provincial base \citep{Martinelli2026archipielago},
is expressively distant in Lima. The viable gains therefore rank theoretically
Nieto $>$ López Aliaga $>$ Sánchez.

\begin{hypothesis}[Nieto gains the most]\label{hyp:nieto}
$\tau_{\mathrm{JN}} > 0$ and largest among the viable candidates.
\end{hypothesis}

\begin{hypothesis}[López Aliaga gains an intermediate share]\label{hyp:rla}
$0 < \tau_{\mathrm{RLA}} < \tau_{\mathrm{JN}}$, with
$\tau_{\mathrm{RLA}} \geq \tau_{\mathrm{RS}}$.
\end{hypothesis}

\begin{hypothesis}[Sánchez gains least in Lima]\label{hyp:sanchez}
$\tau_{\mathrm{RS}} > 0$ but smallest among the viable candidates:
despite his high national viability, his low Lima popularity
\citep{Martinelli2026archipielago} mutes the local gain.
\end{hypothesis}

\begin{hypothesis}[Fujimori is unmoved]\label{hyp:keiko}
$\tau_{\mathrm{KF}} \approx 0$ (at most weakly positive). Her runoff
entry is confirmed in every state, so the flash conveys no new
viability about her, and strategic voting does not flow to an
already-secured front-runner.
\end{hypothesis}

\begin{hypothesis}[Non-viable candidates lose]\label{hyp:nonviable}
$\tau_c \leq 0$ for every $c \in \mathcal{C}_{\mathrm{nv}}$, where
\[
  \mathcal{C}_{\mathrm{nv}}
  \;\coloneqq\; \mathcal{C} \setminus
  \bigl( \Theta \cup \{\mathrm{KF}\} \bigr).
\]
The clearest losses fall on Álvarez and López Chau; Belmont also
declines, though his election-night withdrawal makes the sign less
determinate.
\end{hypothesis}

The aggregation constraint $\sum_{c \in \mathcal{C}} Y_{ac} = 1$
implies $\sum_{c \in \mathcal{C}} \tau_c = 0$, so the nonnegative
effects in H\ref{hyp:nieto}--H\ref{hyp:keiko} are offset by the losses
in H\ref{hyp:nonviable}; the leading candidates need not net to zero
individually, since the remainder falls on the long tail of minor
candidates. 
\section{Empirical strategy}\label{sec:method}
Equation~\eqref{eq:tau-identified} reduces identification to
constructing a credible counterfactual mean for the treated
\emph{actas}. With only thirteen treated locations, a propensity-score
model is fit on too few treated units to be reliable---the estimated
scores are noisy and produce extreme weights---and treatment varies
only at the location level. We therefore match at that level:
cardinality matching selects the largest set of control locations
whose average covariates equal those of the treated locations, up to
preset tolerances, on the covariate vector $H_a$
\citep{ZubizarretaParedesRosenbaum2014, ViscontiZubizarreta2018}.
Unlike the propensity score, it imposes covariate balance directly, as
an explicit constraint, rather than relying on a fitted
treatment-probability model. On the matched sample we estimate
$\tau_c$ two ways: a within-group difference-in-means estimator, and a
within-group regression estimator that additionally corrects for any
covariate differences that remain between treated and control
\emph{actas} within a matched group \citep{PageLenardKeele2020}.

\subsection{Data}\label{sec:data}

The analysis sample covers Lima Sur and comprises 354 voting
locations (\emph{locales de votaci\'on}) and 4930 \emph{actas}.
Of these locations, thirteen are treated and 341 belong to the
control pool. The dataset is built from two primary sources: the
Instituto Nacional de Estad\'istica e Inform\'atica (INEI) and the
Oficina Nacional de Procesos Electorales (ONPE).

\subsubsection{Data sources}\label{subsec:data-sources}

The first source is INEI, which provides georeferenced census
information at the block (\emph{manzana}) level. These data are used
to characterize the socioeconomic and demographic conditions of each
voting location's surroundings, including total population, recent
in-migrants, linguistic composition, private health-insurance
coverage, educational attainment, literacy, ethnic self-identification,
and indicators of access to basic services and housing conditions.

The second source is ONPE, which provides the electoral results of
the 2021 and 2026 general elections, as well as information from the
2026 electoral roll. These data supply the electoral outcomes, the
prior-election variables used to characterize historical voting
behavior, and the demographic composition of registered voters at each
location.

\subsubsection{Construction of location-level variables}
\label{subsec:data-construction}

Census variables are linked to each voting location through a spatial
procedure. For each voting location $\ell$, we construct a
1{,}000-meter (1\,km) area of influence---a buffer---around the polling
place and identify the census blocks that fall within it. Let
$B(\ell)$ denote the set of census blocks within this buffer and let
$\mathrm{pop}_b$ be the population of block $b$. For any block-level
census variable $v_b$, the corresponding location-level measure is
computed as the population-weighted average
\[
  V_\ell
  \;=\;
  \frac{\sum_{b \in B(\ell)} \mathrm{pop}_b\, v_b}
       {\sum_{b \in B(\ell)} \mathrm{pop}_b}.
\]
The resulting variables describe the population residing in the
immediate surroundings of each voting location rather than the average
conditions of the administrative district.

This spatial criterion is also consistent with the institutional logic
of voter-location assignment in Peru. Through the \emph{Elige tu local}
program, voters are allowed to select up to three preferred polling
locations, typically based on proximity to their residence, and to rank
them according to their desired priority. The final assignment depends
on the availability of capacity at the selected locations, but the
mechanism nevertheless creates a direct link between voting locations
and the residential areas surrounding them. 

The 2021 electoral results are then linked to the voting locations
used in the analysis. Where a voting location can be tracked across
electoral processes, the correspondence is made directly. Where no
exact match exists, the assignment relies on geographic proximity and
similarity in the number of polling tables (\emph{mesas de votaci\'on}),
so as to identify the most comparable prior-election location.

\subsection{Covariate vector}\label{subsec:covariates}

Matching uses nine location-level covariates, all measured before
13 April 2026 and constant across a location's \emph{actas}. We index
these covariates by the \emph{acta} $a$ under the convention
$H_a = H_{\ell(a)}$, since the balance constraints~\eqref{eq:cm} act
on \emph{acta}-weighted means. The covariates fall into three blocks.

The first block contains four socioeconomic variables constructed
from INEI block-level census data using the spatial procedure described
above:
\[
  V \in
  \{\mathrm{SES}, \mathrm{Indig}, \mathrm{EDU}, \mathrm{Net}\}.
\]
These variables are the socioeconomic stratum
$\mathrm{SES}_\ell$, measured by the INEI stratum index from $1$
lowest to $5$ highest; the indigenous-mother-tongue share
$\mathrm{Indig}_\ell$; the completed-higher-education share
$\mathrm{EDU}_\ell$; and the internet-access share
$\mathrm{Net}_\ell$. The last three variables lie in $[0,1]$.

The second block summarizes prior political orientation using 2021
first-round vote shares. Let $s_{\ell,c}^{2021}$ be candidate $c$'s
share of the valid vote at location $\ell$ in the first round of the
2021 general election, and let $\mathcal{L}^{2021}$,
$\mathcal{R}^{2021}$, and $\mathcal{F}^{2021}$ be the disjoint sets 
\[
\mathcal{L}^{2021} = \{\text{Per\'u Libre, Juntos por el Per\'u, Partido Morado, Acci\'on Popular}\},
\]
\[
\mathcal{R}^{2021} = \{\text{Avanza Pa\'is, Renovaci\'on Popular}\}, \qquad
\mathcal{F}^{2021} = \{\text{Fuerza Popular}\}.
\]
The three political-orientation variables are
\[
  \mathrm{Left}_\ell^{2021}
  =
  \sum_{c \in \mathcal{L}^{2021}} s_{\ell,c}^{2021},
  \qquad
  \mathrm{Right}_\ell^{2021}
  =
  \sum_{c \in \mathcal{R}^{2021}} s_{\ell,c}^{2021},
  \qquad
  \mathrm{Fuji}_\ell^{2021}
  =
  \sum_{c \in \mathcal{F}^{2021}} s_{\ell,c}^{2021}.
\]
All three variables lie in $[0,1]$. Splitting prior orientation into
three blocs rather than imposing one balance constraint per 2021
candidate is required by the narrow treated sample, which cannot
support a separate balance constraint for each candidate
under~\eqref{eq:cm}. Because the three sets are disjoint, the shares do
not double-count votes.

The third block comes from the 2026 electoral roll at the location:
the share of female registered voters
$\mathrm{Fem}_\ell \in [0,1]$ and the mean registered-voter age
$\mathrm{Age}_\ell$, standardized across locations to have mean $0$
and unit variance. The resulting balance vector is
\begin{equation}\label{eq:Hal}
  H_a =
  \bigl(
    \mathrm{SES}_\ell,\;
    \mathrm{Indig}_\ell,\;
    \mathrm{EDU}_\ell,\;
    \mathrm{Net}_\ell,\;
    \mathrm{Left}_\ell^{2021},\;
    \mathrm{Right}_\ell^{2021},\;
    \mathrm{Fuji}_\ell^{2021},\;
    \mathrm{Fem}_\ell,\;
    \mathrm{Age}_\ell
  \bigr).
\end{equation}

Matching is conducted within district. The district marginal is imposed
on~\eqref{eq:cm} as a fine-balance side constraint. The cardinality step
selects $\mathcal{C}^\star$ from the initial control pool
$\mathcal{C}_0$, and the second stage assigns each selected control to
a within-district treated location by Mahalanobis distance, forming the
matched groups $G_i$ described in Section~\ref{subsec:cardinality}.

\subsection{Descriptive statistics}\label{subsec:descriptives}

Table~\ref{tab:descriptives} reports the raw, pre-matching distribution
of the balance covariates and the 2026 outcomes for the thirteen
treated and the 341 control locations. Moments are
\emph{acta}-weighted, so each location enters in proportion to its
number of \emph{actas}, consistent with the \emph{acta}-weighted
estimand $\tau_c$ in~\eqref{eq:att}. The reported $N$ counts voting
locations, which is the level at which treatment is assigned and at
which the covariates are constant.

Two patterns stand out. First, treated locations sit below the control
pool on the socioeconomic gradient: they have a lower socioeconomic
stratum ($2.21$ vs.\ $2.60$), lower completed-higher-education share
($0.249$ vs.\ $0.312$), lower internet access ($0.361$ vs.\ $0.453$),
and a higher indigenous-language share ($0.110$ vs.\ $0.091$). This is
precisely the imbalance that the matching procedure in
Section~\ref{subsec:cardinality} is designed to remove. Second,
political-orientation covariates are already relatively close before
matching: the 2021 left, right, and Fujimori shares differ by at most
2.2 percentage points, and the female-voter share is nearly identical
across groups ($0.492$ vs.\ $0.499$). This pattern is consistent with
Assumption~\ref{ass:exog}: displacement is related to the socioeconomic
profile of the served areas, rather than to their prior electoral
preferences.

The treated group is also tighter, with smaller standard deviations on
every covariate, and its support lies inside the control range
throughout. Thus, common support holds before matching and the matched
controls are drawn by interpolation rather than extrapolation.

Panel~B reports the seven leading candidates' 2026 first-round vote
shares, the outcomes $Y_{ac}$. The role of these gaps differs from that
of the covariate gaps in Panel~A. Covariate imbalance is a nuisance that
the matching design removes, whereas outcome gaps are the quantities of
interest. If the overnight Ipsos and Datum estimates affected Monday
ballots, treated--control differences in vote shares are precisely what
the design should detect.

The raw outcome contrasts are sizable. Treated locations show higher
mean shares for Fujimori ($0.236$ vs.\ $0.201$), S\'anchez
($0.064$ vs.\ $0.031$), L\'opez Aliaga ($0.186$ vs.\ $0.161$), and
especially Nieto ($0.311$ vs.\ $0.148$). They show lower shares for
Belmont ($0.080$ vs.\ $0.098$), \'Alvarez ($0.042$ vs.\ $0.106$), and
L\'opez Chau ($0.015$ vs.\ $0.060$). These raw contrasts, however, do
not by themselves identify the treatment effect because they still
combine the Monday treatment with the socioeconomic imbalance described
in Panel~A. The matched design of Section~\ref{subsec:cardinality}
isolates the treatment component, with estimates and interpretation
reported in Section~\ref{sec:results}.

\subsection{Notation for the matched design}\label{subsec:matched-notation}

The empirical procedure returns, in two stages
(Section~\ref{subsec:cardinality}), a control sample
$\mathcal{C}^\star \subseteq \mathcal{C}_0$ and a partition of
$\mathcal{T} \cup \mathcal{C}^\star$ into thirteen matched groups
\[
  G_i \;=\;
  \left\{ \ell_i,\, \tilde\ell_1^{\,i},\, \tilde\ell_2^{\,i},\,
          \ldots,\, \tilde\ell_{p_i}^{\,i} \right\},
  \qquad i = 1, \ldots, 13,
\]
where $\tilde\ell_r^{\,i} \in \mathcal{C}^\star$ is the $r$-th
control location matched to the treated location $\ell_i$, and $p_i$
is the number of controls in group $G_i$. Group sizes are not fixed:
$p_i$ is determined endogenously by the matching and constrained to
lie between three and ten, $3 \le p_i \le 10$. The lower bound rules
out groups so small that a single atypical control would drive the
within-group contrast; the upper bound keeps a treated stratum from
being diluted by a large, internally heterogeneous control set and
preserves the locality of the comparison. Inside group
$G_i$, the \emph{actas} of the $r$-th control are
\[
  A_{\tilde\ell_r^{\,i}}
  \;=\; \left\{ a_{\tilde\ell_r^{\,i}}^{\,1}, \ldots,
                a_{\tilde\ell_r^{\,i}}^{\,n_{\tilde\ell_r^{\,i}}}
        \right\},
\]
and the total number of control \emph{actas} in $G_i$ is $N_i^C \coloneqq
\sum_{r=1}^{p_i} n_{\tilde\ell_r^{\,i}}$. Aggregating across groups,
$N_C \coloneqq \sum_{i=1}^{13} N_i^C$ is the size of the matched
control sample at the \emph{acta} level.
The outcome at \emph{acta} $a \in A_{\ell_i} \cup \bigcup_{r}
A_{\tilde\ell_r^{\,i}}$ is the share $Y_{ac}$ for candidate $c$.
Treatment is constant within group: every \emph{acta} in $A_{\ell_i}$ has
$T_\ell = 1$ and every \emph{acta} in $\bigcup_{r}
A_{\tilde\ell_r^{\,i}}$ has $T_\ell = 0$. Each treated location
therefore defines its own matched stratum, and identification
within $G_i$ is the comparison between treated \emph{actas} at $\ell_i$
and the pooled \emph{actas} of its $p_i$ matched controls. Finally,  $n_i \equiv n_{\ell_i} = |A_{\ell_i}|$.

\subsection{Cardinality matching at the location level}\label{subsec:cardinality}
 
Cardinality matching selects the largest control sample that satisfies
pre-specified balance constraints, rather than minimizing a pairwise
distance \citep{ZubizarretaParedesRosenbaum2014, ViscontiZubizarreta2018}.
With only thirteen treated clusters, propensity-score models are unstable;
fixing the balance target ex ante and maximizing the matched sample
subject to it is more robust. Let $q_\ell \in \{0,1\}$ indicate inclusion
of control location $\ell \in \mathcal{C}_0$. The program maximizes the
matched control sample, measured in \emph{actas}, subject to
\emph{acta}-weighted balance on every component of $H_a$:
\begin{equation}\label{eq:cm}
  \max_{\{q_\ell\}} \; \sum_{\ell \in \mathcal{C}_0} q_\ell\, n_\ell
  \qquad \text{s.t.} \qquad
  \frac{\bigl| \bar H_{j,T} - \bar H_{j,\mathcal{C}^\star} \bigr|}{s_j}
   \;\leq\; \delta_j, \qquad j = 1, \ldots, J,
\end{equation}
where $J = \dim(H_a)$, $s_j$ is the pooled standard deviation of
$H_{j,a}$, and the left-hand side is the absolute standardized mean
difference (SMD) of covariate $j$ between the treated sample and the
selected controls. The \emph{acta}-weighted means are
\[
  \bar H_{j,T}
   = \frac{1}{N_T} \sum_{i=1}^{13} \sum_{a \in A_{\ell_i}} H_{j,a},
  \qquad
  \bar H_{j,\mathcal{C}^\star}
   = \frac{1}{N_{\mathcal{C}^\star}} \sum_{\ell \in \mathcal{C}_0}
     q_\ell \sum_{a \in A_\ell} H_{j,a},
\]
with $N_{\mathcal{C}^\star} = \sum_{\ell} q_\ell n_\ell$ and selected
controls $\mathcal{C}^\star = \{\ell : q_\ell = 1\}$. The baseline imposes
a strict balance rule: every covariate must meet $\delta_j = 0.10$, that
is $|\mathrm{SMD}_j| \le 0.10$ for all $j$ with no exceptions. The $0.10$
ceiling is the conventional threshold below which covariate imbalance is
treated as negligible \citep{Austin2009, Stuart2010}; entering it as a
hard, per-covariate constraint---rather than as an average or an
aggregate target---ensures no single covariate is left imbalanced in
exchange for tighter balance elsewhere. Section~\ref{subsec:robustness}
reruns the match at the tighter and looser caps $\delta_j \in \{0.05,
0.15\}$ to confirm the result is not an artifact of the $0.10$ cutoff.
Fine balance on the district marginal enters as a side constraint: a
treated district must contribute the same share of selected controls as
of treated locations. Problem~\eqref{eq:cm} is solved in Python.
 
Problem~\eqref{eq:cm} selects which control locations enter the sample,
but not which of them serves as a comparison for each treated location.
The estimators below need that assignment: each treated location must
sit in a group with its own comparable controls. A second stage
therefore partitions $\mathcal{T} \cup \mathcal{C}^\star$ into the
thirteen groups $G_i$, one per treated location, with each treated
location receiving between three and ten controls. Controls are assigned by Mahalanobis distance on the continuous
sub-vector $Z_\ell^{\mathrm{cont}} :=(\mathrm{SES}_\ell,\;
    \mathrm{EDU}_\ell,\;
    \mathrm{Left}_\ell^{2021},\;
    \mathrm{Right}_\ell^{2021},\;
    \mathrm{Fuji}_\ell^{2021},\;
    \mathrm{Fem}_\ell,\;
    \mathrm{Age}_\ell)$,
\[
  d(\ell, \ell')
   = (Z_\ell^{\mathrm{cont}} - Z_{\ell'}^{\mathrm{cont}})^\top
     \Sigma_{Z^{\mathrm{cont}}}^{-1}
     (Z_\ell^{\mathrm{cont}} - Z_{\ell'}^{\mathrm{cont}}),
\]
with $\Sigma_{Z^{\mathrm{cont}}}$ the empirical covariance of
$Z_\ell^{\mathrm{cont}}$, so that a small $d$ marks two very similar
locations. A control may be assigned to a treated location only if it
lies inside a Mahalanobis caliper, $d(\ell, \ell') \le 3$; controls
beyond it are left unassigned rather than forced into a poorly matched
group. Controls
failing the caliper are dropped from the matched sample, so the balance
reported in Section~\ref{sec:results} is computed on the retained
controls; otherwise the stage only groups the already-balanced controls
into locally homogeneous comparison sets.
 
\subsection{Two estimators on the matched sample}\label{subsec:estimators}
 
We estimate $\tau_c$ on the matched groups $\{G_i\}_{i=1}^{13}$ with two
estimators that share this design but differ in how they handle residual
within-group imbalance. The first is a nonparametric difference-in-means
(DM) estimator. The treated and control \emph{acta}-weighted means in
$G_i$ are
\begin{equation}\label{eq:groupmeans}
  \bar Y_{i,c}^{\,T}
  \;=\; \frac{1}{n_i} \sum_{a \in A_{\ell_i}} Y_{ac},
  \qquad
  \bar Y_{i,c}^{\,C}
  \;=\; \frac{1}{N_i^C} \sum_{r=1}^{p_i}
        \sum_{a \in A_{\tilde\ell_r^{\,i}}} Y_{ac},
\end{equation}
with group contrast $\hat\Delta_{i,c} \coloneqq \bar Y_{i,c}^{\,T} -
\bar Y_{i,c}^{\,C}$. The ATT estimator weights each group by its
\emph{acta} share of the treated sample, $w_i \coloneqq n_i / N_T$:
\begin{equation}\label{eq:dim}
  \widehat\tau_c^{\,\mathrm{DM}}
  \;=\; \sum_{i=1}^{13} w_i\, \hat\Delta_{i,c}.
\end{equation}
These weights reproduce the \emph{acta}-weighted ATT of~\eqref{eq:att}: a
treated location with more \emph{actas} counts proportionally more.
Equation~\eqref{eq:dim} imposes no functional form linking $Y_{ac}$ to
$H_a$; it relies entirely on the balance from~\eqref{eq:cm}. As a
robustness check, Section~\ref{subsec:robustness} also reports the
location-weighted variant (superscript $L$), which weights each treated
location equally regardless of its number of \emph{actas}:
\begin{equation}\label{eq:tauL}
  \widehat\tau_c^{\,\mathrm{DM}, L}
  \;=\; \frac{1}{13} \sum_{i=1}^{13} \hat\Delta_{i,c}.
\end{equation}
 
The second estimator adjusts for residual imbalance in the
\emph{acta}-level demographics $X_a$ (age and female participation), following standard practice for
clustered observational studies \citep{PageLenardKeele2020}. Group by
group, run
\begin{equation}\label{eq:reg-group}
  Y_{ac} \;=\; \alpha_{i,c} \;+\; \beta_{i,c}\, T_\ell
              \;+\; X_a^\top \psi_{i,c} \;+\; u_{ac},
  \qquad a \in G_i,
\end{equation}
where $T_\ell = 1$ for $a \in A_{\ell_i}$ and $0$ otherwise, and
$\psi_{i,c}$ is the within-group slope vector on $X_a$.

The location covariates $Z_\ell$ are omitted. They are constant on the treated arm (one treated location per group), but on the control arm $Z_\ell$ varies across the matched controls and is therefore not swept out by the group fixed effect; what makes its omission innocuous is the balance constraint~\eqref{eq:cm}, which holds each treated location's $Z_\ell$ to its matched-control mean within the imposed tolerance, so the residual imbalance in $Z_\ell$ is minimized by design rather than identically absorbed by the fixed effects. The group-by-group aggregate of the within-group treatment effects, with
the same weights,
\begin{equation}\label{eq:reg-agg}
  \sum_{i=1}^{13} w_i\, \hat\beta_{i,c},
\end{equation}
uses the group-specific slopes $\psi_{i,c}$ of~\eqref{eq:reg-group}. We
instead compute the more parsimonious pooled, \emph{acta}-weighted
regression
\begin{equation}\label{eq:reg-pool}
  Y_{ac} \;=\; \alpha_{g(\ell), c} \;+\; \tau_c\, T_\ell
              \;+\; X_a^\top \psi_c \;+\; \varepsilon_{ac},
\end{equation}
with a matched-group fixed effect $\alpha_{g(\ell), c}$ for the group
containing $\ell$, a single treatment coefficient $\tau_c$, and slopes
$\psi_c$ common across groups, each \emph{acta} weighted so that the
treated and control arms of $G_i$ each carry total weight $w_i$ (each
treated \emph{acta} gets $1/N_T$; each control \emph{acta} in $G_i$ gets
$w_i/N_i^C$); we set $\widehat\tau_c^{\,\mathrm{R}}$ equal to the OLS
estimate of $\tau_c$ in~\eqref{eq:reg-pool}. Dropping $X_a$ recovers
$\widehat\tau_c^{\,\mathrm{DM}}$ exactly. The two regression adjustments
coincide only when the covariate slopes are homogeneous across groups
($\psi_{i,c} = \psi_c$) or $X_a \perp T_\ell$ within every $G_i$; with
heterogeneous slopes \eqref{eq:reg-agg} and~\eqref{eq:reg-pool} differ,
because the latter pools the covariate adjustment across groups while
the former does not.
 
The two estimators differ only in how they treat residual within-group
imbalance: $\widehat\tau_c^{\,\mathrm{DM}}$ takes the raw contrast,
$\widehat\tau_c^{\,\mathrm{R}}$ adjusts for $X_a$ through the common
slopes $\psi_c$ of~\eqref{eq:reg-pool}. They coincide when $X_a \perp T_\ell$ within $G_i$, and
their gap measures how much the adjustment matters. We report
$\widehat\tau_c^{\,\mathrm{DM}}$ as the headline estimate and
$\widehat\tau_c^{\,\mathrm{R}}$ as the regression-adjusted check.
 
\subsection{Balance diagnostics}\label{subsec:diagnostics}
 
The design's validity rests on the matched controls reproducing the
treated covariate distribution. The main diagnostic is the post-match
SMD of each component of $H_a$,
\begin{equation}\label{eq:smd}
  \mathrm{SMD}_j
  \;=\; \frac{\bar H_{j, T} - \bar H_{j, \mathcal{C}^\star}}{s_j},
\end{equation}
with the practical threshold $|\mathrm{SMD}_j| < 0.10$;
Table~\ref{tab:balance} reports the before- and after-matching SMDs. The
full battery---per-covariate Kolmogorov--Smirnov statistics,
before-and-after density plots of $H_{j,a}$, the geographic spread of
treated and selected control locations, and the matched-sample sizes
$|\mathcal{C}^\star|$ and $N_C$ (which show whether the comparison rests
on a wide control pool or a few influential units)---is reported in
Appendix~\ref{app:balance}.
 
\subsection{Sensitivity and robustness}\label{subsec:robustness}
 
Given the small treated sample, the estimate is stress-tested four ways.
\begin{enumerate}
\item Leave-one-treated-location-out: recompute the estimator dropping
each treated location in turn, $\widehat\tau_{c, (-i)}^{\,\mathrm{ATT}}$
for $i = 1, \ldots, 13$. A sign reversal or a sharp change in magnitude
driven by one location would warrant interpretive caution.
\item Location-weighted estimand: $\widehat\tau_c^{\,\mathrm{DM}, L}$
of~\eqref{eq:tauL}, equal weight per treated location.
\item Balance tolerance: resolve~\eqref{eq:cm} for $\delta_j \in
\{0.05, 0.10, 0.15\}$.
\item Rosenbaum bounds: under the matched-group sensitivity framework, the sensitivity parameter
$\Gamma$ bounds how far two matched voting locations with identical
observed covariates could differ in their odds of treatment exposure due
to unobserved factors. Following the matched-group design adopted in this
study, sensitivity is evaluated using an aggregated signed-score
statistic based on the within-group treatment--control contrasts
$\widehat{\Delta}_{i,c}$. We report $\Gamma^\star$, defined as the
largest value of $\Gamma$ for which the sharp null hypothesis of no
treatment effect is still rejected at the $5\%$ significance level
(Table~\ref{tab:rosen}). Higher values of $\Gamma^\star$ indicate that
the estimated treatment effect is more robust to potential unobserved
confounding, whereas values close to one suggest that relatively small
departures from the selection-on-observables assumption would be
sufficient to overturn the inference
\citep{Rosenbaum2002}.
 
\end{enumerate}
\subsection{Limitations}\label{subsec:limitations}
 
The main limitation is the small number of treated locations: although
the control pool is large, the independent causal information comes from
$L_T = 13$ clusters, so the estimates are evidence for the displaced
locations rather than a generalizable nationwide effect. Second, the
match balances only observed covariates, so sensitivity to unobserved
confounding remains a concern the design cannot rule out.

\section{Results}\label{sec:results}
 
\subsection{Reduced-form effects}\label{subsec:reduced-form}
 
The baseline matched design retains the thirteen treated locations
(187 \emph{actas}) and selects 178 control locations ($2386$
\emph{actas}) by cardinality matching; the within-district Mahalanobis
caliper then retains 116 of these controls ($1722$ \emph{actas}), so the
estimation sample comprises 129 matched locations---13 treated and 116
control---with $1909$ \emph{actas} in total.
Table~\ref{tab:balance} reports the standardized mean difference
(SMD) for each matching covariate before and after matching. The
largest absolute imbalance falls from $0.79$ in the full control
pool to $0.100$ in the matched sample, and every covariate clears
the imposed tolerance, so the treated and matched-control
distributions are comparable on observables.

Table~\ref{tab:att} reports the two estimators of $\tau_c$ for the
seven leading candidates: the \emph{acta}-weighted difference-in-means
estimator $\widehat\tau_c^{\,\mathrm{DM}}$ of~\eqref{eq:dim} and regression-adjusted estimator $\widehat\tau_c^{\,\mathrm{R}}$
of~\eqref{eq:reg-pool} with cluster-robust inference. The two
agree in sign and magnitude for every candidate, so the adjustment
for residual demographic imbalance does little work and the global
balance of~\eqref{eq:cm} carries the identification.

Three patterns emerge. First, the three candidates the flash
estimates rendered viable---Nieto, L\'opez Aliaga, and
S\'anchez---all gain significant vote share among Monday voters
relative to matched Sunday locales, and the leading non-viable
candidates---\'Alvarez, Belmont, and L\'opez Chau---all lose share.
This is the sign pattern the strategic-reallocation logic of the
model predicts (H\ref{hyp:nieto}--H\ref{hyp:sanchez},
H\ref{hyp:nonviable}). Second, the incumbent runoff entrant Fujimori
gains a small but significant share ($+2.6$ percentage points),
consistent with the weak-reinforcement prediction of
H\ref{hyp:keiko}. Third, the reallocation is overwhelmingly
concentrated on Nieto, whose $+17.3$-point gain is three to seven
times the gain of any other candidate. The seven
leading candidates' effects sum to $+0.134$, so the remaining $13.4$
points of displaced Monday share are drawn from the long tail of
minor candidates, consistent with the adding-up implication of
H\ref{hyp:nonviable}.
 
\subsection{Reading the viability--affinity interaction from the magnitudes}\label{subsec:reading}
 
The estimated magnitudes track the affinity ranking of
Section~\ref{sec:hypotheses}. The signs reproduce the model's core
division of the ballot---share moves toward the viable candidates
and away from the rest---and the within-viable ordering is
Nieto $>$ L\'opez Aliaga $>$ S\'anchez, exactly as
H\ref{hyp:nieto}--H\ref{hyp:sanchez} anticipate. Nieto, the less
extreme option standing outside the discredited left--right blocs,
combines the viability the flash conferred with the highest
expressive match in the displaced electorate and captures the
largest gain; L\'opez Aliaga, viable and ideologically aligned but
weighed down by his identification with the congressional pact,
gains an intermediate share; S\'anchez, viable nationally but
expressively distant in this right-leaning metropolitan sample,
gains the least. Because H1–H3 are ordinal, the size of the gap is not a departure from the predictions but a quantitative reading of them: Nieto's gain is several times López Aliaga's, which underscores how strongly the off-axis clean attribute operates in the displaced electorate relative to the two right-aligned alternatives. The
aggregate magnitudes are consistent with the viability--affinity
interaction but cannot, on their own, separate affinity from
viability: a candidate's national viability and his local gain need
not coincide, as the S\'anchez case makes clear.
 
\subsection{Robustness}\label{subsec:results-robustness}
 
The headline pattern survives every exercise of the sensitivity
battery of Section~\ref{subsec:robustness}.
 
\paragraph{Location-weighted estimand.} Re-weighting so that each
treated location counts equally, rather than in proportion to its
number of \emph{actas}, leaves every sign and significance unchanged
(Table~\ref{tab:lw}). The point estimates are close to the
\emph{acta}-weighted ones---Nieto $+0.176$, L\'opez Aliaga $+0.054$,
S\'anchez $+0.025$---so the result is not driven by the larger
treated locations.

\paragraph{Leave-one-treated-location-out.} Recomputing the
estimator thirteen times, excluding one treated location at a time,
produces no sign reversal and no loss of significance at the $5\%$
level for any candidate (Table~\ref{tab:loo}). The ranges are tight:
Nieto's effect stays within $[0.170,\,0.178]$, and the group-by-group
decomposition is broad-based---each viable candidate's within-group
contrast is positive in at least $11$ of the $13$ groups and no single
group accounts for more than a quarter of any aggregate effect.

\paragraph{Balance tolerance.} Re-solving the cardinality-matching
problem~\eqref{eq:cm} at $\delta \in \{0.05, 0.10, 0.15\}$ moves the
estimates negligibly (Table~\ref{tab:tol}); the result is not an
artifact of a single tolerance. Tighter tolerances select fewer
cardinality-selected cardinality-selected controls (163, 178, 200 locations) at a better post-match balance, and all estimates remain significant at the $1\%$ level.

\paragraph{Rosenbaum bounds.} Under the matched-group sensitivity framework described in
Section~\ref{sec:method}, Table~\ref{tab:rosen} reports the
worst-case $p$-value associated with each electoral outcome as the
sensitivity parameter $\Gamma$ increases. For each outcome, the
procedure evaluates how large an unobserved source of selection would
need to be to overturn the inference obtained from the matched-group
treatment--control contrasts. The reported value $\Gamma^\star$
corresponds to the largest degree of hidden bias for which the sharp
null hypothesis of no treatment effect continues to be rejected at the
$5\%$ significance level. Larger values of $\Gamma^\star$ indicate
greater robustness to unobserved confounding, as the estimated effect
remains statistically significant even when matched voting locations are
allowed to differ substantially in their odds of treatment exposure due
to factors not captured by the observed covariates. Conversely, values
of $\Gamma^\star$ close to one suggest that relatively small departures
from the selection-on-observables assumption would be sufficient to
eliminate statistical significance.

\textcolor{red}{\textbf{Note:} The Rosenbaum bounds reported in this section are preliminary. The aggregated signed-score statistic is appropriate for matched pairs but does not faithfully characterize sensitivity in $1{:}k$ sets with a varying number of controls: in a set with $p_i$ controls, permuting the treatment assignment yields $1+p_i$ distinct contrasts rather than a sign flip of the average contrast over $\pm\hat\Delta_{i,c}$. The current implementation also decouples the inference base (within-group permutation) from the sensitivity base and does not reproduce the permutation $p$-value at $\Gamma=1$. This will be corrected in the next version with a faithful matched-sets implementation: per-location scores, binary worst-case within-set tilt under $\Gamma$, and exact combination across the 13 independent groups.}

\section{Conclusion}\label{sec:conclusion}
We study how election-night flash estimates reshape voting in a
fragmented plurality election. A Bayesian-updating model in which
voter $j$'s payoff from candidate $c$ is
$\alpha_j M_j(c) + \beta_j\, M_j(c)\,\mathbf{1}\{\cdot\}$ yields a
viability--affinity interaction: a public signal raises perceived
viability $q_c$ uniformly, but the vote it reallocates is scaled by the
expressive match $M_j(c)$, making reallocation a within-neighborhood
phenomenon rather than a mechanical flow toward whoever the flash names.
We exploit a natural experiment---the \emph{Jurado Nacional de
Elecciones} extended voting in 13 Lima locations to the Monday after
election day, exposing 187 \emph{actas} to the overnight Ipsos and Datum
estimates while comparable Sunday locations were not---and estimate
candidate-level effects on 2026 first-round shares by cardinality
matching (129 matched locations: 13 treated and 116 control; 1909 \emph{actas}) with cluster-robust
inference. The acta-weighted difference-in-means and within-group
regression estimators agree in sign and magnitude for every candidate, so the adjustment
for residual demographic imbalance does little work.
Identification, however, rests on within-group comparability rather
than on the pooled balance alone: the estimators weight each group's
control \emph{actas} by $w_i/N_i^C$, which departs from the
equal-\emph{acta} weighting $1/N_{\mathcal{C}^\star}$ of the pooled
SMDs in Table~\ref{tab:balance} once group sizes differ. Balance under
this weighting is secured by the within-district Mahalanobis caliper of
Section~\ref{subsec:cardinality}, which keeps each treated location
close to its own matched controls, while~\eqref{eq:cm} delivers the
global marginal balance.

Our contribution is to identify the causal effect of a late viability
signal on vote reallocation and to show that it runs through the local
distribution of expressive affinities, not the signal alone; the model
is the organizing device for this claim, not a stand-alone result.

Information has a significant, robust effect. The flash raised the shares
of the three candidates it rendered viable (Nieto, L\'opez Aliaga,
S\'anchez; H\ref{hyp:nieto}--H\ref{hyp:sanchez}), weakly reinforced
Fujimori ($+2.6$ points; H\ref{hyp:keiko}), and lowered the non-viable
candidates' shares (H\ref{hyp:nonviable}); every effect is significant at
$1\%$ and survives the leave-one-out, weighting, tolerance, and
Rosenbaum checks. The
magnitudes follow $M_j(c)$: because $q_c$ moves uniformly while the
reallocation is scaled by match, the ordering
Nieto $>$ L\'opez Aliaga $>$ S\'anchez tracks affinity, not national
viability. Nieto, the off-axis ``clean'' option with the highest match,
takes the bulk ($+17.3$ points); L\'opez Aliaga's match is depressed by
his tie to the congressional pact ($+5.1$); S\'anchez, the eventual
runoff entrant, is expressively most distant in this right-leaning
sample and gains least ($+2.7$). A candidate can thus advance nationally
on a signal yet move few local votes, because viability travels with the
signal while its reallocation is governed by local affinity.

\newpage
\printbibliography

\appendix

\section{Tables}

\begin{table}[H]
\centering
\caption{Ideological Profile of Government Plans (\emph{Planes de Gobierno})}
\label{tab:ideology}
\small
\begin{tabular}{p{3.8cm} p{10.5cm}}
\hline\hline
\textbf{Party} & \textbf{Government Plan Summary} \\
\hline
Juntos por el Per\'u &
  Consistently left-wing across all dimensions: protectionist trade policy,
  emphasis on social reinsertion over punitive security, expanded public
  spending, nationalist foreign policy, and progressive social values. \\[6pt]

Primero La Gente &
  Centre-left: moderately protectionist, leans toward reinsertion-based
  security, slight preference for public spending, centrist on foreign
  policy, and moderately progressive on social values. \\[6pt]

Ahora Naci\'on &
  Centre-left with a strong progressive stance on social values;
  moderately protectionist, centrist on security and foreign policy,
  and leans toward public spending. \\[6pt]

Buen Gobierno &
  Similar to Ahora Naci\'on: moderately protectionist, centrist on
  security and state size, neutral on foreign policy, and markedly
  progressive on social values. \\[6pt]

Partido C\'ivico Obras &
  Centrist: neutral on economic and state-size issues, slightly
  favors reinsertion-based security, leans globalist on foreign
  policy, and centrist on social values. \\[6pt]

Pa\'is Para Todos &
  Centre-right: neutral on trade, leans toward \emph{mano dura}
  security, centrist on state size, moderately globalist, and
  centrist-to-conservative on social values. \\[6pt]

Renovaci\'on Popular &
  Right-wing: favors free-market economics, \emph{mano dura}
  security, bureaucratic reduction, but leans nationalist on
  foreign policy and conservative on social values. \\[6pt]

Fuerza Popular &
  Right-wing on most dimensions: free-market orientation, clearly
  \emph{mano dura} on security, favors state reduction, centrist
  on foreign policy, and conservative on social values. \\
\hline\hline
\end{tabular}
\begin{minipage}{\textwidth}
\vspace{4pt}
\footnotesize
\textit{Source:} Authors' reading of the ideological-axis categorization
by Tu Voto Per\'u (\url{https://www.tuvotoperu.com/comparar}), which
classifies each party's registered \emph{plan de gobierno} along five
dimensions: economic policy (protectionism vs.\ free market), security
(social reinsertion vs.\ \emph{mano dura}), state size (public spending
vs.\ bureaucratic reduction), foreign policy (nationalism vs.\
globalism), and social values (progressivism vs.\ conservatism). These
positions reflect the written government plans filed with the
\emph{Jurado Nacional de Elecciones}; they do not necessarily capture
actual campaign discourse or the information received by voters.
\end{minipage}
\end{table}

\begin{table}[H]
\centering
\caption{Election-night flash estimates, 12 April 2026, 18:00}
\label{tab:flashpolls}
\begin{tabular}{lcc}
\toprule
Candidate & Datum (\%) & Ipsos (\%) \\
\midrule
Keiko Fujimori          & 16.5 & 16.6 \\
Rafael L\'opez Aliaga   & 12.8 & 11.0 \\
Jorge Nieto             & 11.6 & 10.7 \\
Ricardo Belmont         & 10.5 & 11.8 \\
Roberto S\'anchez       & 10.0 & 12.1 \\
Alfonso L\'opez Chau    &  8.6 &  7.1 \\
Carlos \'Alvarez        &  7.1 &  7.0 \\
Others                  & 22.9 & 23.7 \\
\bottomrule
\end{tabular}
\begin{minipage}{\textwidth}
\vspace{4pt}
\footnotesize
\textit{Source:} Authors' compilation from
\citet{infobae2026flashIpsos, infobae2026flashDatum}.
\end{minipage}
\end{table}

\begin{table}[H]
\centering
\caption{Profile of the seven leading candidates' voters, IEP April II 2026 (column \%).}
\label{tab:iep_profiles}

\definecolor{kf}{RGB}{255, 230, 210}
\definecolor{rla}{RGB}{220, 240, 255}
\definecolor{jn}{RGB}{240, 230, 255}
\definecolor{rb}{RGB}{200, 230, 200}
\definecolor{rs}{RGB}{230, 255, 230}
\definecolor{ca}{RGB}{255, 250, 210}
\definecolor{lch}{RGB}{255, 230, 230}

\definecolor{kf_text}{RGB}{220, 120, 60}
\definecolor{rla_text}{RGB}{60, 140, 200}
\definecolor{jn_text}{RGB}{130, 90, 180}
\definecolor{rb_text}{RGB}{60, 120, 60}
\definecolor{rs_text}{RGB}{80, 160, 80}
\definecolor{ca_text}{RGB}{200, 170, 50}
\definecolor{lch_text}{RGB}{200, 80, 80}

\begin{tabular}{l
>{\columncolor{kf}}c
>{\columncolor{rs}}c
>{\columncolor{rla}}c
>{\columncolor{jn}}c
>{\columncolor{rb}}c
>{\columncolor{ca}}c
>{\columncolor{lch}}c}
\toprule
 & \textcolor{kf_text}{\textbf{KF}} & \textcolor{rs_text}{\textbf{RS}} & \textcolor{rla_text}{\textbf{RLA}} & \textcolor{jn_text}{\textbf{JN}} & \textcolor{rb_text}{\textbf{RB}} & \textcolor{ca_text}{\textbf{CA}} & \textcolor{lch_text}{\textbf{LCH}} \\
\midrule
\textit{Sex} \\
\hspace{0.3cm}Men    & 48.3 & 58.9 & 49.2 & 40.0 & 51.4 & 47.6 & \textbf{62.0} \\
\hspace{0.3cm}Women  & 51.7 & 41.1 & 50.8 & \textbf{60.0} & 48.6 & 52.4 & 38.0 \\
\addlinespace
\textit{Age} \\
\hspace{0.3cm}18--29 & 24.7 & 13.5 & 14.8 & \textbf{50.0} & 29.9 & 23.8 & 14.1 \\
\hspace{0.3cm}30--49 & 40.4 & \textbf{56.2} & 36.7 & 35.7 & 39.3 & 54.8 & 43.7 \\
\hspace{0.3cm}50+    & 34.8 & 30.3 & \textbf{48.4} & 14.3 & 30.8 & 21.4 & 42.3 \\
\addlinespace
\textit{Education} \\
\hspace{0.3cm}Basic     & 68.5 & \textbf{70.3} & 21.1 & 20.7 & 49.5 & 47.6 & 29.6 \\
\hspace{0.3cm}Superior  & 31.5 & 29.7 & 78.9 & \textbf{79.3} & 50.5 & 52.4 & 70.4 \\
\addlinespace
\textit{When decided} \\
\hspace{0.3cm}Same Sunday  & 20.2 & 17.8 & 11.7 & 14.3 & 13.1 & \textbf{31.0} & 16.9 \\
\hspace{0.3cm}Final week   & 16.9 & 24.9 & 21.1 & 33.6 & \textbf{38.3} & 31.0 & 16.9 \\
\hspace{0.3cm}During March & 4.5  & 13.5 & 15.6 & \textbf{34.3} & 29.9 & 21.4 & 16.9 \\
\hspace{0.3cm}Long ago     & \textbf{57.9} & 41.1 & 51.6 & 17.9 & 18.7 & 16.7 & 49.3 \\
\bottomrule
\end{tabular}
\par\medskip
\footnotesize
\textit{Notes.} Bold marks the highest cell within each row across
the seven candidates. Column ordering follows the IEP report's photo
sequence. KF = Keiko Fujimori, RS = Roberto S\'anchez,
RLA = Rafael L\'opez Aliaga, JN = Jorge Nieto, RB = Ricardo Belmont,
CA = Carlos \'Alvarez, LCH = L\'opez Chau.
Source: Instituto de Estudios Peruanos (2026), p.~7.
\end{table}

\begin{table}[H]
\centering
\caption{Model predictions for the treated (Lima) sample and their reading}
\label{tab:predictions}
\small
\begin{tabular}{p{1.1cm} p{2.4cm} p{2.6cm} p{6.0cm}}
\toprule
& Candidate & Predicted sign & Reading under the \\
& $c$ & of $\tau_c$ (Lima) & viability--affinity interaction \\
\midrule
H\ref{hyp:nieto}       & Nieto            & $+$ (largest) & viable; highest expressive match as a less extreme, off-axis ``clean'' option \\
H\ref{hyp:rla}         & L\'opez Aliaga   & $+$ (intermediate)  & viable and right-aligned, but match depressed by his tie to the congressional pact \\
H\ref{hyp:sanchez}     & S\'anchez        & $+$ (smallest)  & viable nationally but expressively distant in the right-leaning Lima sample \\
H\ref{hyp:keiko}       & Fujimori         & $\gtrsim 0$ (small) & confirmed in every state; flash adds no viability information \\
H\ref{hyp:nonviable}   & non-viable cands.& $\leq 0$ & lose mass to the viable, expressively close candidates \\
\bottomrule
\end{tabular}
\end{table}

{\footnotesize
\setlength{\tabcolsep}{4pt}
\begin{longtable}[c]{llrrrrrrrr}
\caption{Descriptive statistics of the balance covariates and outcomes, before matching.}
\label{tab:descriptives}\\
\toprule
Covariate & Group & $N$ & Mean & SD & Min & P25 & P50 & P75 & Max \\
\midrule
\endfirsthead
\multicolumn{10}{l}{\itshape Table~\ref{tab:descriptives} (continued)}\\[2pt]
\toprule
Covariate & Group & $N$ & Mean & SD & Min & P25 & P50 & P75 & Max \\
\midrule
\endhead
\midrule
\multicolumn{10}{r}{\itshape Continued on next page}\\
\endfoot
\bottomrule
\endlastfoot
\multicolumn{10}{l}{\textit{Panel A. Balance covariates}} \\
\midrule
Socioeconomic stratum $\mathrm{SES}_\ell$
 & Control & $341$ & $2.599$ & $0.626$ & $1.000$ & $2.187$ & $2.504$ & $3.000$ & $4.325$ \\
 & Treated & $13$  & $2.207$ & $0.375$ & $1.835$ & $1.985$ & $2.114$ & $2.376$ & $3.675$ \\
\addlinespace
Indigenous-language share $\mathrm{Indig}_\ell$
 & Control & $341$ & $0.091$ & $0.040$ & $0.019$ & $0.066$ & $0.086$ & $0.112$ & $0.250$ \\
 & Treated & $13$  & $0.110$ & $0.043$ & $0.049$ & $0.090$ & $0.111$ & $0.124$ & $0.228$ \\
\addlinespace
Higher-education share $\mathrm{EDU}_\ell$
 & Control & $341$ & $0.312$ & $0.086$ & $0.124$ & $0.261$ & $0.282$ & $0.347$ & $0.597$ \\
 & Treated & $13$  & $0.249$ & $0.048$ & $0.190$ & $0.228$ & $0.240$ & $0.266$ & $0.442$ \\
\addlinespace
Internet-access share $\mathrm{Net}_\ell$
 & Control & $341$ & $0.453$ & $0.117$ & $0.084$ & $0.393$ & $0.432$ & $0.507$ & $0.850$ \\
 & Treated & $13$  & $0.361$ & $0.092$ & $0.258$ & $0.305$ & $0.330$ & $0.459$ & $0.627$ \\
\addlinespace
2021 left share $\mathrm{Left}_\ell^{2021}$
 & Control & $341$ & $0.247$ & $0.036$ & $0.128$ & $0.226$ & $0.243$ & $0.261$ & $0.446$ \\
 & Treated & $13$  & $0.254$ & $0.048$ & $0.161$ & $0.226$ & $0.256$ & $0.264$ & $0.342$ \\
\addlinespace
2021 right share $\mathrm{Right}_\ell^{2021}$
 & Control & $341$ & $0.301$ & $0.062$ & $0.159$ & $0.262$ & $0.287$ & $0.329$ & $0.737$ \\
 & Treated & $13$  & $0.279$ & $0.028$ & $0.240$ & $0.267$ & $0.270$ & $0.295$ & $0.350$ \\
\addlinespace
2021 Fujimori share $\mathrm{Fuji}_\ell^{2021}$
 & Control & $341$ & $0.163$ & $0.037$ & $0.055$ & $0.139$ & $0.165$ & $0.187$ & $0.282$ \\
 & Treated & $13$  & $0.169$ & $0.027$ & $0.113$ & $0.151$ & $0.167$ & $0.199$ & $0.202$ \\
\addlinespace
Female-voter share $\mathrm{Fem}_\ell$
 & Control & $341$ & $0.499$ & $0.035$ & $0.381$ & $0.475$ & $0.498$ & $0.522$ & $0.650$ \\
 & Treated & $13$  & $0.492$ & $0.032$ & $0.397$ & $0.468$ & $0.493$ & $0.510$ & $0.592$ \\
\addlinespace
Standardized age$_\ell$
 & Control & $341$ & $ 2.855$ & $0.513$ & $1.204$ & $2.505$ & $ 2.862$ & $3.221$ & $4.171$ \\
 & Treated & $13$  & $2.755$ & $0.573$ & $1.447$ & $2.383$ & $2.893$ & $3.223$ & $3.783$ \\
\midrule
\multicolumn{10}{l}{\textit{Panel B. Outcomes: 2026 first-round vote shares} $Y_{ac}$} \\
\midrule
Keiko Fujimori (KF)
 & Control & $341$ & $0.201$ & $0.045$ & $0.000$ & $0.173$ & $0.201$ & $0.231$ & $0.567$ \\
 & Treated & $13$  & $0.236$ & $0.046$ & $0.127$ & $0.201$ & $0.236$ & $0.269$ & $0.341$ \\
\addlinespace
Roberto S\'anchez (RS)
 & Control & $341$ & $0.031$ & $0.020$ & $0.000$ & $0.017$ & $0.027$ & $0.040$ & $0.157$ \\
 & Treated & $13$  & $0.064$ & $0.032$ & $0.000$ & $0.041$ & $0.059$ & $0.082$ & $0.164$ \\
\addlinespace
Rafael L\'opez Aliaga (RLA)
 & Control & $341$ & $0.161$ & $0.061$ & $0.000$ & $0.122$ & $0.150$ & $0.190$ & $0.685$ \\
 & Treated & $13$  & $0.186$ & $0.049$ & $0.000$ & $0.152$ & $0.182$ & $0.215$ & $0.347$ \\
\addlinespace
Jorge Nieto (JN)
 & Control & $341$ & $0.148$ & $0.043$ & $0.000$ & $0.118$ & $0.145$ & $0.175$ & $0.444$ \\
 & Treated & $13$  & $0.311$ & $0.050$ & $0.195$ & $0.277$ & $0.310$ & $0.344$ & $0.448$ \\
\addlinespace
Ricardo Belmont (RB)
 & Control & $341$ & $0.098$ & $0.032$ & $0.000$ & $0.078$ & $0.096$ & $0.117$ & $0.273$ \\
 & Treated & $13$  & $0.080$ & $0.028$ & $0.000$ & $0.058$ & $0.077$ & $0.099$ & $0.162$ \\
\addlinespace
Carlos \'Alvarez (CA)
 & Control & $341$ & $0.106$ & $0.029$ & $0.000$ & $0.087$ & $0.105$ & $0.124$ & $0.330$ \\
 & Treated & $13$  & $0.042$ & $0.016$ & $0.008$ & $0.029$ & $0.040$ & $0.053$ & $0.091$ \\
\addlinespace
Alfonso L\'opez Chau (LCH)
 & Control & $341$ & $0.060$ & $0.020$ & $0.000$ & $0.046$ & $0.059$ & $0.072$ & $0.163$ \\
 & Treated & $13$  & $0.015$ & $0.011$ & $0.000$ & $0.008$ & $0.012$ & $0.019$ & $0.070$ \\
\end{longtable}
\par\medskip
\noindent
{\footnotesize
Notes. Moments are \emph{acta}-weighted: each location enters in
proportion to its number of \emph{actas} ($187$ treated, $4{,}743$
control). $N$ counts voting locations, the unit at which treatment is
assigned and every covariate is constant; ``Control'' is the full
pre-matching pool. P25/P50/P75 are quartiles ($P50$ the median).
\emph{Units and ranges.} $\mathrm{SES}$ is the INEI socioeconomic
stratum, an index from $1$ (lowest) to $5$ (highest);
$\mathrm{Indig}, \mathrm{EDU}, \mathrm{Net}$ are population shares in
$[0,1]$, all population-weighted averages over INEI \emph{manzana}
blocks within a $1$\,km buffer of the location; the 2021 left, right,
and Fujimori shares are ONPE first-round location-level vote shares in
$[0,1]$; $\mathrm{Fem}$ is the share of female registered voters in
$[0,1]$ (2026 roll). Outcomes $Y_{ac}$ are 2026 first-round vote shares
in $[0,1]$.}
\par}

\begin{table}[H]
\centering
\caption{Covariate balance before and after matching.}
\label{tab:balance}
\begin{tabular}{lccccc}
\toprule
 & Treated & Control & Control & SMD & SMD \\
Covariate & mean & (all) & (matched) & before & after \\
\midrule
Socioeconomic stratum        & $2.207$ & $2.599$ & $2.179$ & $-0.629$ & $\phantom{-}0.045$ \\
Indigenous-language share    & $0.110$ & $0.091$ & $0.112$ & $\phantom{-}0.473$ & $-0.028$ \\
Higher-education share        & $0.249$ & $0.312$ & $0.254$ & $-0.733$ & $-0.064$ \\
Internet-access share         & $0.361$ & $0.453$ & $0.372$ & $-0.786$ & $-0.100$ \\
2021 left vote share          & $0.254$ & $0.247$ & $0.257$ & $\phantom{-}0.201$ & $-0.095$ \\
2021 right vote share         & $0.279$ & $0.301$ & $0.279$ & $-0.347$ & $\phantom{-}0.013$ \\
2021 Fujimori vote share      & $0.169$ & $0.163$ & $0.173$ & $\phantom{-}0.167$ & $-0.099$ \\
Female-voter share            & $0.492$ & $0.499$ & $0.495$ & $-0.205$ & $-0.098$ \\
Voter age (standardized)      & $2.755$ & $2.855$ & $2.772$ & $-0.193$ & $-0.032$ \\
\bottomrule
\end{tabular}
\par\medskip
{\footnotesize
Notes. Census covariates are population-weighted averages within a
$1$\,km buffer of each location (INEI block-level data); the
remaining covariates are 2021 first-round vote-share aggregates and
electoral-roll demographics (female-voter share and a standardized
voter-age index). ``Control (all)'' is the full pre-matching pool;
``Control (matched)'' is the selected sample. SMD is computed as
in~\eqref{eq:smd}.}
\end{table}

\begin{table}[H]
\centering
\caption{Treatment effects on candidate vote shares (treated Lima
locales). Effects are in share units (multiply by 100 for
percentage points).}
\label{tab:att}
\begin{tabular}{lcc}
\toprule
 & Difference in means & Regression-adjusted \\
Candidate & $\widehat\tau_c^{\,\mathrm{DM}}$ & $\widehat\tau_c^{\,\mathrm{R}}$ \\
\midrule
Keiko Fujimori (KF)        & $\phantom{-}0.026^{***}$  & $\phantom{-}0.026^{***}$ \\
                           & $(0.007)$      & $(0.004)$ \\
Roberto S\'anchez (RS)     & $\phantom{-}0.027^{***}$  & $\phantom{-}0.027^{***}$ \\
                           & $(0.006)$      & $(0.003)$ \\
Rafael L\'opez Aliaga (RLA)& $\phantom{-}0.051^{***}$  & $\phantom{-}0.051^{***}$ \\
                           & $(0.007)$      & $(0.004)$ \\
Jorge Nieto (JN)           & $\phantom{-}0.173^{***}$  & $\phantom{-}0.173^{***}$ \\
                           & $(0.008)$      & $(0.004)$ \\
\midrule
Ricardo Belmont (RB)       & $-0.031^{***}$ & $-0.031^{***}$ \\
                           & $(0.004)$      & $(0.003)$ \\
Carlos \'Alvarez (CA)      & $-0.068^{***}$ & $-0.068^{***}$ \\
                           & $(0.003)$      & $(0.002)$ \\
Alfonso L\'opez Chau (LCH) & $-0.044^{***}$ & $-0.044^{***}$ \\
                           & $(0.001)$      & $(0.001)$ \\
\bottomrule
\end{tabular}
\par\medskip
{\footnotesize
Notes. $\widehat\tau_c^{\,\mathrm{DM}}$ is the \emph{acta}-weighted
difference in means~\eqref{eq:dim}; $\widehat\tau_c^{\,\mathrm{R}}$
is the pooled regression~\eqref{eq:reg-pool} with matched-group fixed
effects and \emph{acta}-level demographic controls. Cluster-robust standard
errors at the voting-location level in parentheses. $N = 1{,}909$
\emph{actas}; 129 matched locations (13 treated and 116 control).
$^{***}\,p<0.01$. Candidate order follows
Table~\ref{tab:iep_profiles}.}
\end{table}

\begin{table}[H]
\centering
\caption{Location-weighted estimator
$\widehat\tau_c^{\,\mathrm{DM},L}$ of~\eqref{eq:tauL}.}
\label{tab:lw}
\begin{tabular}{lccccc}
\toprule
Candidate & $\widehat\tau_c^{\,\mathrm{DM},L}$ & SE & $t$ & $p$ & Groups \\
\midrule
Keiko Fujimori (KF)        & $\phantom{-}0.022$ & $0.007$ & $\phantom{-}3.22$  & $0.007$  & $13$ \\
Roberto S\'anchez (RS)     & $\phantom{-}0.025$ & $0.007$ & $\phantom{-}3.73$  & $0.003$  & $13$ \\
Rafael L\'opez Aliaga (RLA)& $\phantom{-}0.054$ & $0.009$ & $\phantom{-}5.79$  & $<0.001$ & $13$ \\
Jorge Nieto (JN)           & $\phantom{-}0.176$ & $0.008$ & $\phantom{-}22.78$ & $<0.001$ & $13$ \\
Ricardo Belmont (RB)       & $-0.032$ & $0.005$ & $-7.12$  & $<0.001$ & $13$ \\
Carlos \'Alvarez (CA)      & $-0.066$ & $0.004$ & $-16.47$ & $<0.001$ & $13$ \\
Alfonso L\'opez Chau (LCH) & $-0.044$ & $0.002$ & $-25.21$ & $<0.001$ & $13$ \\
\bottomrule
\end{tabular}
\par\medskip
{\footnotesize
Notes. Each treated location weighted equally. Standard errors from
between-group variation.}
\end{table}

\begin{table}[H]
\centering
\caption{Leave-one-treated-location-out stability of
$\widehat\tau_c^{\,\mathrm{DM}}$.}
\label{tab:loo}
\begin{tabular}{lcccc}
\toprule
Candidate & Baseline & Min & Max & Reversals \\
\midrule
Keiko Fujimori (KF)        & $\phantom{-}0.026$ & $\phantom{-}0.022$ & $\phantom{-}0.030$ & $0/13$ \\
Roberto S\'anchez (RS)     & $\phantom{-}0.027$ & $\phantom{-}0.024$ & $\phantom{-}0.029$ & $0/13$ \\
Rafael L\'opez Aliaga (RLA)& $\phantom{-}0.051$ & $\phantom{-}0.048$ & $\phantom{-}0.054$ & $0/13$ \\
Jorge Nieto (JN)           & $\phantom{-}0.173$ & $\phantom{-}0.170$ & $\phantom{-}0.178$ & $0/13$ \\
Ricardo Belmont (RB)       & $-0.031$ & $-0.034$ & $-0.030$ & $0/13$ \\
Carlos \'Alvarez (CA)      & $-0.068$ & $-0.070$ & $-0.066$ & $0/13$ \\
Alfonso L\'opez Chau (LCH) & $-0.044$ & $-0.045$ & $-0.044$ & $0/13$ \\
\bottomrule
\end{tabular}
\par\medskip
{\footnotesize
Notes. ``Min'' and ``Max'' are the smallest and largest of the
thirteen leave-one-out estimates. ``Reversals'' counts exclusions
producing a sign change or a loss of $5\%$ significance.}
\end{table}

\begin{table}[htbp]
\centering
\caption{Acta-weighted ATT across balance tolerances $\delta$.}
\label{tab:tol}
\begin{tabular}{lccc}
\toprule
Candidate & $\delta = 0.05$ & $\delta = 0.10$ & $\delta = 0.15$ \\
\midrule
Keiko Fujimori (KF)        & $\phantom{-}0.026$ & $\phantom{-}0.026$ & $\phantom{-}0.024$ \\
Roberto S\'anchez (RS)     & $\phantom{-}0.026$ & $\phantom{-}0.027$ & $\phantom{-}0.028$ \\
Rafael L\'opez Aliaga (RLA)& $\phantom{-}0.053$ & $\phantom{-}0.051$ & $\phantom{-}0.051$ \\
Jorge Nieto (JN)           & $\phantom{-}0.174$ & $\phantom{-}0.173$ & $\phantom{-}0.172$ \\
Ricardo Belmont (RB)       & $-0.032$ & $-0.031$ & $-0.029$ \\
Carlos \'Alvarez (CA)      & $-0.067$ & $-0.068$ & $-0.070$ \\
Alfonso L\'opez Chau (LCH) & $-0.044$ & $-0.044$ & $-0.045$ \\
\midrule
Matched locations & $124$ & $129$ & $128$ \\
Matched \emph{actas}     & $1770$ & $1909$ & $1938$ \\
\bottomrule
\end{tabular}
\end{table}

\begin{table}[H]
\centering
\caption{Rosenbaum sensitivity: worst-case $p$-value by candidate
and $\Gamma$.}
\label{tab:rosen}
\small
\setlength{\tabcolsep}{4pt}
\begin{tabular}{lccccccc}
\toprule
$\Gamma$ & KF & RS & RLA & JN & RB & CA & LCH \\
\midrule
$1.00$  & $0.0046$ & $0.0012$ & $0.0001$ & $0.0001$ & $0.0002$ & $0.0001$ & $0.0001$ \\
$1.25$  & $0.0118$ & $0.0036$ & $0.0005$ & $0.0005$ & $0.0009$ & $0.0005$ & $0.0005$ \\
$1.50$  & $0.0231$ & $0.0080$ & $0.0013$ & $0.0013$ & $0.0022$ & $0.0013$ & $0.0013$ \\
$1.75$  & $0.0384$ & $0.0144$ & $0.0028$ & $0.0028$ & $0.0044$ & $0.0028$ & $0.0028$ \\
$2.00$  & $0.0584$ & $0.0231$ & $0.0051$ & $0.0051$ & $0.0077$ & $0.0051$ & $0.0051$ \\
$2.50$  & $0.1006$ & $0.0459$ & $0.0126$ & $0.0126$ & $0.0176$ & $0.0126$ & $0.0126$ \\
$3.00$  & $0.1490$ & $0.0739$ & $0.0238$ & $0.0238$ & $0.0317$ & $0.0238$ & $0.0238$ \\
$4.00$  & $0.2459$ & $0.1374$ & $0.0550$ & $0.0550$ & $0.0687$ & $0.0550$ & $0.0550$ \\
$5.00$  & $0.3328$ & $0.2019$ & $0.0935$ & $0.0935$ & $0.1122$ & $0.0935$ & $0.0935$ \\
$7.50$  & $0.4964$ & $0.3416$ & $0.1965$ & $0.1965$ & $0.2269$ & $0.1965$ & $0.1965$ \\
$10.00$ & $0.6025$ & $0.4460$ & $0.2897$ & $0.2897$ & $0.3186$ & $0.2897$ & $0.2897$ \\
\midrule
$\Gamma^\star$ ($5\%$) & $1.90$ & $2.57$ & $3.84$ & $3.84$ & $3.49$ & $3.84$ & $3.84$ \\
\bottomrule
\end{tabular}
\par\medskip
{\footnotesize
Notes. Each column reports the worst-case two-sided $p$-value for the
candidate's effect under the matched-group signed-score sensitivity
analysis. The paths for RLA, JN, CA, and LCH are identical because their
signed-score statistics attain the same extremity within the matched
permutation distribution. $\Gamma^\star$ is the interpolated value of
$\Gamma$ at which the worst-case $p$-value reaches $0.05$; values are
obtained by linear interpolation between the two adjacent grid points.}
\end{table}

\newpage

\section{Figures}

\begin{figure}[H]
\centering
\begin{adjustbox}{center}
\begin{tikzpicture}
\begin{axis}[
    width=12cm,
    height=8cm,
    xlabel={Date},
    ylabel={Valid-vote intention (\%)},
    xmin=-7, xmax=322,
    ymin=0, ymax=20,
    legend pos=north west,
    legend columns=2,
    grid=both,
    grid style={gray!10},
    xtick={20,60,100,140,180,220,260,300},
    xticklabels={Jul 25,Aug 25,Sep 25,Nov 25,Dec 25, Jan 26,Mar 26,Apr 26},
    xticklabel style={rotate=45, anchor=east}
]
\addplot[color=orange, mark size=1.17pt, thick, smooth] coordinates {
    (20,9) (38,9) (54,8) (83,8) (104,7)
    (118,6) (132,7) (146,8) (188,7)
    (208,7) (237,8) (251,9) (265,10) (281,11) (286,11) (291,12) (303,17.1)
};
\addlegendentry{Fujimori}
\definecolor{celesteAliaga}{RGB}{120,216,230}
\addplot[color=celesteAliaga, mark size=1.17pt, thick, smooth] coordinates {
    (20,7) (38,8) (54,10) (83,10) (104,10)
    (118,10) (132,9) (146,9) (188,10)
    (208,10) (237,12) (251,10) (265,11) (281,10) (286,9) (291,8) (303,11.3)
};
\addlegendentry{L\'opez Aliaga}
\definecolor{verdeSanchez}{RGB}{140,200,120}
\addplot[color=verdeSanchez, mark size=1.17pt, thick, smooth] coordinates {
    (20,0) (38,0) (54,0) (83,0) (104,0)
    (118,0) (132,0) (146,0) (188,0)
    (208,2) (237,2) (251,2) (265,2) (281,5) (286,4) (291,6) (303,12.4)
};
\addlegendentry{S\'anchez}
\addplot[color=green!40!black, mark size=1.17pt, thick, smooth] coordinates {
    (20,0) (38,2) (54,0) (83,0) (104,0)
    (118,0) (132,0) (146,0) (188,0)
    (208,0) (237,0) (251,2) (265,2) (281,2) (286,3) (291,3) (303,10.2)
};
\addlegendentry{Belmont}
\definecolor{moradoNieto}{RGB}{140,70,180}
\addplot[color=moradoNieto, mark size=1.17pt, thick, smooth] coordinates {
    (20,0) (38,0) (54,0) (83,0) (104,0)
    (118,0) (132,0) (146,0) (188,0)
    (208,0) (237,0) (251,0) (265,2) (281,5) (286,5) (291,5) (303,10.7)
};
\addlegendentry{Nieto}
\addplot[color=yellow!90!orange, mark size=1.17pt, thick, smooth] coordinates {
    (20,6) (38,6) (54,6) (83,4) (104,4)
    (118,5) (132,4) (146,4) (188,4)
    (208,4) (237,4) (251,3) (265,6) (281,5) (286,7) (291,8) (303,8)
};
\addlegendentry{\'Alvarez}
\addplot[color=red, mark size=1.17pt, thick, smooth] coordinates {
    (20,2) (38,2) (54,2) (83,2) (104,0)
    (118,2) (132,2) (146,3) (188,3)
    (208,3) (237,4) (251,4) (265,4) (281,5) (286,4) (291,4) (303,7.4)
};
\addlegendentry{L\'opez Chau}
\end{axis}
\end{tikzpicture}
\end{adjustbox}
\caption{Vote-intention estimates, Ipsos Per\'u regular polls
(July 2025--April 2026). Valid-vote basis. Source: \emph{La
Encerrona} poll tracker.}
\label{fig:ipsos_reg}
\end{figure}

\begin{figure}[H]
\centering
\begin{adjustbox}{center}
\begin{tikzpicture}
\begin{axis}[
    width=15cm,
    height=8cm,
    xlabel={Date},
    ylabel={Valid-vote intention (\%)},
    xmin=-2, xmax=143,
    ymin=0, ymax=20,
    legend pos=north west,
    legend columns=2,
    grid=both,
    grid style={gray!10},
    xtick={10,40,70,100,130},
    xticklabels={Dec 25,Jan 26,Feb 26, Mar 26,Apr 26},
    xticklabel style={rotate=45, anchor=east}
]
\addplot[color=orange, mark size=1.17pt, thick] coordinates {
    (10,7.5) (39,8.8) (67,9.2) (81,9.7) (96,10.7)
    (102,10.9) (109,11.9) (119,13) (127,14.5)
    (129,15.6) (136,16.8)
};
\addlegendentry{Fujimori}
\definecolor{celesteAliaga}{RGB}{120,216,230}
\addplot[color=celesteAliaga, mark size=1.17pt, thick, smooth] coordinates {
    (10,10.5) (39,12) (67,11.9) (81,13.4) (96,10)
    (102,11.4) (109,11.7) (119,11.7) (127,9.9)
    (129,9.5) (136,12.9)
};
\addlegendentry{L\'opez Aliaga}
\definecolor{verdeSanchez}{RGB}{140,200,120}
\addplot[color=verdeSanchez, mark size=1.17pt, thick, smooth] coordinates {
    (10,0) (39,0.8) (67,1.1) (81,2.5) (96,1.9)
    (102,1.4) (109,2) (119,4.9) (127,4.9)
    (129,4) (136,9.4)
};
\addlegendentry{S\'anchez}
\addplot[color=green!40!black, mark size=1.17pt, thick, smooth] coordinates {
    (10,1.6) (39,2.2) (67,1.5) (81,1.4) (96,1.4)
    (102,1.6) (109,2.4) (119,2.9) (127,5.5)
    (129,9.6) (136,10.1)
};
\addlegendentry{Belmont}
\definecolor{moradoNieto}{RGB}{140,70,180}
\addplot[color=moradoNieto, mark size=1.17pt, thick, smooth] coordinates {
    (10,0.7) (39,0) (67,0.3) (81,0.1) (96,1.3)
    (102,2.5) (109,4.6) (119,5.9) (127,6)
    (129,5.9) (136,11.6)
};
\addlegendentry{Nieto}
\addplot[color=yellow!90!orange, mark size=1.17pt, thick, smooth] coordinates {
    (10,4) (39,6.2) (67,5.8) (81,6) (96,5)
    (102,4) (109,5) (119,6.9) (127,10.9)
    (129,9.3) (136,8.1)
};
\addlegendentry{\'Alvarez}
\addplot[color=red, mark size=1.17pt, thick, smooth] coordinates {
    (10,2.9) (39,3.8) (67,3.8) (81,5.7) (96,5.5)
    (102,6.5) (109,6.5) (119,6.1) (127,4.7)
    (129,5.7) (136,7.7)
};
\addlegendentry{L\'opez Chau}
\end{axis}
\end{tikzpicture}
\end{adjustbox}
\caption{Vote-intention estimates, Datum Internacional regular polls
(December 2025--April 2026). Valid-vote basis. Source: \emph{La
Encerrona} poll tracker.}
\label{fig:datum_reg}
\end{figure}

\begin{figure}[H]
\centering
\begin{tikzpicture}
\begin{axis}[
    width=12cm,
    height=8cm,
    ybar,
    bar width=6pt,
    ymin=0, ymax=22,
    ylabel={Valid-vote share (\%)},
    symbolic x coords={Sim. 1, Sim. 2, Sim. 3, Flash},
    xtick=data,
    legend style={
        at={(1.02,1)},
        anchor=north west,
        draw=black,
        fill=white
    },
    every axis plot/.append style={draw=none},
    nodes near coords align={vertical},
    grid=both,
    grid style={gray!10},
]
\definecolor{moradoNieto}{RGB}{140,70,180}
\definecolor{verdeSanchez}{RGB}{140,200,120}
\definecolor{celesteAliaga}{RGB}{120,216,230}
\addplot[fill=orange, draw=orange!60!black] coordinates {
(Sim. 1,18.6) (Sim. 2,17.2) (Sim. 3,18.5) (Flash,16.6)
};
\addplot[fill=celesteAliaga, draw=celesteAliaga!60!black] coordinates {
(Sim. 1,10.9) (Sim. 2,8.4) (Sim. 3,9.1) (Flash,11)
};
\addplot[fill=verdeSanchez, draw=verdeSanchez!60!black] coordinates {
(Sim. 1,9) (Sim. 2,9.5) (Sim. 3,8.9) (Flash,12.1)
};
\addplot[fill=green!40!black, draw=green!20!black] coordinates {
(Sim. 1,4.3) (Sim. 2,13.7) (Sim. 3,11.1) (Flash,11.8)
};
\addplot[fill=moradoNieto, draw=moradoNieto!60!black] coordinates {
(Sim. 1,5.6) (Sim. 2,4.8) (Sim. 3,7.5) (Flash,10.7)
};
\addplot[fill=yellow!90!orange, draw=yellow!60!black] coordinates {
(Sim. 1,12.1) (Sim. 2,11.8) (Sim. 3,9.1) (Flash,7.0)
};
\addplot[fill=red, draw=red!60!black] coordinates {
(Sim. 1,4.4) (Sim. 2,5.5) (Sim. 3,7.6) (Flash,7.1)
};
\legend{Fujimori, L\'opez Aliaga, S\'anchez, Belmont, Nieto, \'Alvarez, L\'opez Chau}
\end{axis}
\end{tikzpicture}
\caption{Ipsos \emph{simulacros} and flash estimate, Peru 2026.
Valid-vote basis. Source: \emph{La Encerrona} poll tracker;
\citet{infobae2026flashIpsos}.}
\label{fig:simulacros_bar}
\end{figure}

\newpage


\section{Balance diagnostics}\label{app:balance}

This appendix collects the full distributional balance diagnostics
summarized by the standardized mean differences of
Table~\ref{tab:balance}. All diagnostics are computed at the
voting-location level on the baseline ($\delta = 0.10$) matched sample,
i.e.\ the 129 controls retained after the Mahalanobis caliper. For each covariate we report the
Kolmogorov--Smirnov statistic (Table~\ref{tab:ks}), the
before-and-after density plot (Figure~\ref{fig:density}), the QQ-plot
of the matched control against the treated distribution
(Figure~\ref{fig:qq}), and the common-support overlap
(Figure~\ref{fig:cs}). Since the estimand is the ATT, the treated
distribution is held fixed throughout and only the control
distribution is reweighted toward it.


\begin{table}[htbp]
\centering
\caption{Kolmogorov--Smirnov statistics by covariate, before and after
matching.}
\label{tab:ks}
\begin{tabular}{lcccc}
\toprule
 & \multicolumn{2}{c}{Before matching} & \multicolumn{2}{c}{After matching} \\
\cmidrule(lr){2-3}\cmidrule(lr){4-5}
Covariate & KS & $p$-value & KS & $p$-value \\
\midrule
Socioeconomic stratum $\mathrm{SES}_\ell$        & $0.436$ & $0.011$ & $0.196$ & $0.671$ \\
Indigenous-language share $\mathrm{Indig}_\ell$  & $0.340$ & $0.085$ & $0.189$ & $0.711$ \\
Higher-education share $\mathrm{EDU}_\ell$        & $0.544$ & $0.001$ & $0.355$ & $0.073$ \\
Internet-access share $\mathrm{Net}_\ell$         & $0.567$ & $<0.001$ & $0.370$ & $0.055$ \\
2021 left share $\mathrm{Left}_\ell^{2021}$       & $0.233$ & $0.438$ & $0.173$ & $0.802$ \\
2021 right share $\mathrm{Right}_\ell^{2021}$     & $0.246$ & $0.374$ & $0.187$ & $0.723$ \\
2021 Fujimori share $\mathrm{Fuji}_\ell^{2021}$   & $0.193$ & $0.671$ & $0.189$ & $0.711$ \\
Female-voter share $\mathrm{Fem}_\ell$            & $0.294$ & $0.187$ & $0.180$ & $0.762$ \\
Standardized age$^{\dagger}$                      & $0.198$ & $0.637$ & $0.124$ & $0.980$ \\
\bottomrule
\end{tabular}
\par\medskip
\begin{minipage}{\textwidth}
{\footnotesize
Notes. KS is the two-sample Kolmogorov--Smirnov statistic on the
location-level distribution of each covariate. }
\end{minipage}
\end{table}

\begin{figure}[H]
\centering
\caption{Covariate densities at treated and control locations, before
and after matching.}
\label{fig:density}
\begin{subfigure}{0.31\textwidth}
  \includegraphics[width=\linewidth]{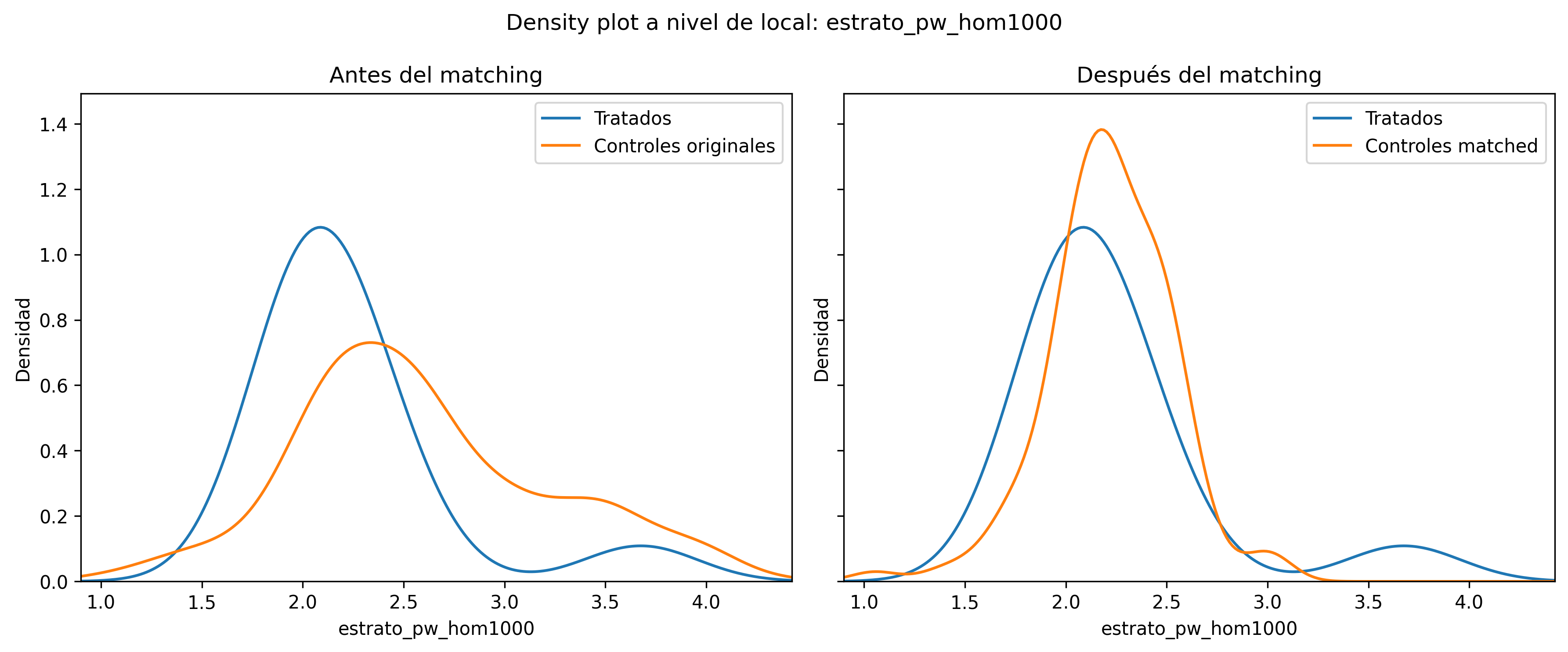}
  \caption{Socioeconomic stratum}
\end{subfigure}\hfill
\begin{subfigure}{0.31\textwidth}
  \includegraphics[width=\linewidth]{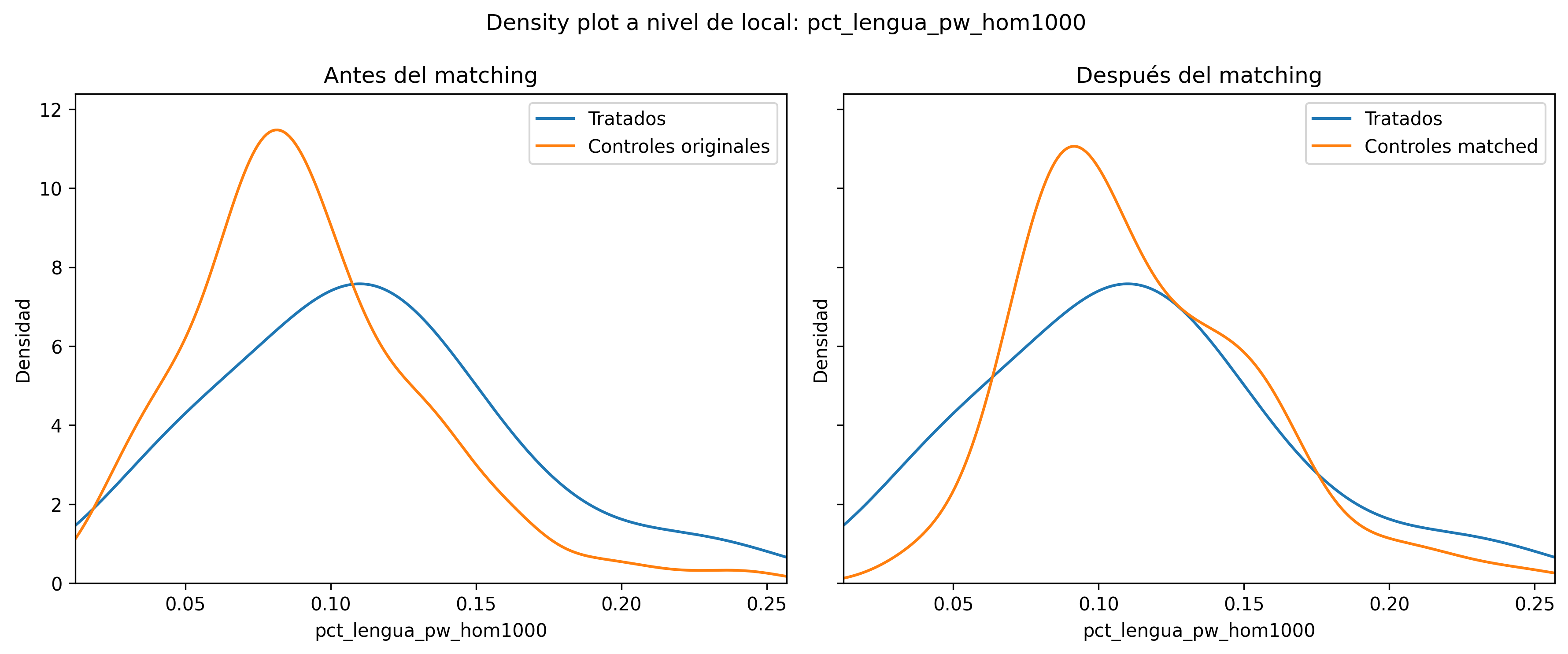}
  \caption{Indigenous-language share}
\end{subfigure}\hfill
\begin{subfigure}{0.31\textwidth}
  \includegraphics[width=\linewidth]{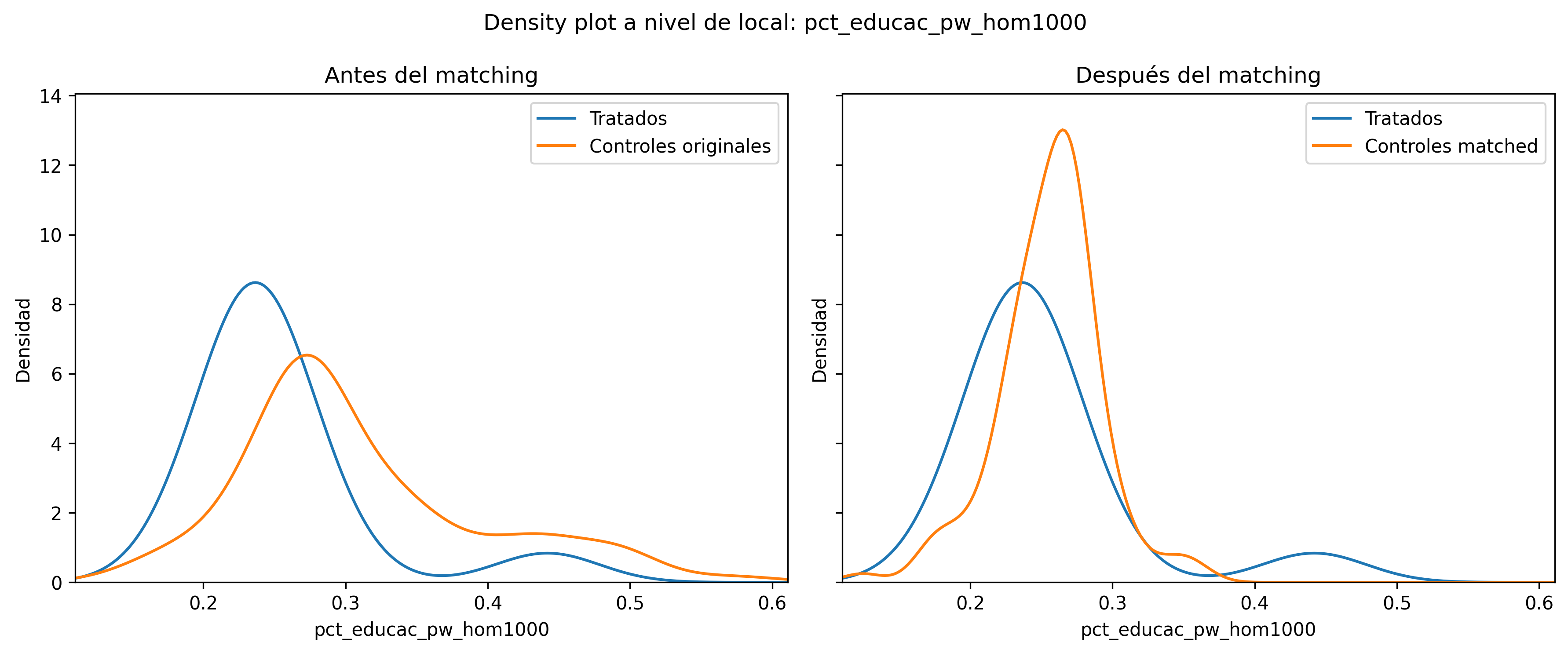}
  \caption{Higher-education share}
\end{subfigure}

\medskip
\begin{subfigure}{0.31\textwidth}
  \includegraphics[width=\linewidth]{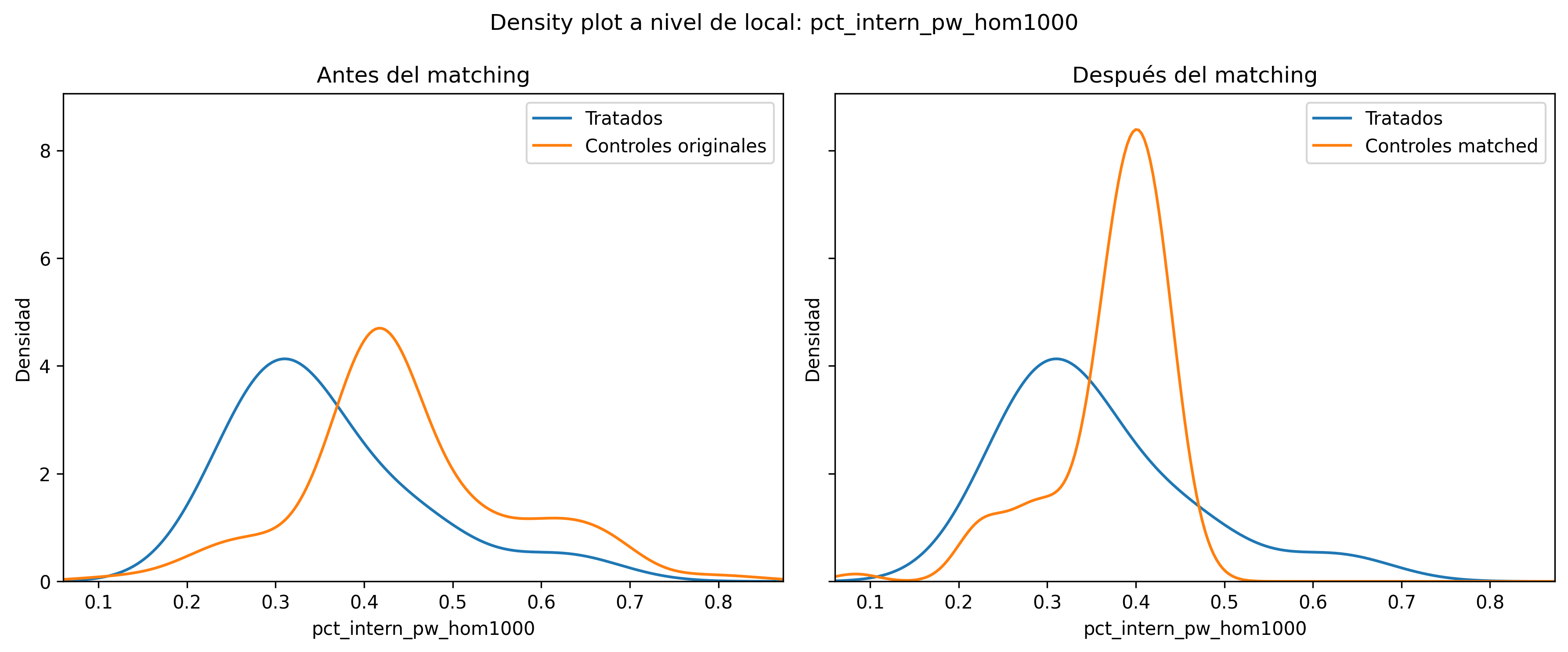}
  \caption{Internet-access share}
\end{subfigure}\hfill
\begin{subfigure}{0.31\textwidth}
  \includegraphics[width=\linewidth]{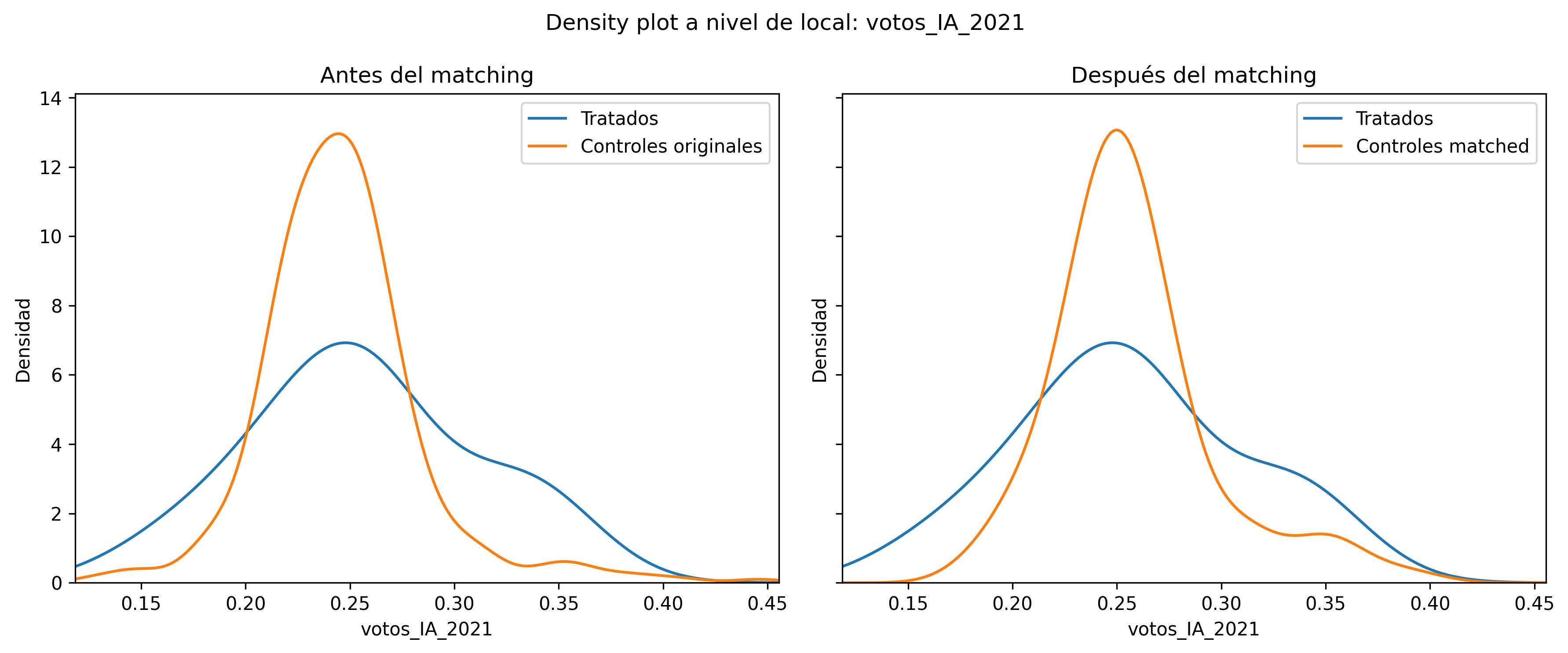}
  \caption{2021 left share}
\end{subfigure}\hfill
\begin{subfigure}{0.31\textwidth}
  \includegraphics[width=\linewidth]{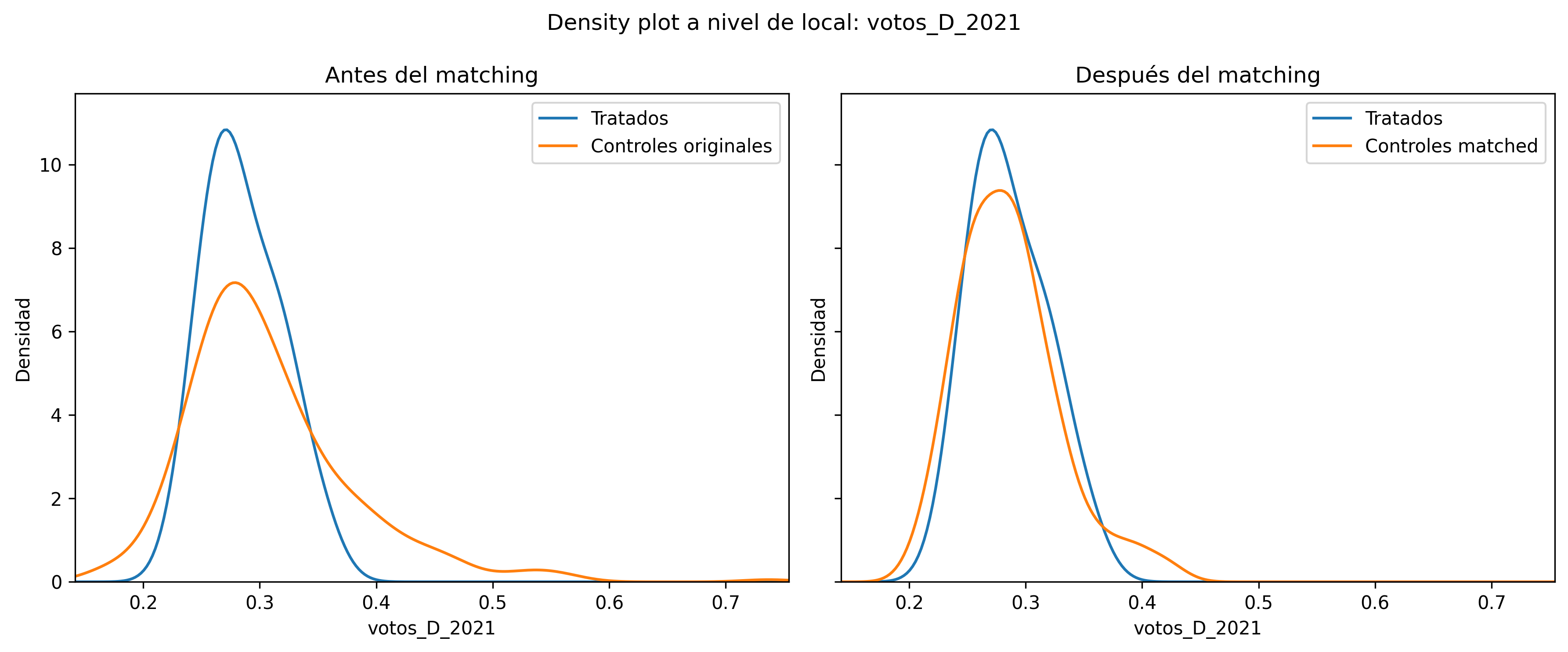}
  \caption{2021 right share}
\end{subfigure}

\medskip
\begin{subfigure}{0.31\textwidth}
  \includegraphics[width=\linewidth]{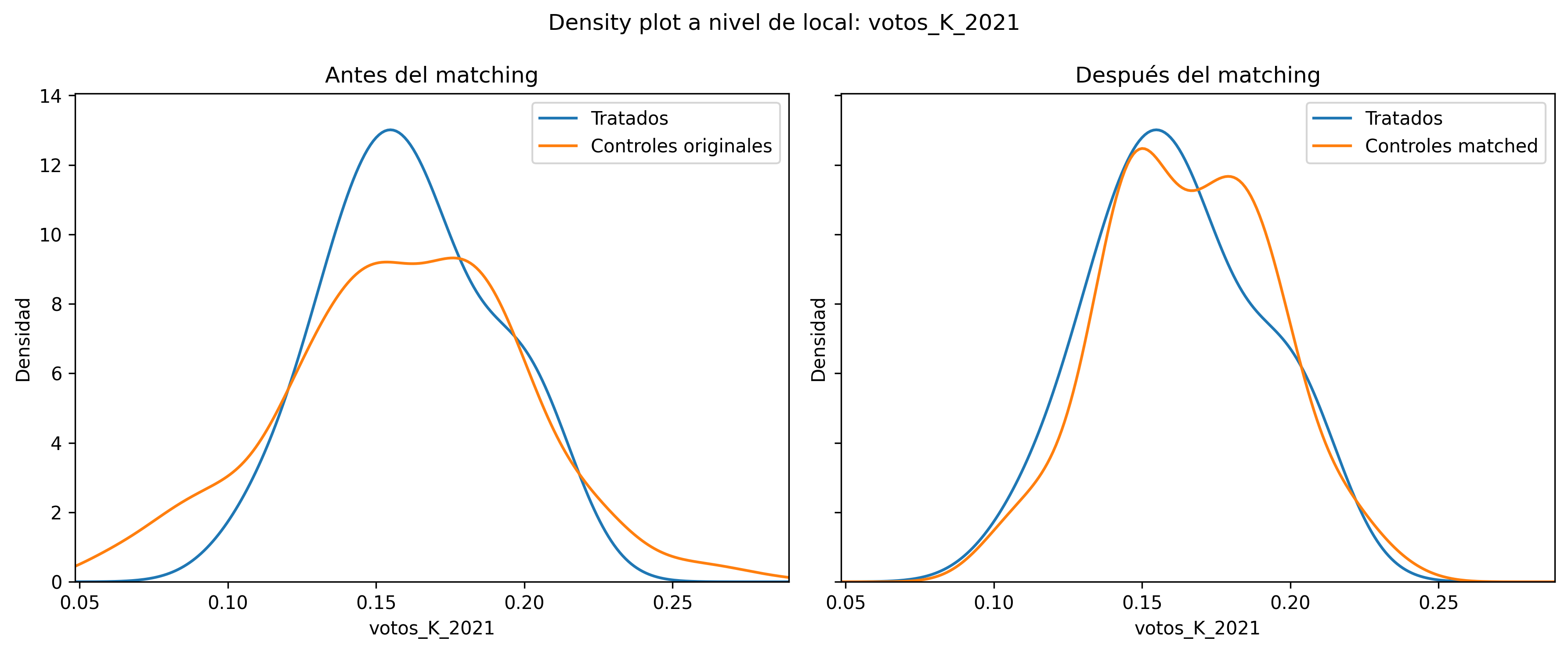}
  \caption{2021 Fujimori share}
\end{subfigure}\hfill
\begin{subfigure}{0.31\textwidth}
  \includegraphics[width=\linewidth]{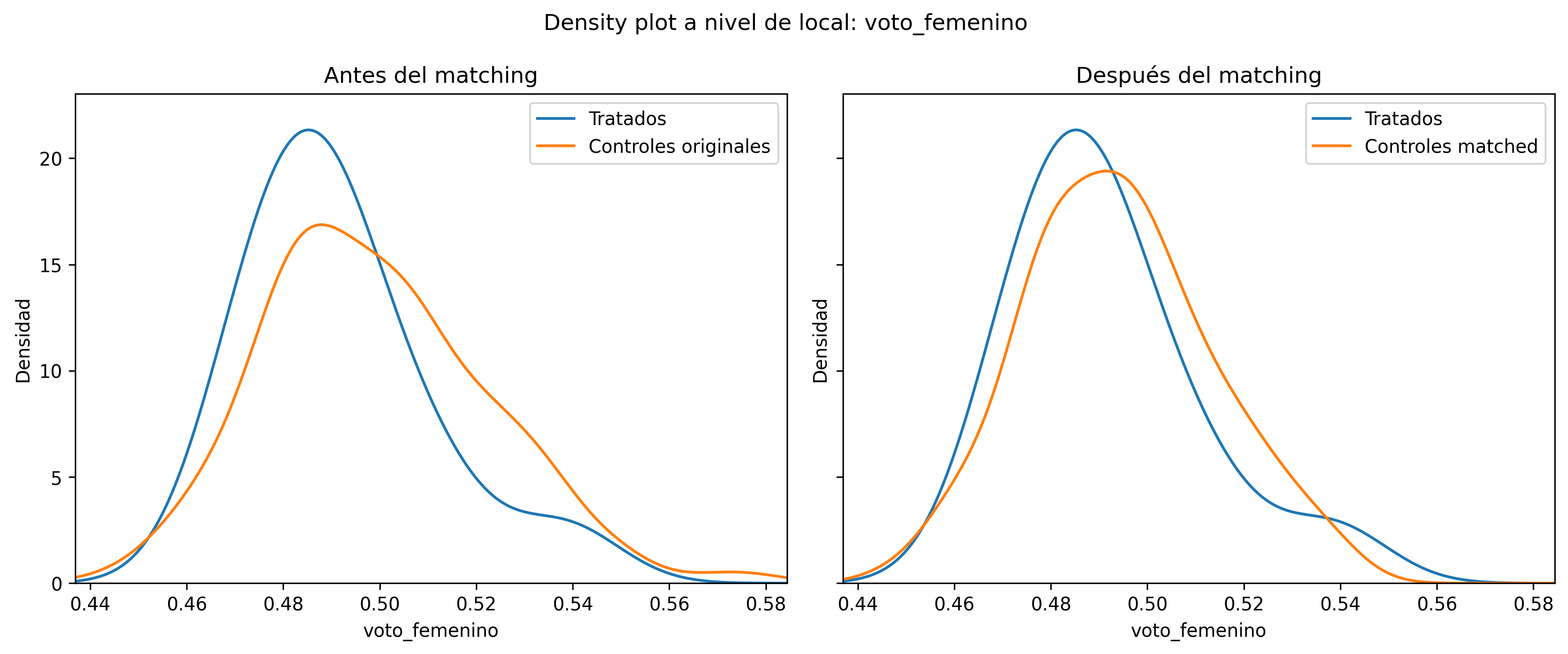}
  \caption{Female-voter share}
\end{subfigure}\hfill
\begin{subfigure}{0.31\textwidth}
  \includegraphics[width=\linewidth]{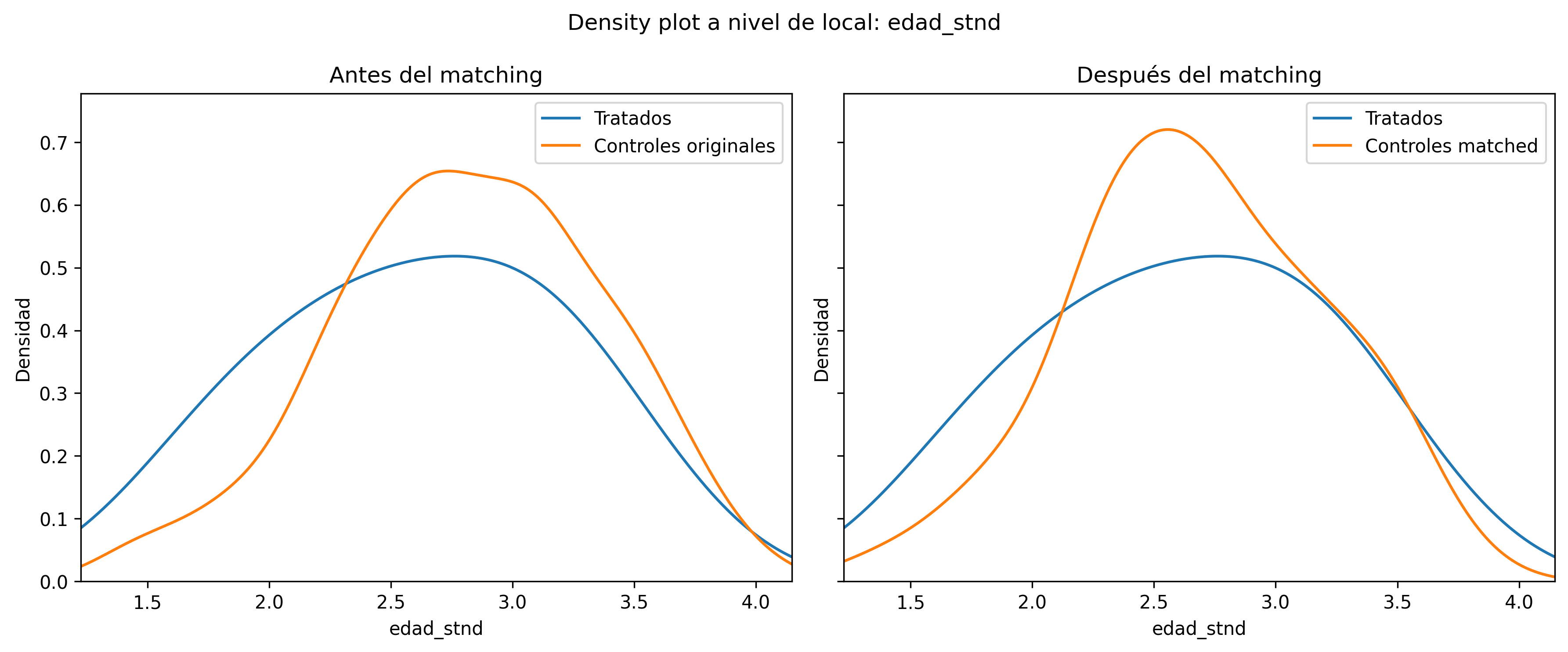}
  \caption{Standardized age$^{\dagger}$}
\end{subfigure}
\end{figure}

\begin{figure}[H]
\centering
\caption{QQ-plots of the matched control against the treated
distribution.}
\label{fig:qq}
\begin{subfigure}{0.31\textwidth}
  \includegraphics[width=\linewidth]{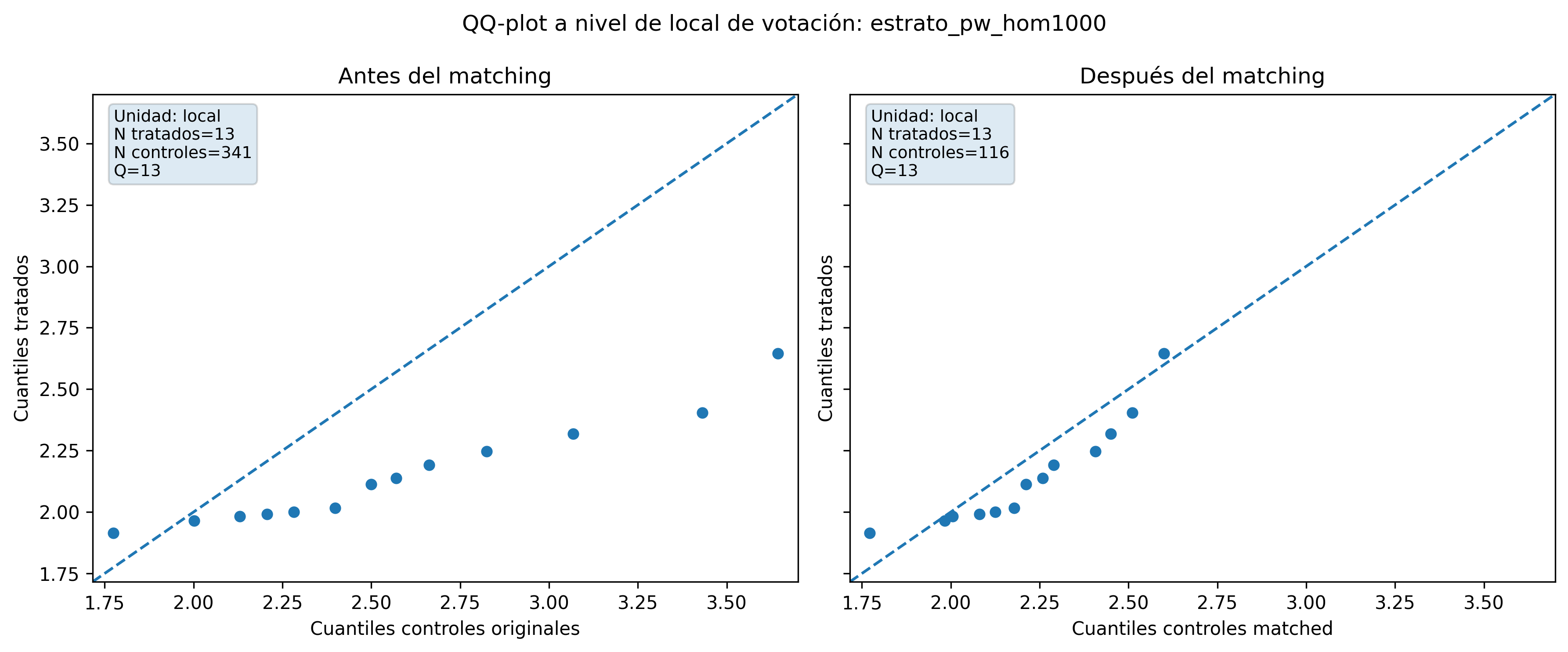}
  \caption{Socioeconomic stratum}
\end{subfigure}\hfill
\begin{subfigure}{0.31\textwidth}
  \includegraphics[width=\linewidth]{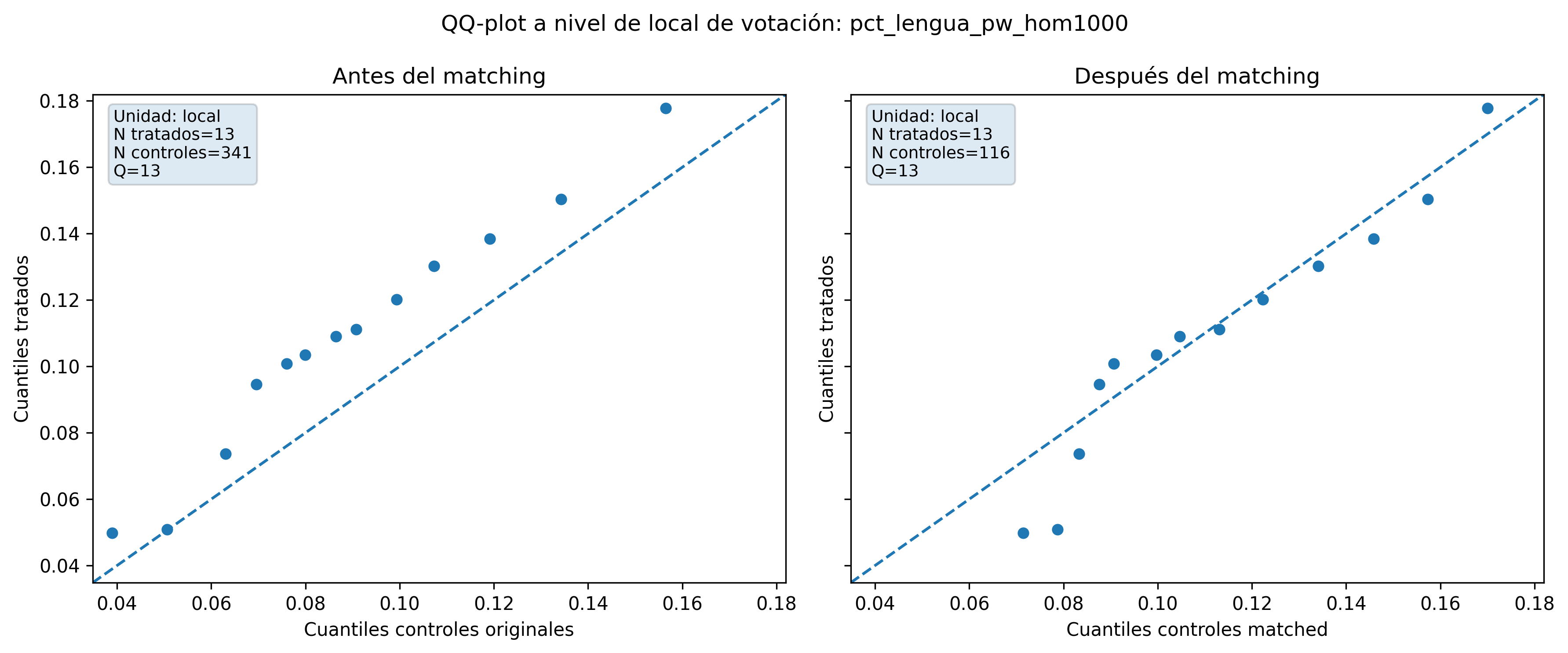}
  \caption{Indigenous-language share}
\end{subfigure}\hfill
\begin{subfigure}{0.31\textwidth}
  \includegraphics[width=\linewidth]{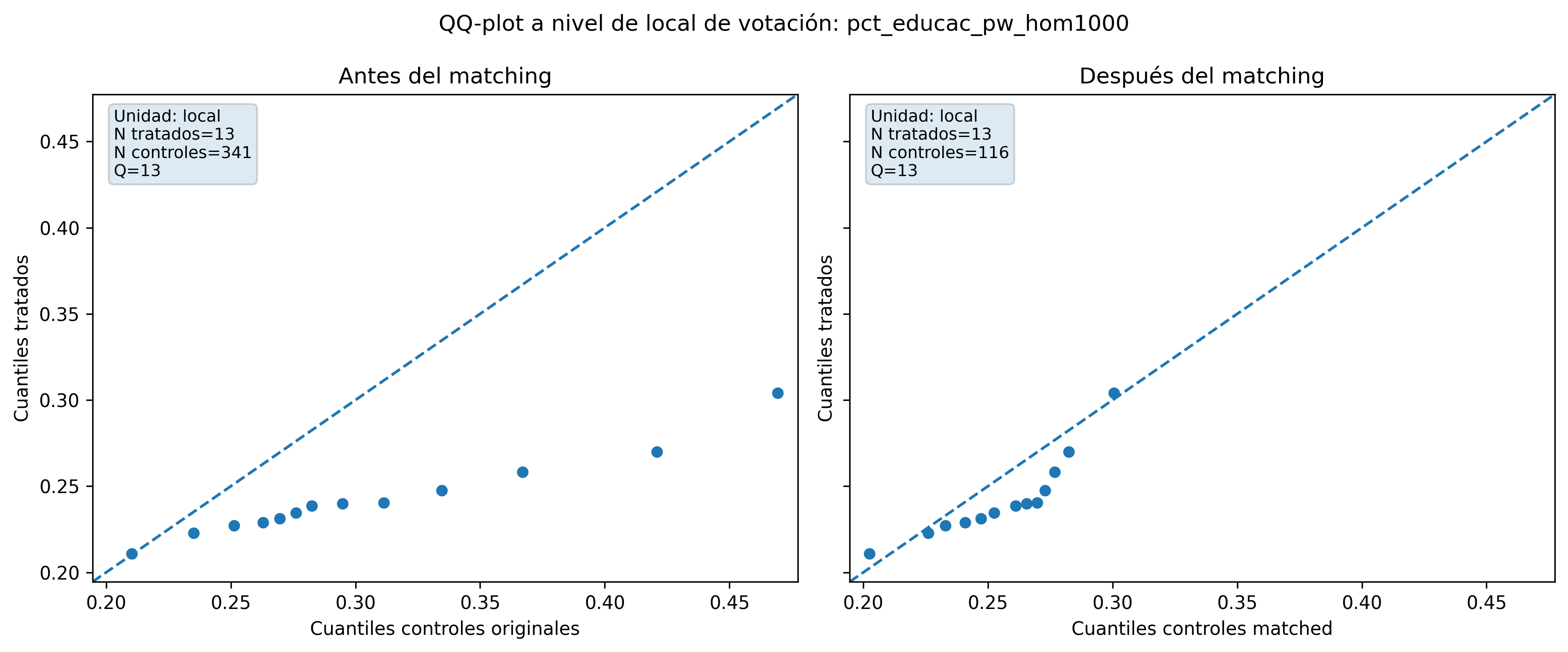}
  \caption{Higher-education share}
\end{subfigure}

\medskip
\begin{subfigure}{0.31\textwidth}
  \includegraphics[width=\linewidth]{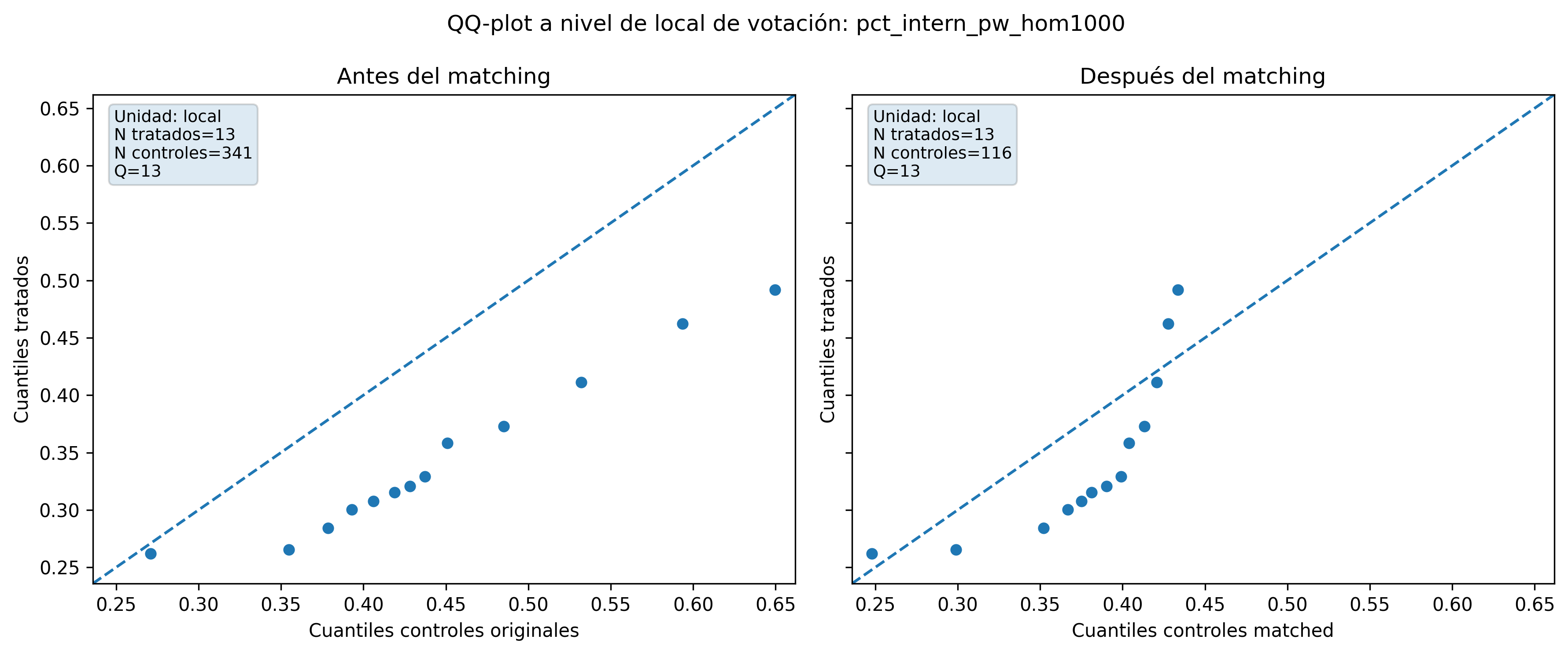}
  \caption{Internet-access share}
\end{subfigure}\hfill
\begin{subfigure}{0.31\textwidth}
  \includegraphics[width=\linewidth]{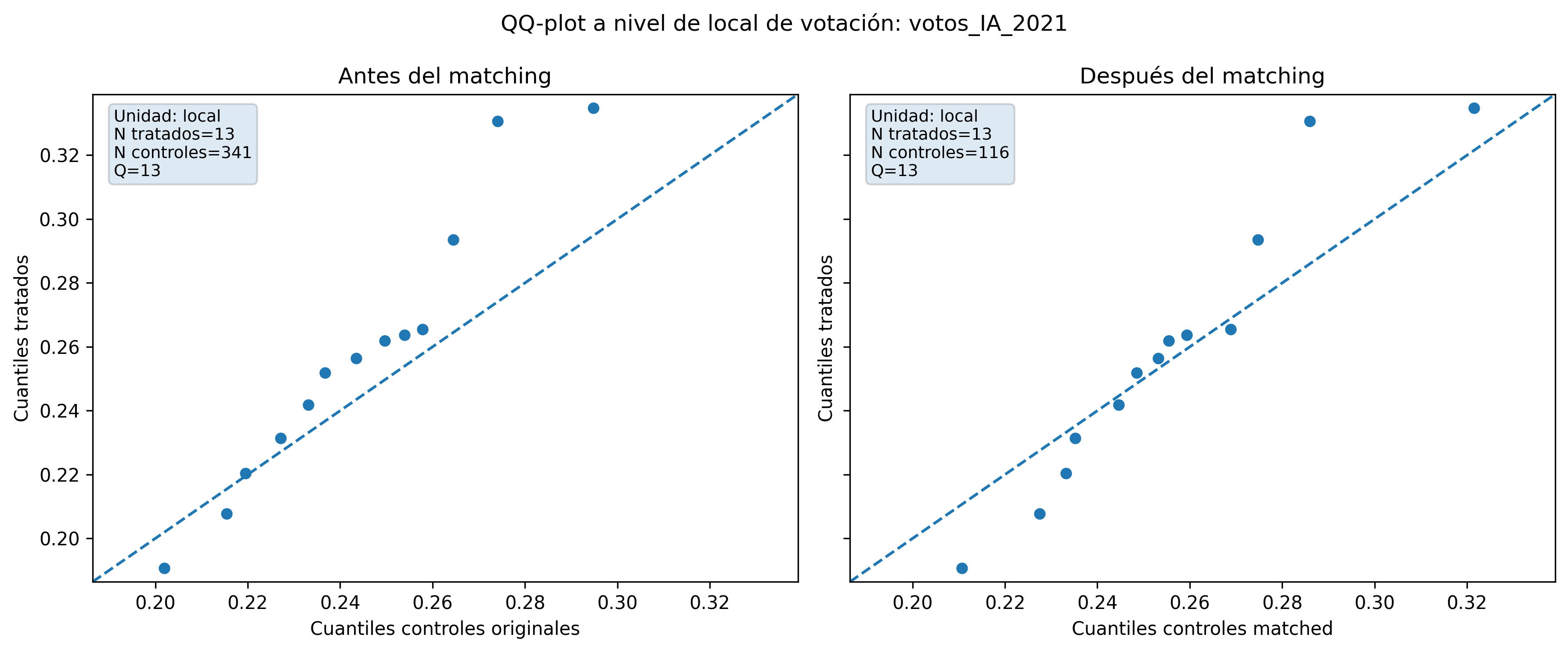}
  \caption{2021 left share}
\end{subfigure}\hfill
\begin{subfigure}{0.31\textwidth}
  \includegraphics[width=\linewidth]{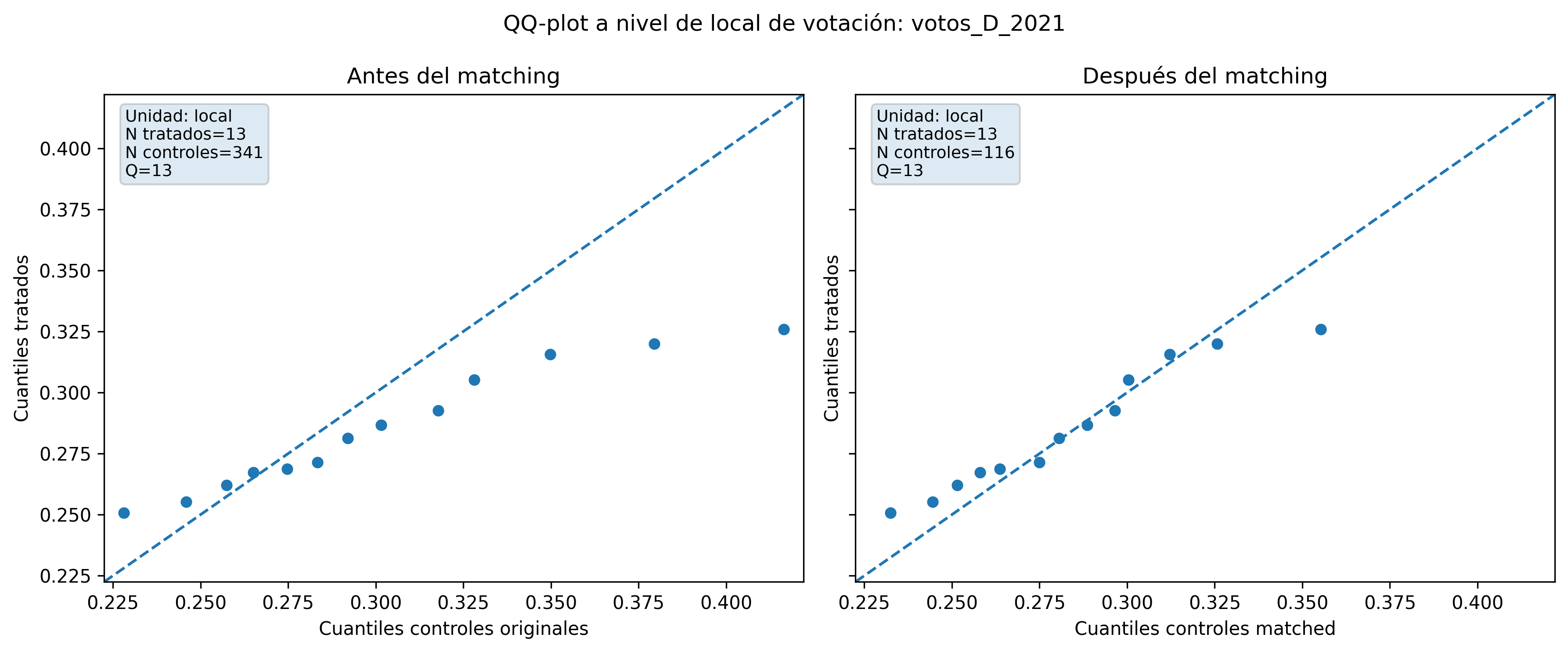}
  \caption{2021 right share}
\end{subfigure}

\medskip
\begin{subfigure}{0.31\textwidth}
  \includegraphics[width=\linewidth]{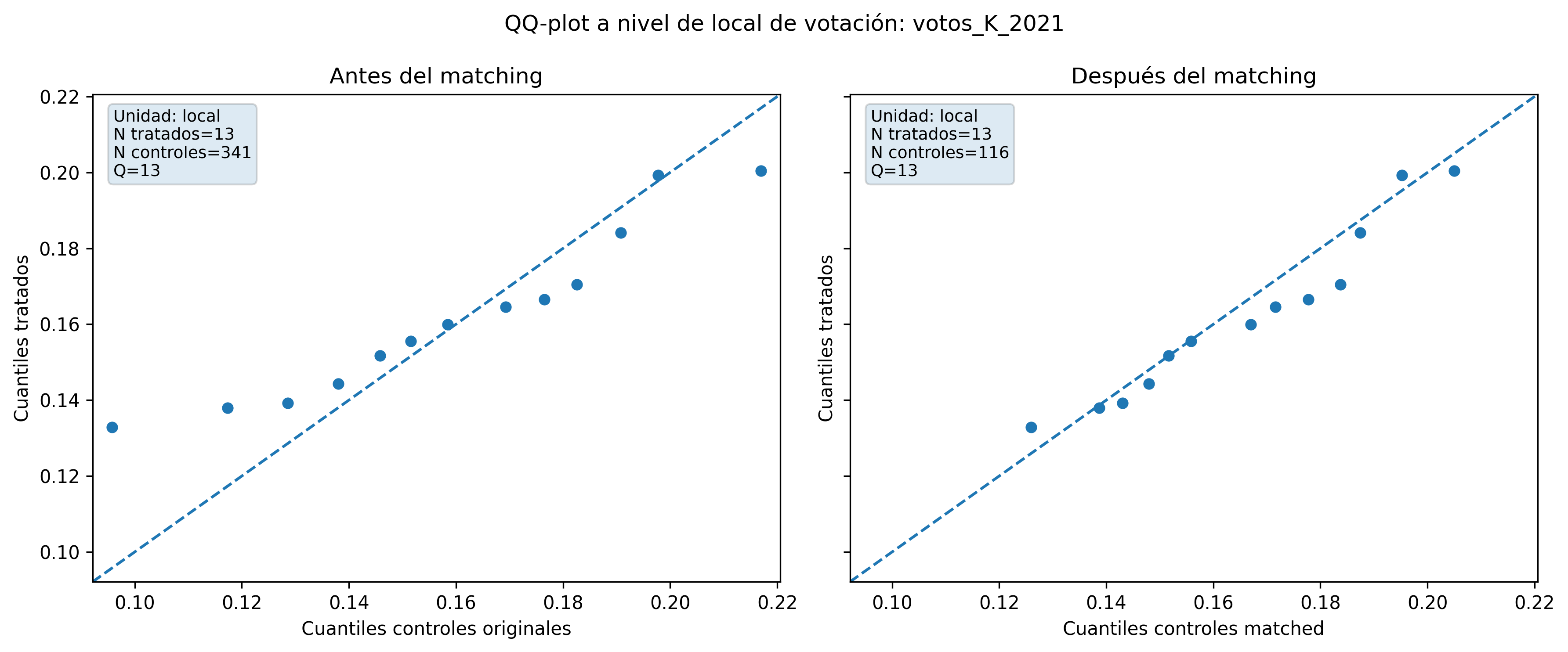}
  \caption{2021 Fujimori share}
\end{subfigure}\hfill
\begin{subfigure}{0.31\textwidth}
  \includegraphics[width=\linewidth]{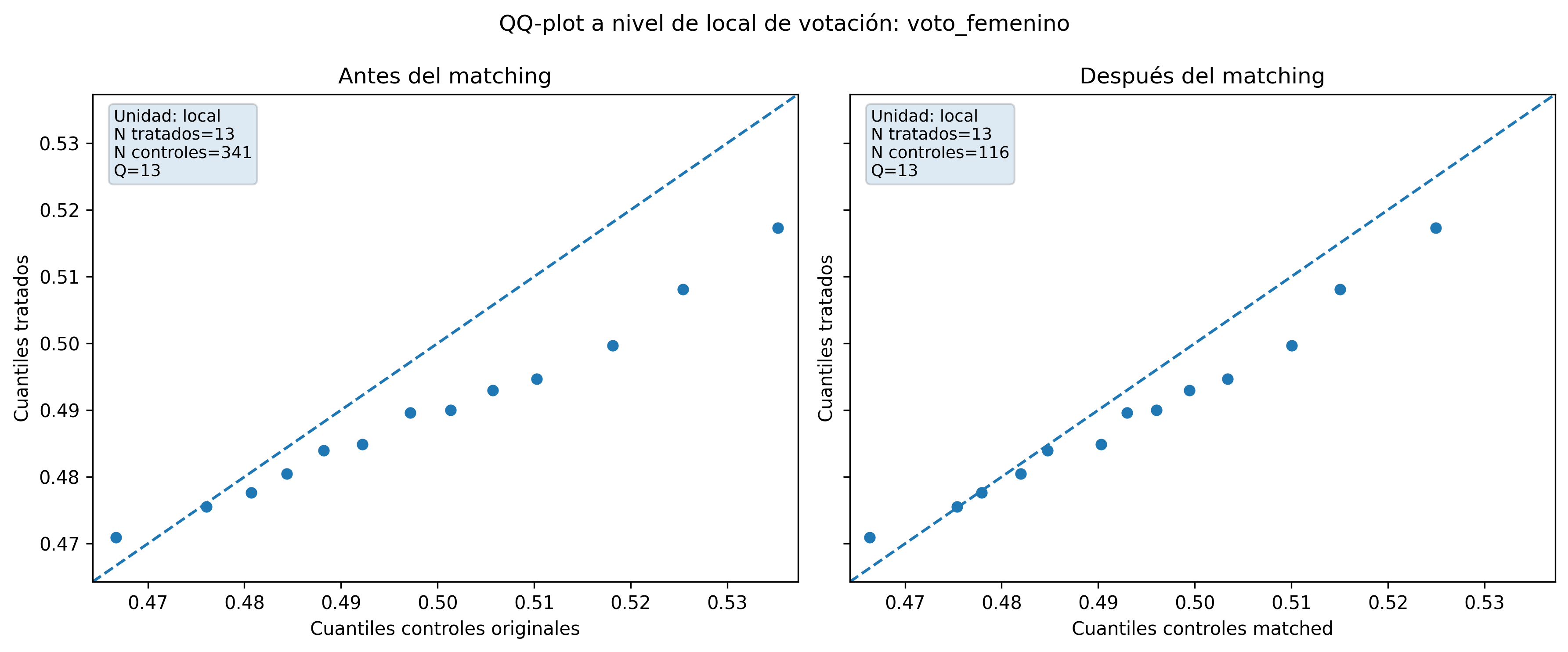}
  \caption{Female-voter share}
\end{subfigure}\hfill
\begin{subfigure}{0.31\textwidth}
  \includegraphics[width=\linewidth]{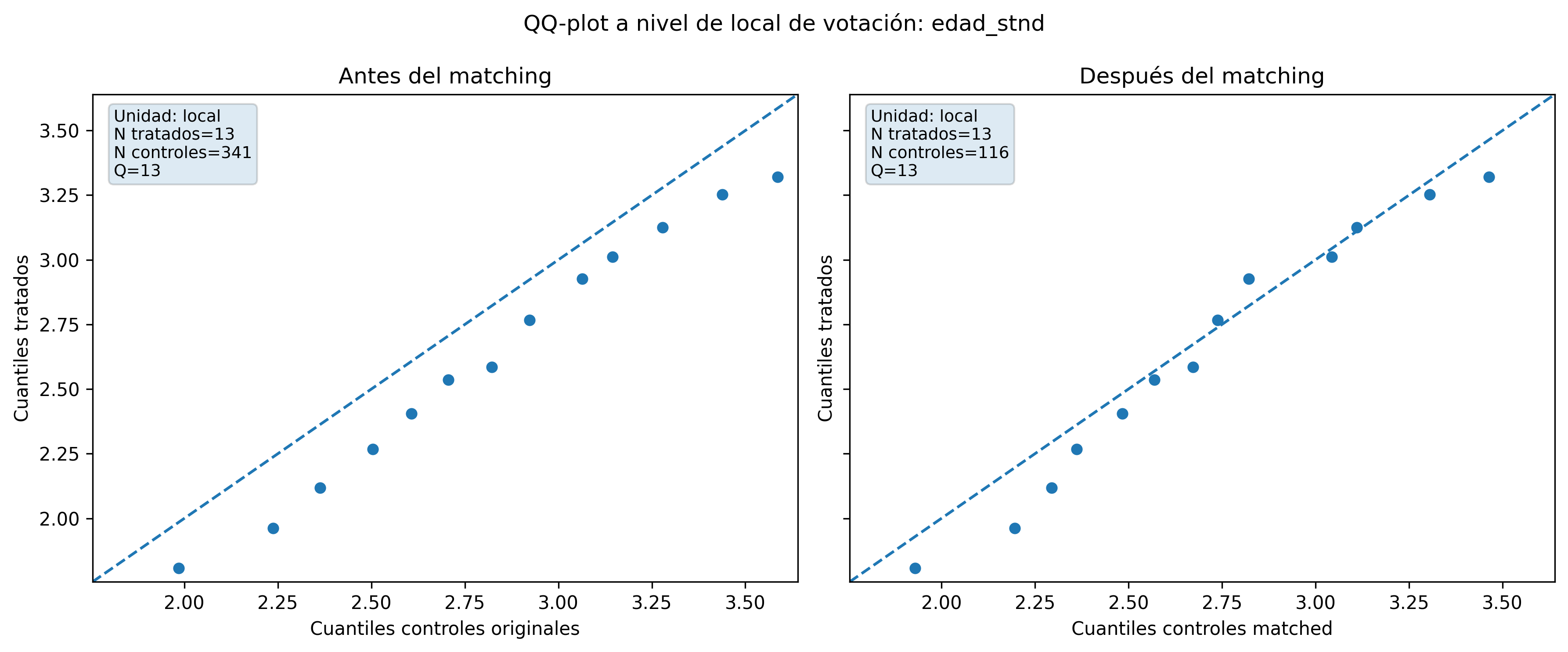}
  \caption{Standardized age$^{\dagger}$}
\end{subfigure}
\end{figure}

\begin{figure}[H]
\centering
\caption{Common-support overlap between treated and control locations.}
\label{fig:cs}
\begin{subfigure}{0.31\textwidth}
  \includegraphics[width=\linewidth]{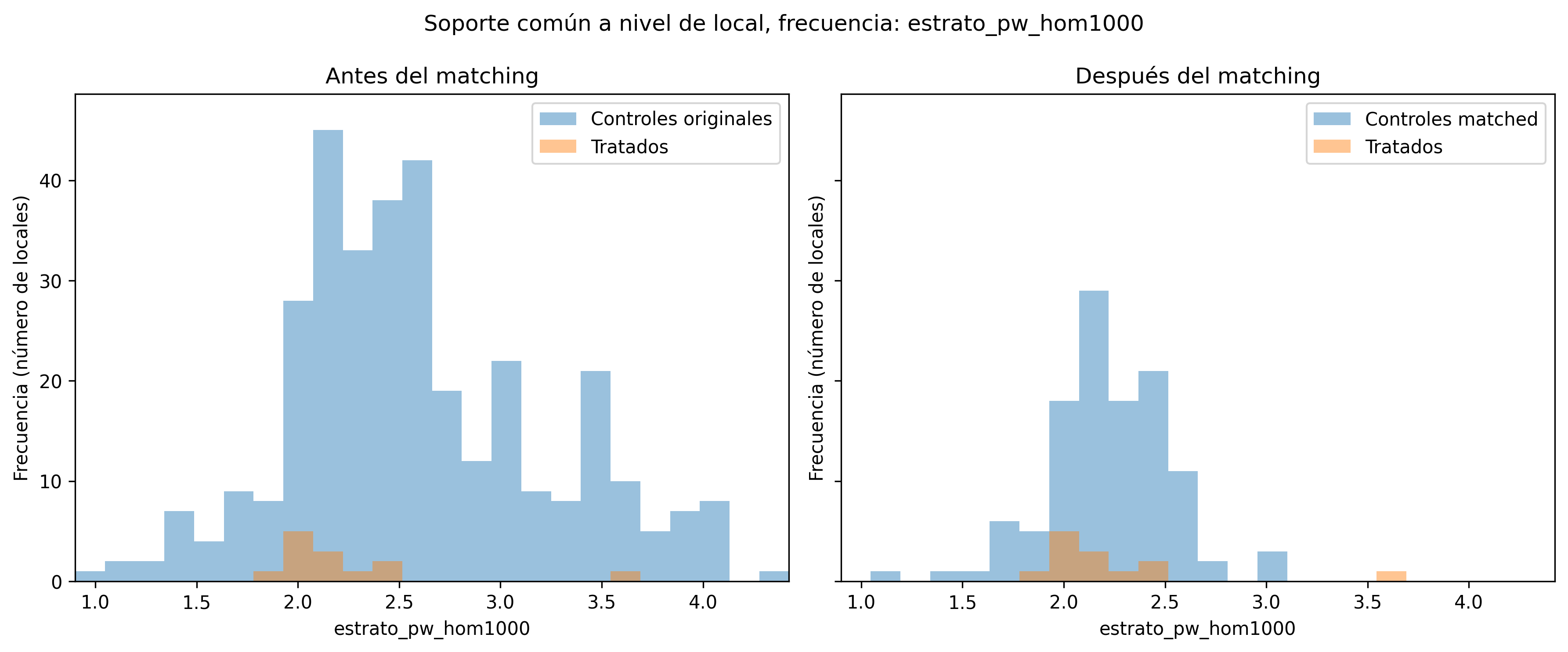}
  \caption{Socioeconomic stratum}
\end{subfigure}\hfill
\begin{subfigure}{0.31\textwidth}
  \includegraphics[width=\linewidth]{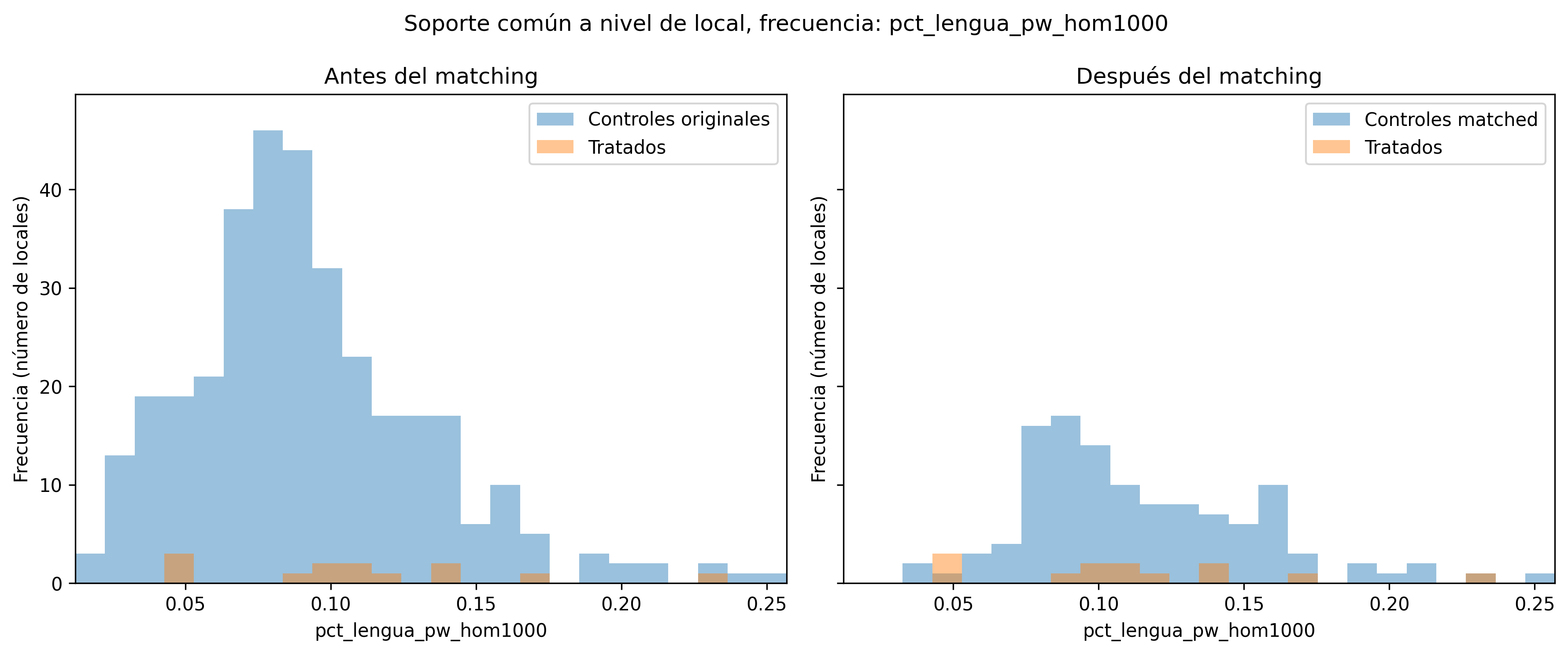}
  \caption{Indigenous-language share}
\end{subfigure}\hfill
\begin{subfigure}{0.31\textwidth}
  \includegraphics[width=\linewidth]{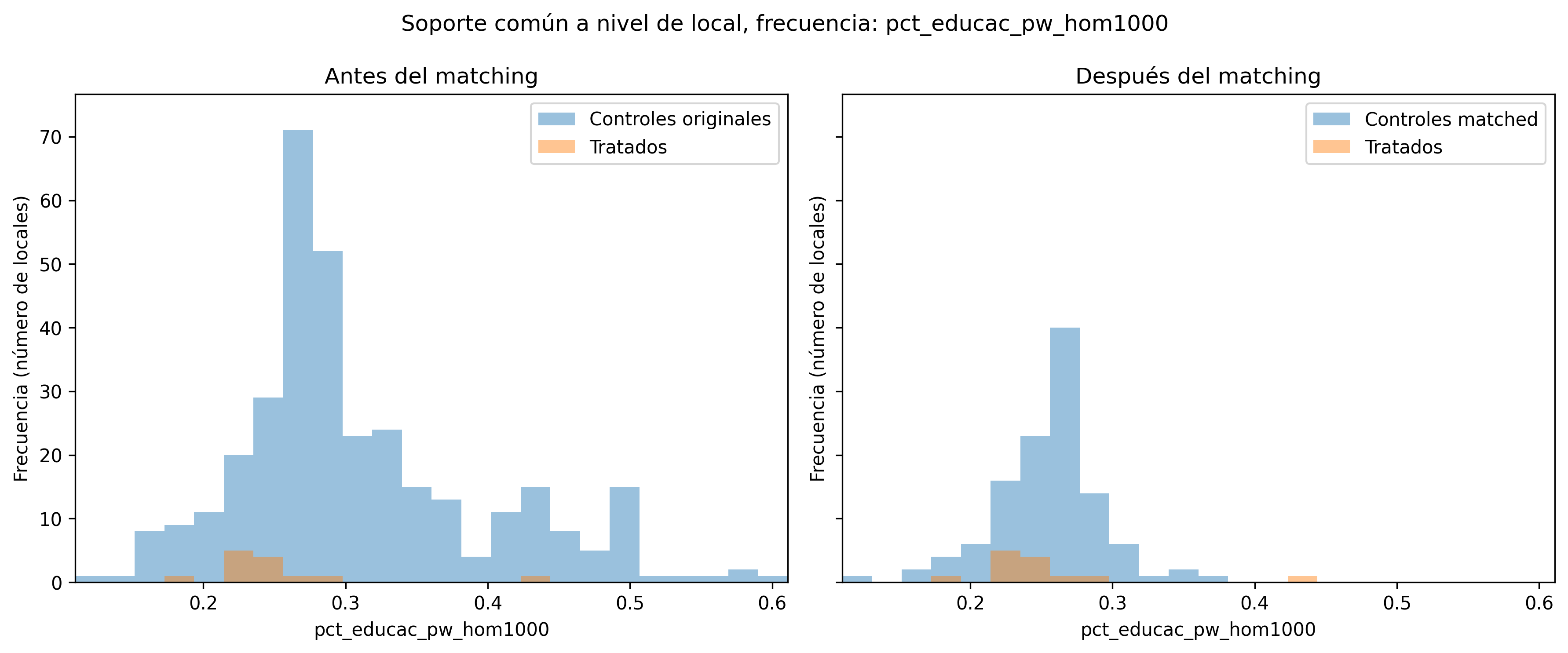}
  \caption{Higher-education share}
\end{subfigure}

\medskip
\begin{subfigure}{0.31\textwidth}
  \includegraphics[width=\linewidth]{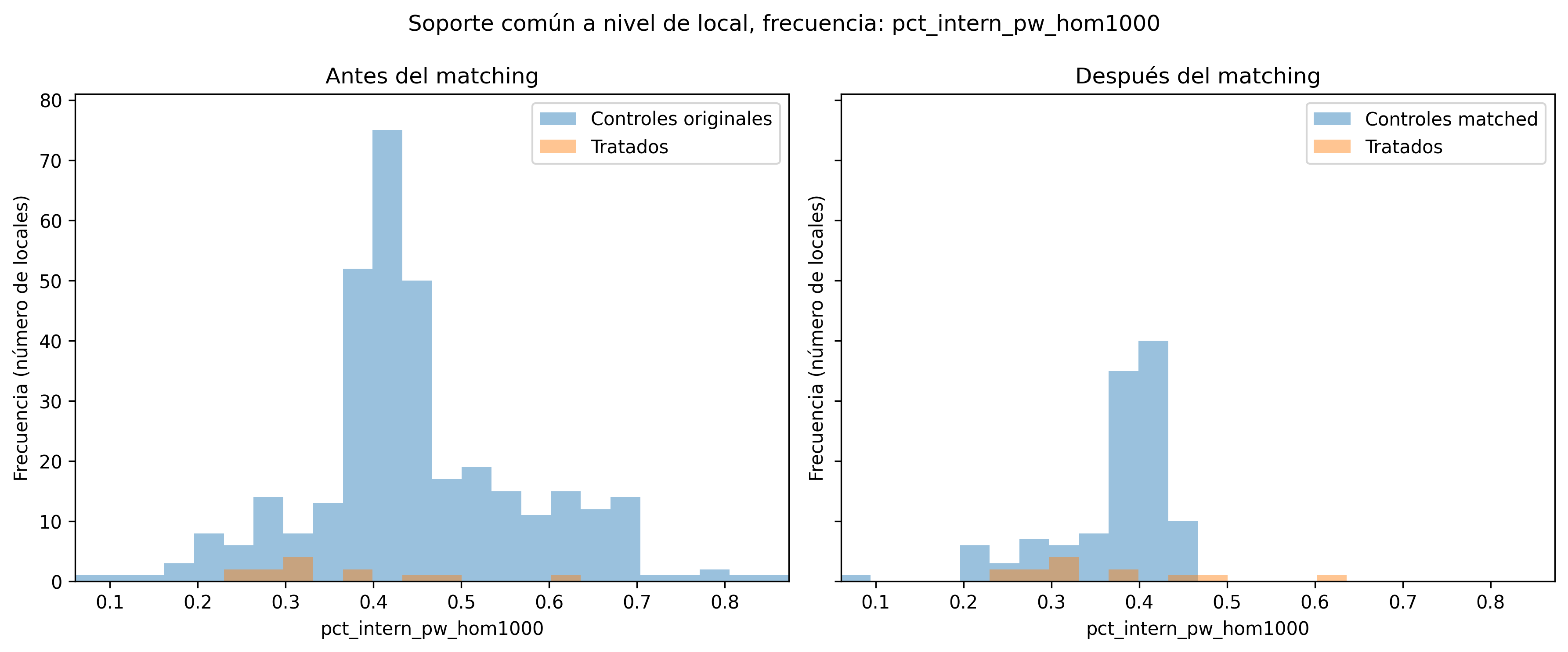}
  \caption{Internet-access share}
\end{subfigure}\hfill
\begin{subfigure}{0.31\textwidth}
  \includegraphics[width=\linewidth]{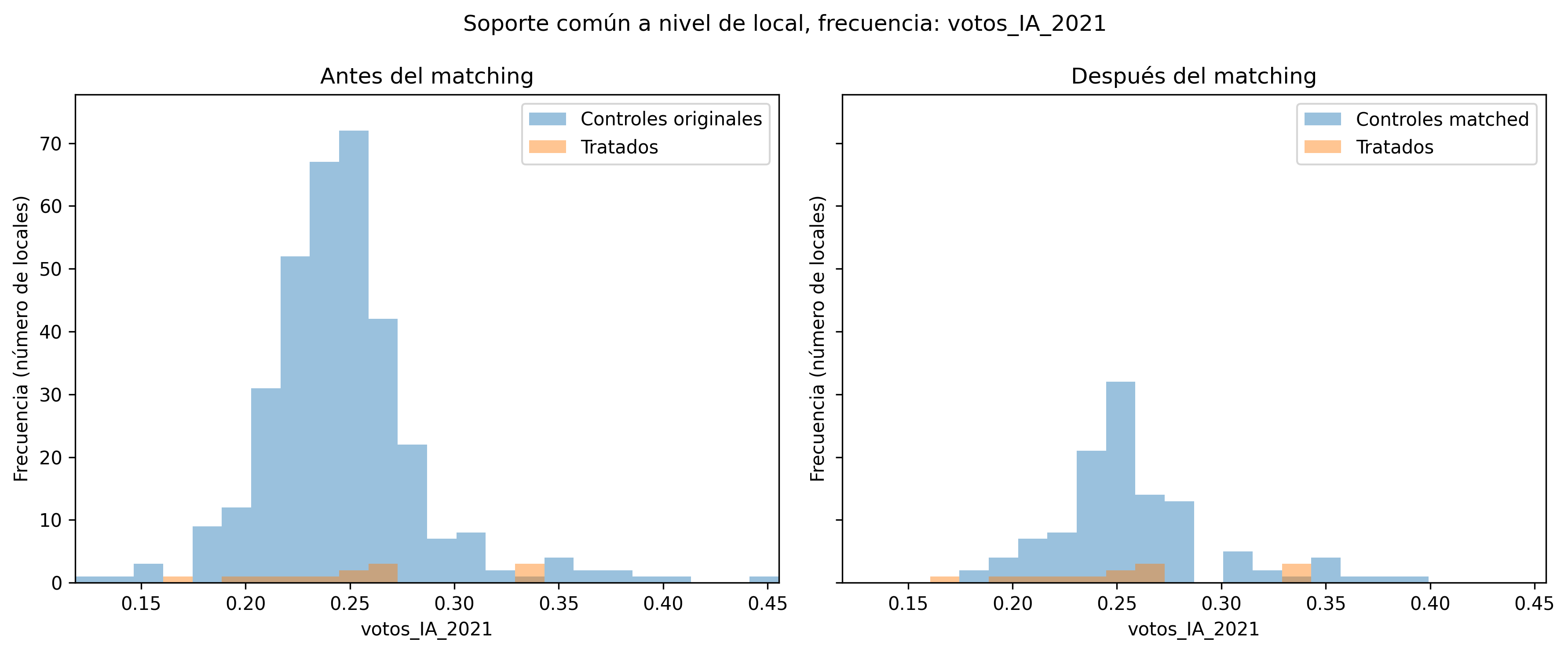}
  \caption{2021 left share}
\end{subfigure}\hfill
\begin{subfigure}{0.31\textwidth}
  \includegraphics[width=\linewidth]{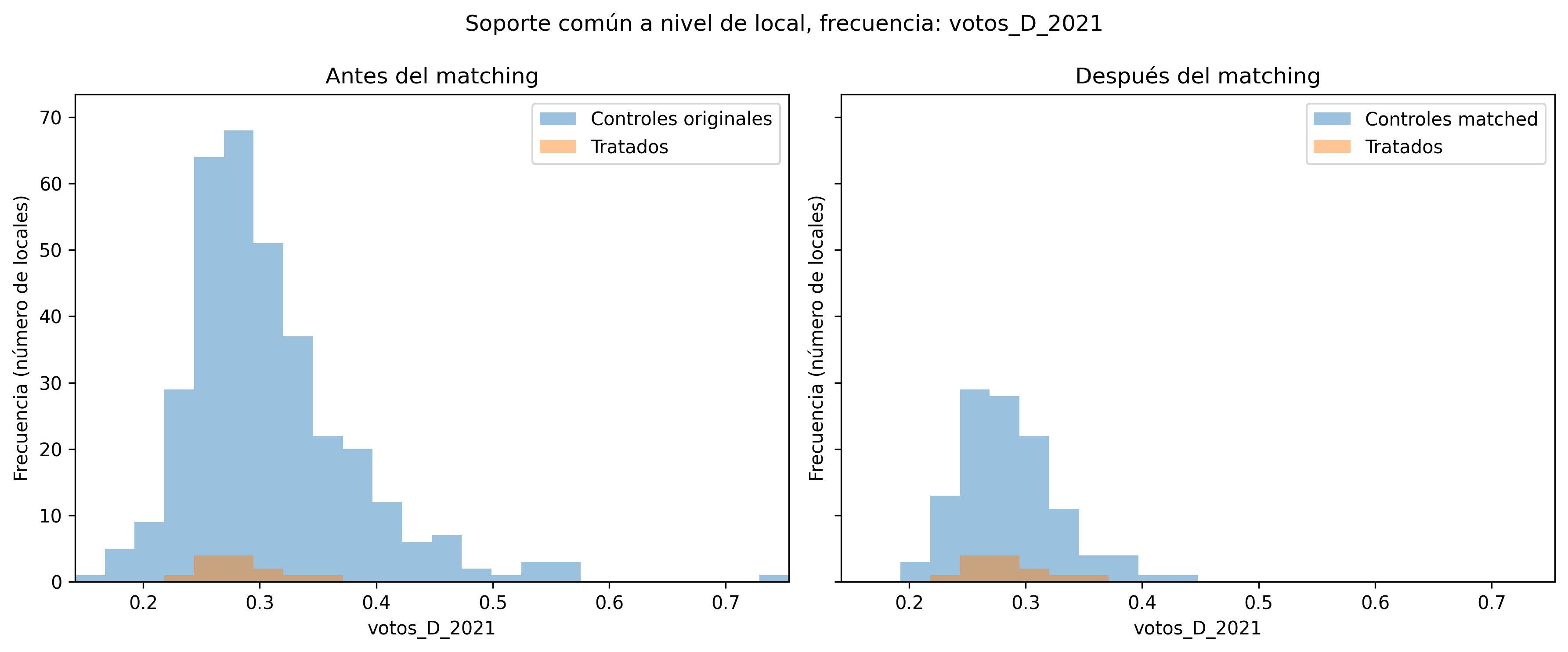}
  \caption{2021 right share}
\end{subfigure}

\medskip
\begin{subfigure}{0.31\textwidth}
  \includegraphics[width=\linewidth]{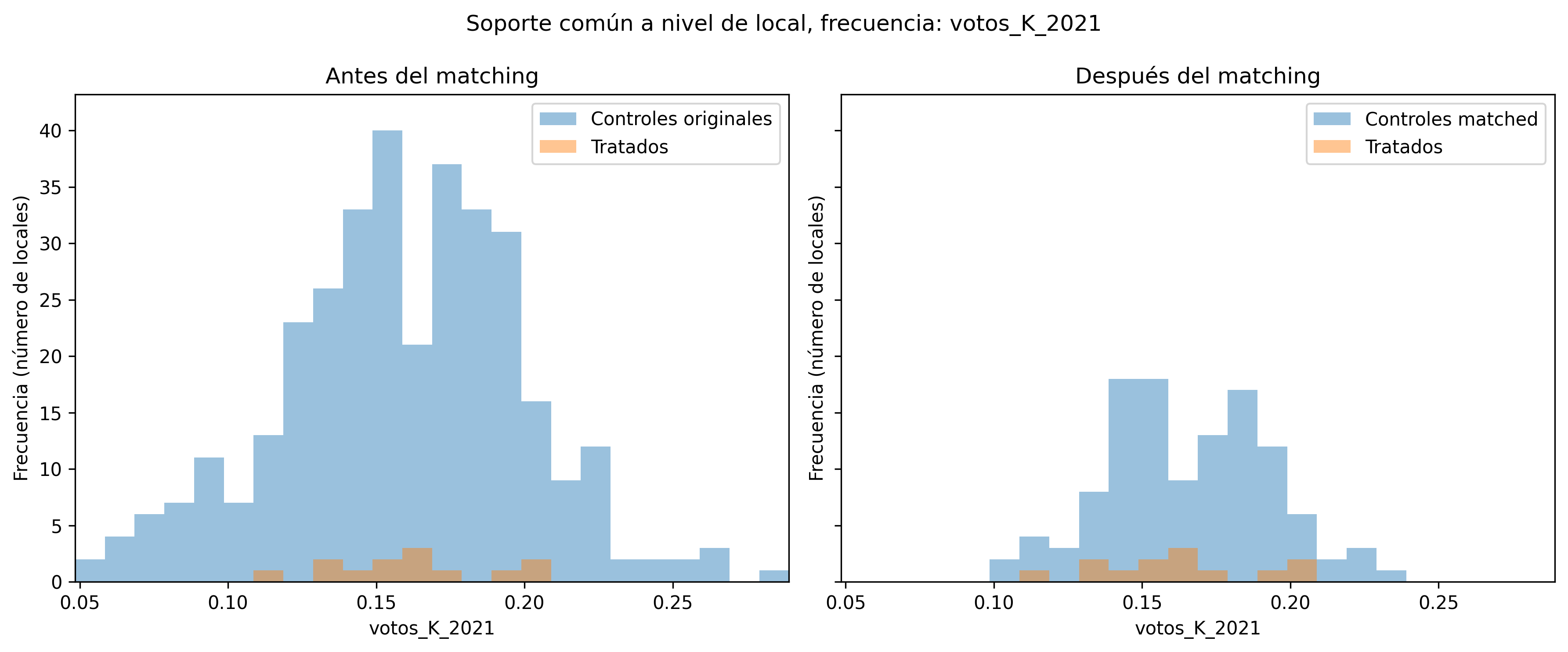}
  \caption{2021 Fujimori share}
\end{subfigure}\hfill
\begin{subfigure}{0.31\textwidth}
  \includegraphics[width=\linewidth]{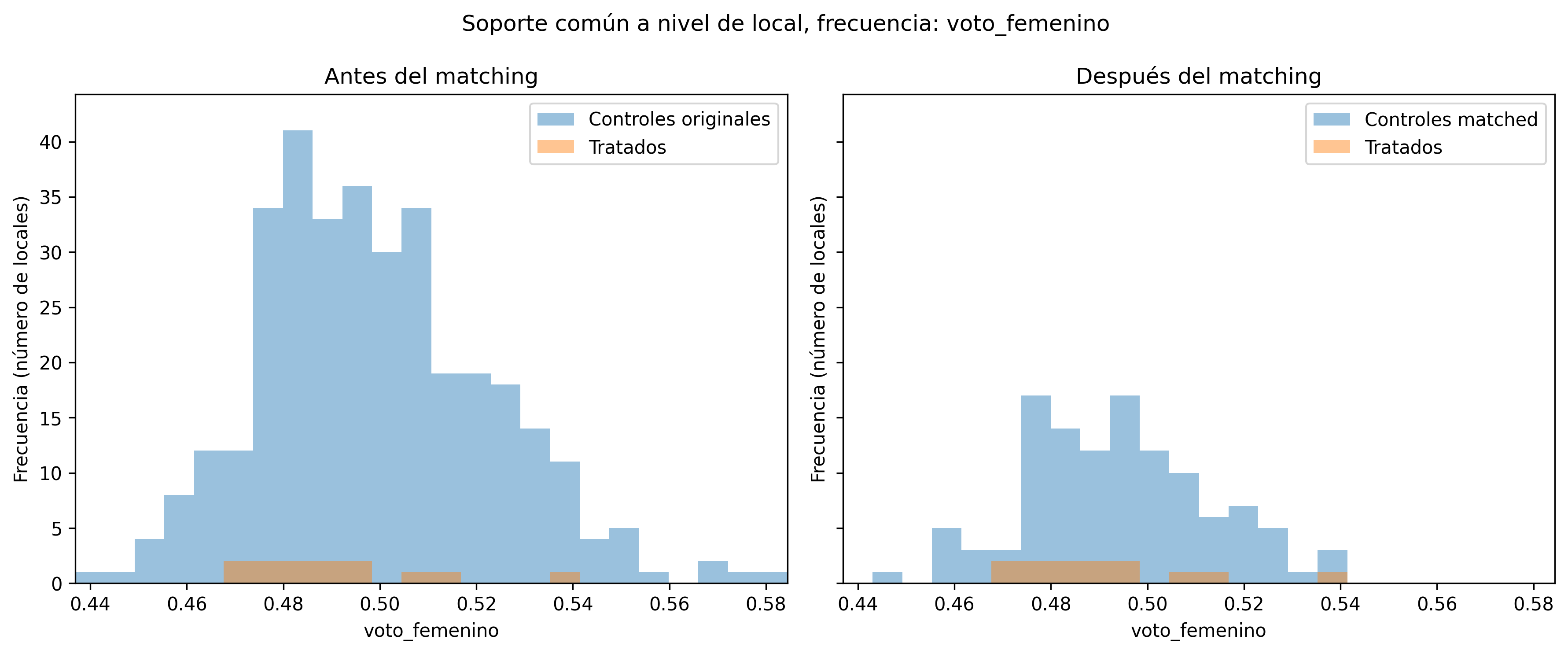}
  \caption{Female-voter share}
\end{subfigure}\hfill
\begin{subfigure}{0.31\textwidth}
  \includegraphics[width=\linewidth]{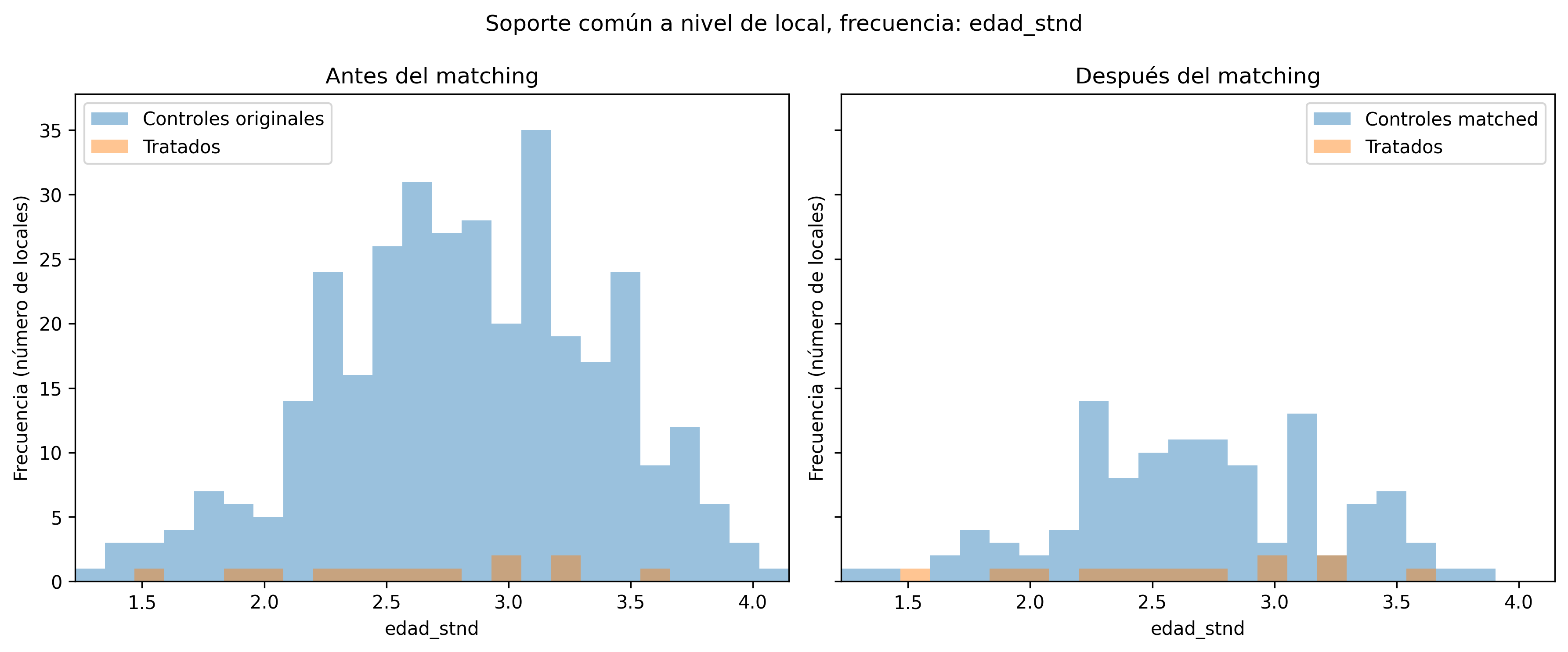}
  \caption{Standardized age$^{\dagger}$}
\end{subfigure}
\end{figure}

\end{document}